\newcommand{\DB}{\ensuremath{\mathcal{D}}}
\newcommand{\RR}{\ensuremath{\mathbb{R}}}
\newcommand{\eps}{\ensuremath{\varepsilon}}
\newcommand{\qmbr}{\ensuremath{Q^{\Box}}}
\begin{document}
\title{Inverse Queries For Multidimensional Spaces}

\author{
Thomas Bernecker\inst{1} \and Tobias Emrich\inst{1} \and Hans-Peter
Kriegel\inst{1} \and Nikos Mamoulis\inst{2} \and Matthias Renz\inst{1} \and
Shiming Zhang\inst{2} \and Andreas Züfle\inst{1}}
\institute{
Institute for Informatics,
    Ludwig-Maximilians-Universität~München\\ Oettingenstr. 67, D-80538 München,
    Germany\\
    \{bernecker,emrich,kriegel,renz,zuefle\}\\@dbs.ifi.lmu.de
\and Department
    of Computer Science, University of Hong Kong\\ Pokfulam Road, Hong Kong \\
    \{nikos,smzhang\}@cs.hku.hk
}
\maketitle
\begin{abstract}
Traditional spatial queries return, for a given
query object $q$, all database objects that satisfy a given
predicate, such as epsilon range and $k$-nearest neighbors.
This paper defines and studies {\em inverse} spatial queries,
which, given
a subset of database objects $Q$ and a query predicate, return all objects
which, if used as query objects with the predicate, contain $Q$ in
their result. We first show a straightforward solution for
answering inverse spatial queries for any query predicate. Then,
we propose a filter-and-refinement framework that can be used to
improve efficiency. We show how to apply this framework on a
variety of inverse queries, using appropriate space pruning
strategies. In particular, we propose solutions for  inverse
epsilon range queries, inverse $k$-nearest neighbor queries, and
inverse skyline queries. Our experiments show that our framework
is significantly more efficient than naive approaches.
%
\end{abstract}
\section{Introduction}

Recently, a lot of interest has grown for {\em reverse} queries,
which take as input an object $o$ and find the queries which have
$o$ in their result set. A characteristic example is the reverse
$k$NN query \cite{KorMut00,TaoPapLia04}, whose objective is to
find the query objects (from a given dataset) that have a given
input object in their $k$NN set. In such an operation the roles of
the query and data objects are reversed; while the $k$NN query
finds the {\em data} objects which are the nearest neighbors of a
given {\em query} object, the reverse query finds the objects
which, if used as queries, return a given data object in their
result. Besides $k$NN search, reverse queries have also been
studied for other spatial and multidimensional search problems,
such as top-$k$ search \cite{VlaDouKotNor10} and dynamic skyline
\cite{LiaChe08}. Reverse queries mainly find application in data
analysis tasks; e.g., given a product find the customer searches
that have this product in their result. \cite{KorMut00} outlines a
wide range of such applications (including business impact
analysis, referral and recommendation systems, maintenance of
document repositories).

In this paper, we generalize the concept of reverse queries. We
note that the current definitions take as input a {\em single}
object. However, similarity queries such as $k$NN queries and
$\eps$-range queries may in general return more than one result.
Data analysts are often interested in the queries that include two
or more given objects in their result. Such information can be
meaningful in applications where only the result of a query can be
(partially) observed, but the actual query object is not known.
For example consider an online shop selling a variety of different
products stored in a database \DB. The online shop may be
interested in offering a {\em package} of products $Q\subseteq
\DB$ for a special price. The problem at hand is to identify
customers which are interested in all items of the package, in
order to direct an advertisement to them. We assume that the
preferences of registered customers are known. First, we need to
define a predicate indicating whether a user is interested in a
product. A customer may be interested in a product if
\begin{itemize}
\item the distance between the product's features and the
customer's preference is less than a threshold $\eps$.

\item the product is contained in the set of his $k$ favorite
items, i.e., the $k$-set of product features closest to the user's
preferences.

\item the product is contained in the customer's dynamic skyline,
i.e., there is no other product that better fits the customer's
preferences in every possible way.
\end{itemize}
Therefore, we want to identify customers $r$, such that the query
on \DB\ with query object $r$, using one of the query predicates
above, contains $Q$ in the result set. More specifically, consider
a set $\DB\in\RR^d$ as a database of $n$ objects and let
$d(\cdot)$ denote the Euclidean distance in $\RR^d$. Let
${\mathcal P}(q)$ be a query on $\DB$ with predicate ${\mathcal
P}$ and query object $q$.
\begin{definition}\label{def:inv} 
An inverse ${\mathcal P}$ query (I${\mathcal P}$Q) computes for a
given set of query objects $Q\subseteq\DB$ the set of points
$r\in\RR^d$ for which $Q$ is in the ${\mathcal P}$ query result;
formally:
$$
I{\mathcal P}Q=\{r\in\RR^d:Q\subseteq{\mathcal P}(r))\}
$$
\end{definition}
Simply speaking, the result of the {\em general} inverse query is
the subset of the space defined by all objects $r$ for which all
$Q$-objects are in  ${\mathcal P}(r)$. Special cases of the query
are
\begin{itemize}
\item The mono-chromatic inverse ${\mathcal P}$ Query, for which
the result set is a subset of $\DB$. \item The bi-chromatic
inverse ${\mathcal P}$ Query, for which the result set is a subset
of a given database $\DB'\subseteq\RR^d$.
\end{itemize}

In this paper, we study the inverse versions of three common query
types in spatial and multimedia databases as follows.


\begin{figure}[h]
    \centering
    \subfigure[$I\eps \mbox{-}RQ$.]{
        \label{fig:introex1}
        \includegraphics[width=0.2\textwidth]{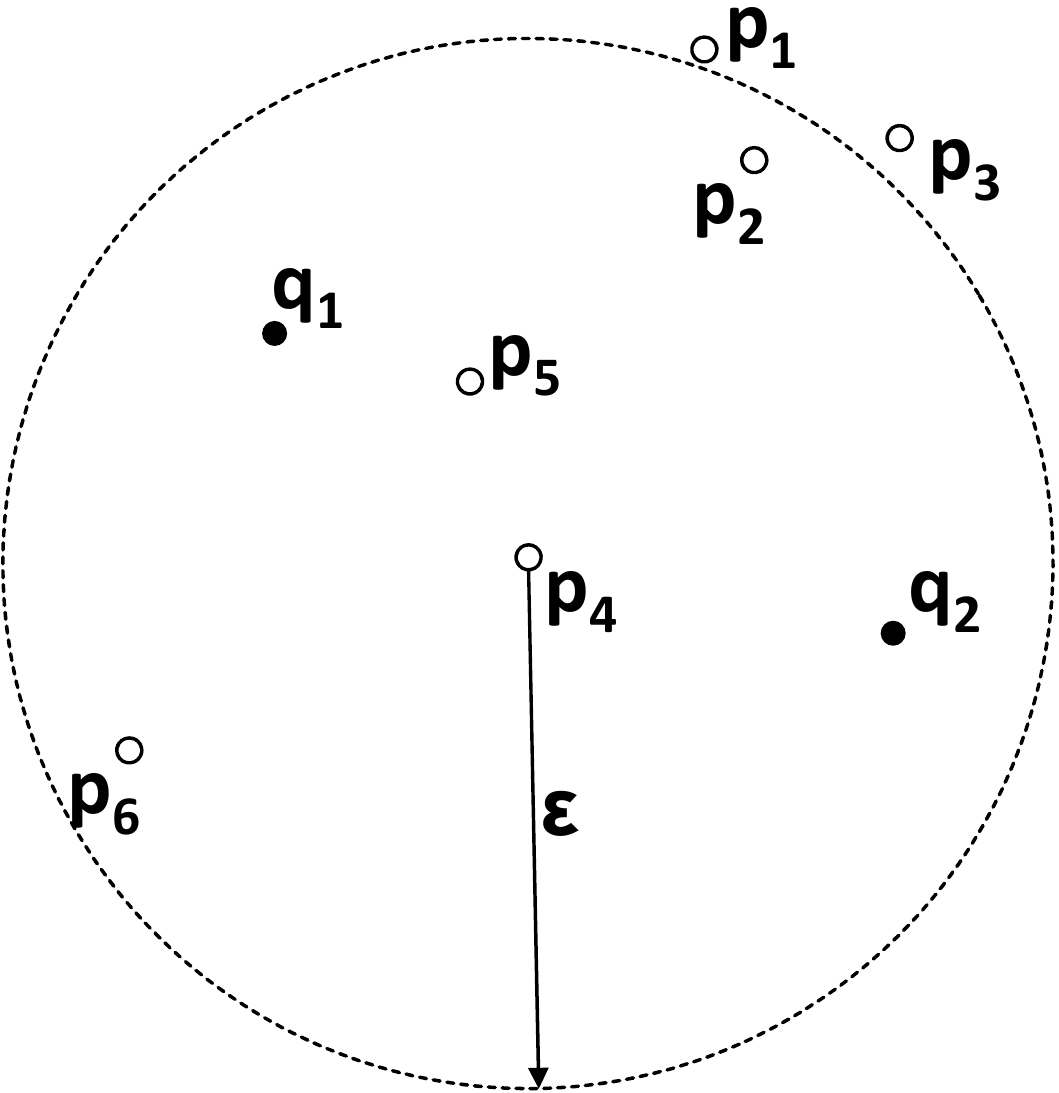}
    }\hfill
    \subfigure[$Ik\mbox{-}NNQ$, $k=3$.]{
        \label{fig:introex2}
        \includegraphics[width=0.2\textwidth]{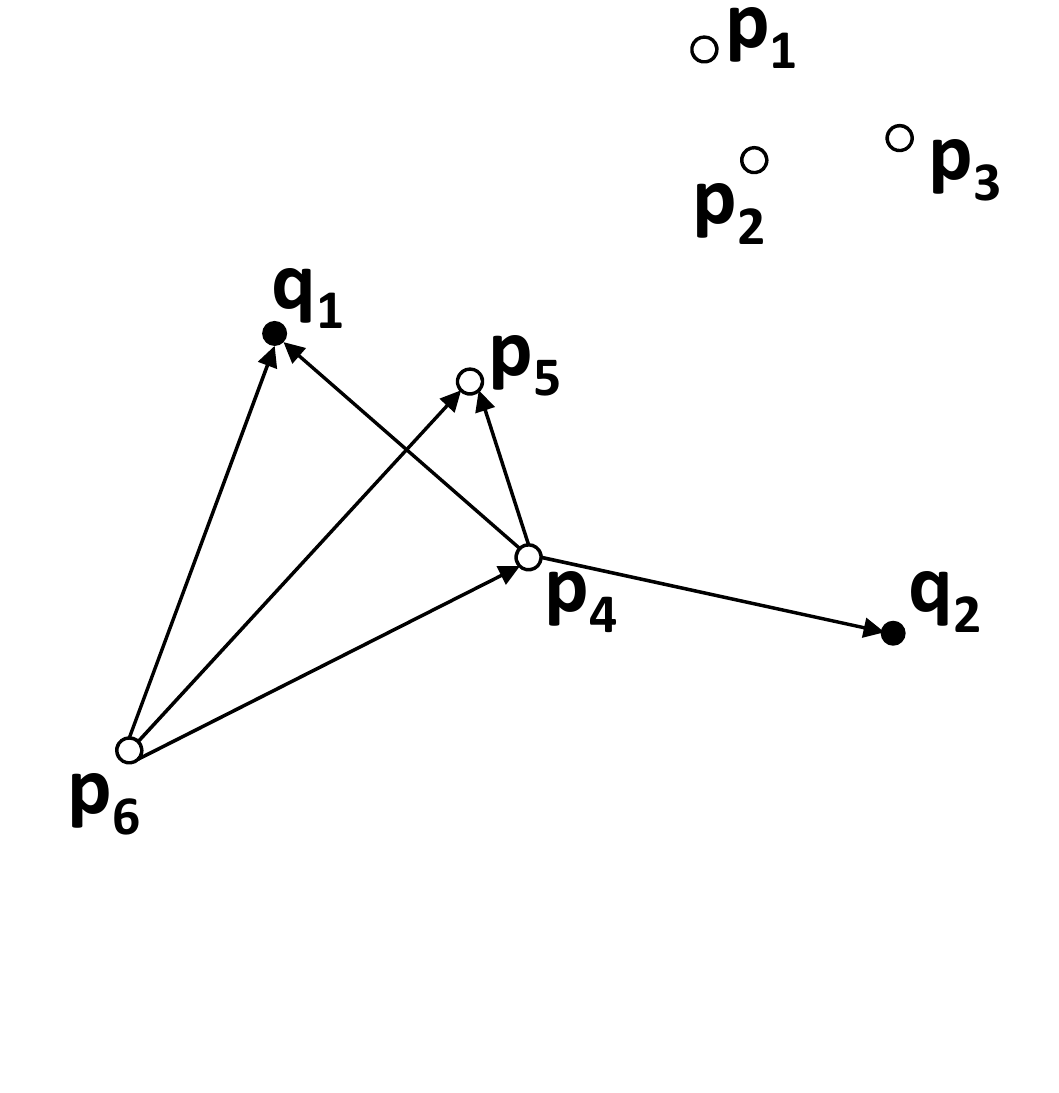}
    }
\vspace{-3mm}
    \caption{Examples of inverse queries.}
    \label{fig:introex}
\end{figure}

{\bf Inverse $\eps$-Range Query ($I\eps \mbox{-}RQ$).}
The inverse $\eps$-Range query returns all objects which have a
sufficiently low distance to all query objects. For a {\em
bi-chromatic} sample application of this type of query, consider a
movie database containing a large number of movie records. Each
movie record contains features such as humor, suspense, romance,
etc. Users of the database are represented by the same attributes,
describing their preferences. We want to create a recommendation
system that recommends to users movies that are sufficiently
similar to their preferences (i.e., distance less than $\eps$).
Now, assume that a group of users, such as a family,  want to
watch a movie together;
 a bi-chromatic $I\eps \mbox{-}RQ$ query will recommend movies
which are similar to {\em all} members of the family. For a
mono-chromatic case example, consider the set $Q=\{q_1,q_2\}$ of
query objects of Figure \ref{fig:introex1} and the set of database
points $\DB=\{p_1,p_2,\cdots,p_{6}\}$. If the range $\eps$ is as
illustrated in the figure, the result of the  $I\eps\mbox{-}RQ(Q)$
is $\{p_2,p_4,p_5\}$ (e.g., $p_1$ is dropped because
$d(p_1,q_2)>\eps$). 

{\bf Inverse $k$-NN Query ($Ik\mbox{-}NNQ$).}
The inverse $k$NN query returns the objects which have all query
points in their $k$NN set. For example, {\em mono-chromatic}
inverse $k$NN queries can be used to aid crime detection. Assume
that a set of households have been robbed in short succession and
the robber must be found. Assume that the robber will only rob
houses which are in his close vicinity, e.g. within the closest
hundred households. Under this assumption, performing an inverse
$100$NN query, using the set of robbed households as $Q$, returns
the set of possible suspects. A mono-chromatic inverse 3NN query
for $Q=\{q_1,q_2\}$ in Figure \ref{fig:introex2} returns
$\{p_4\}$. $p_6$, for example, is dropped, as $q_2$ is not
contained in the list of its 3 nearest neighbors.
%


{\bf Inverse Dynamic Skyline Query ($I\mbox{-}DSQ$).}
An inverse dynamic skyline query returns the objects, which have
all query objects in their dynamic skyline.  A sample application
for the {\em general} inverse dynamic skyline query is a product
recommendation problem: Assume there is a company, e.g. a photo
camera company, that provides its products via an internet portal.
The company wants to recommend to their customers products by
analyzing the web pages visited by them. The score function used
by the customer to rate the attributes of products is unknown.
However, the set of products that the customer has clicked on can
be seen as samples of products that he or she is interested in,
and thus, must be in the customers dynamic skyline. The inverse
dynamic skyline query can be used to narrow the space which the
customers preferences are located in. Objects which have all
clicked products in their dynamic skyline are likely to be
interesting to the customer.
In Figure \ref{fig:introex}, assuming that  $Q=\{q_1,q_2\}$ are
clicked products,  $I\mbox{-}DSQ(Q)$ includes $p_6$, since both
$q_1$ and $q_2$ are included in the dynamic skyline of $p_6$.



For simplicity, we focus on the mono-chromatic cases of the
respective query types (i.e., query points and objects are taken
from the same data set); however, the proposed techniques can also
be applied for the bi-chromatic (cf. Appendix \ref{app:bi}) and
the general case.


\textbf{Motivation.} A naive way to process any inverse spatial
query  is to compute the corresponding reverse query for each
$q_i\in Q$ and then intersect these results. The problem of this
method is that running a reverse query for each $q_i$ multiplies
the complexity of the reverse query by $|Q|$ both in terms of
computational and I/O cost. Objects that are not shared in two or
more reverse queries in $Q$ are unnecessarily retrieved, while
objects that are shared by two or more queries are redundantly
accessed multiple times.
We propose a
filter-refinement framework for inverse queries, which first
applies a  number of filters using the set of query objects $Q$ to
prune effectively objects which may not participate in the result.
Afterwards candidates are pruned by considering other database
objects. Finally, during a {\em refinement} step, the remaining
candidates are verified against the inverse query and the results
are output. Details of our framework are shown in Section
\ref{sec:InverseQueryFramework}. When applying our framework to
the three inverse queries under study, filtering and refinement
are sometimes integrated in the same algorithm, which performs
these steps in an iterative manner. Although for $I\eps
\mbox{-}RQ$ queries the application of our framework is
straightforward, for $Ik\mbox{-}NNQ$ and $I\mbox{-}DSQ$, we define
and exploit special pruning techniques that are novel compared to
the approaches used for solving the corresponding reverse queries.

{\bf Outline.} The rest of the paper is organized as follows. In
the next section we give an overview of the previous work which is
related to inverse query processing. Section
\ref{sec:InverseQueryFramework} describes our framework. In
Sections \ref{sec:ieps}-\ref{sec:idsq} we implement it on the
three  inverse spatial query types; we first briefly introduce the
pruning strategies for the single-query-object case and then show
how to apply the framework in order to handle the
multi-query-object case in an efficient way.
Section \ref{sec:experiments} is an experimental evaluation and
Section \ref{sec:conclusion} concludes the paper.


\vspace{-2mm}
\section{Related Work}\label{sec:related}
The problem of supporting reverse queries efficiently, i.e. the
case where $Q$ only contains a single database object, has been
studied extensively. However, none of the proposed approaches is
directly extendable for the efficient support of inverse queries
when $|Q|>1$. First, there exists no related work on reverse
queries for the $\epsilon$-range query predicate. This is not
surprising since the the reverse $\epsilon$-range query is equal
to a (normal) $\epsilon$-range query. However, there exists a
large body of work for reverse $k$-nearest neighbor (R$k$NN)
queries.
Self-pruning approaches like the RNN-Tree \cite{KorMut00} and the
RdNN-tree \cite{YanLin01} operate on top of a spatial index, like
the R-tree. Their objective is to estimate the $k$NN distance of
each index entry $e$. If the $k$NN distance of $e$ is smaller than
the distance of $e$ to the query $q$, then $e$ can be pruned.
These methods suffer from the high materialization and maintenance
cost of the $k$NN distances.

Mutual-pruning approaches such as
\cite{StaAgrAbb00,SinFerTos03,TaoPapLia04} use other points to
prune a given index entry $e$. TPL \cite{TaoPapLia04} is the most
general and efficient approach. It uses
an R-tree to compute a nearest neighbor ranking of the query point
$q$. The key idea is to iteratively construct Voronoi hyper-planes
around $q$ using the retrieved neighbors.
TPL can be used for inverse $k$NN queries where $|Q|>1$, by simply
performing a reverse $k$NN query for each query point and then
intersecting the results (i.e., the brute-force approach).

For reverse dynamic skyline queries, \cite{DelSee07} proposed an
efficient solution, which first performs a filter-step, pruning
database objects that are globally dominated by some point in the
database. For the remaining points, a window query is performed in
a refinement step. In addition, \cite{LiaChe08} gave a solution
for reverse dynamic skyline computation on uncertain data. None of
these methods considers the case of $|Q|>1$, which is the focus of
our work.

In \cite{VlaDouKotNor10} the problem of reverse top-$k$ queries is
studied. A reverse top-$k$ query returns for a point $q$ and a
positive integer $k$, the set of linear preference functions
for which $q$ is contained in their top-$k$ result. The authors
provide an efficient solution for the 2D case and discuss its
generalization to the multidimensional case, but do not consider
the case where $|Q|>1$. Although we do not study inverse top-$k$
queries in this paper, we note that it is an interesting subject
for future work.


\section{Inverse Query (IQ) Framework}
\label{sec:InverseQueryFramework}

Our solutions for the three inverse queries under study are based
on a common framework consisting of the following
filter-refinement pipeline:

\subsubsection*{Filter 1: Fast Query Based Validation:} The first
component of the framework, called \emph{fast query based
validation}, uses the set of query objects $Q$ only to perform a
quick check on whether it is possible to have any result at all.
In particular, this filter verifies simple constraints that are
necessary conditions for a non-empty result. For example, for the
I$k$NN case, the result is empty if $|Q|>k$.

\subsubsection*{Filter 2: Query Based Pruning:}
\emph{Query based pruning} again uses the query objects only to
prune objects in $\DB$ which may not participate in the
 result. Unlike the simple first filter, here we
employ the topology of the query objects.

Filters 1 and 2 can be performed very fast because they do not
involve any database object except the query objects.

\subsubsection*{Filter 3: Object Based Pruning:} This filter, called
\emph{object based pruning}, is more advanced because it involves
database objects additional to the query objects. The strategy is
to access database objects in ascending order of their maximum
distance to any query point; formally:
$$
MaxDist(o,Q)=\max_{q\in Q}(d(e,q)).
$$
The rationale for this access order is that, given any query
object $q$, objects that are close to $q$ have more pruning power,
i.e., they are more likely to prune other objects w.r.t. $q$ than
objects that are more distant to $q$. To maximize the pruning
power, we prefer to examine objects that are close to all query
points first.

Note that the applicability of the  filters depends on the query.
\emph{Query based pruning} is applicable if the query objects
suffice to restrict the search space which holds for the inverse
$\eps$-range query and the inverse skyline query but not directly
for the inverse $k$NN query. In contrast,   the \emph{object based
pruning} filter is applicable for queries where database objects
can be used to prune other objects which for example holds for the
inverse $k$NN query and the inverse skyline query but not for the
inverse $\eps$-range query.

\subsubsection*{Refinement:} In the final {\em refinement} step, the
remaining candidates are verified and the \emph{true hits} are
reported as results.


\section{Inverse $\eps$-Range Query}
\label{sec:ieps} We will start with the simpler query, the inverse
$\eps$-range query. First, consider the case of a query object $q$
(i.e., $|Q|=1$). In this case, the inverse $\eps$-range query
computes all objects, that have $q$ within their $\eps$-range
sphere. Due to the symmetry of the $\eps$-range query predicate,
all objects satisfying the inverse $\eps$-range query predicate
are within the $\eps$-range sphere of $q$ as illustrated in Figure
\ref{subfig:pruning_space_eps_single}. In the following, we
consider the general  case, where $|Q|>1$ and show how our
framework can be applied.

\begin{figure}
    \centering
    \subfigure[Single query case.\label{subfig:pruning_space_eps_single}]{\includegraphics[width=0.45\columnwidth]{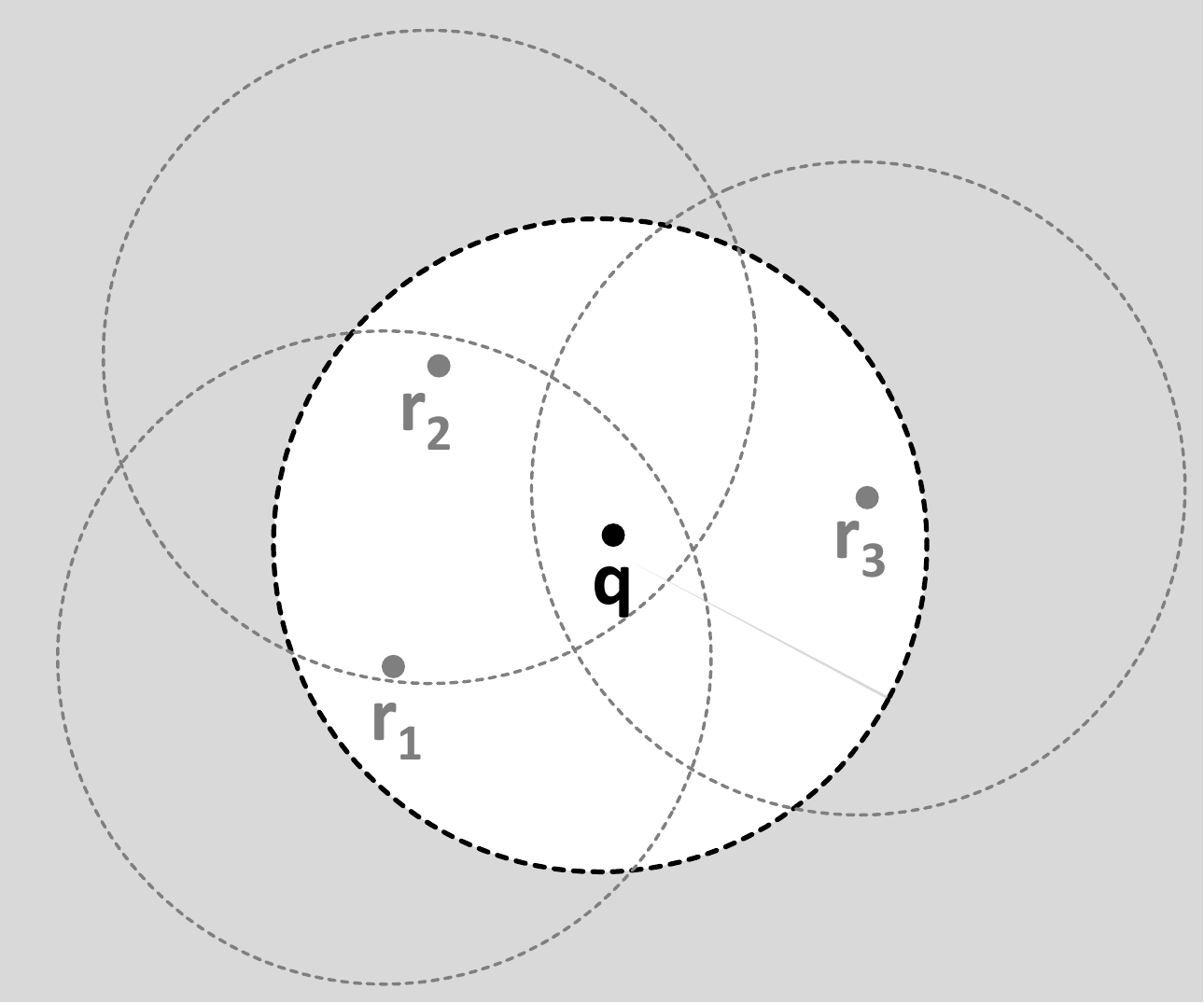}}
    \subfigure[Multiple query case.\label{subfig:pruning_space_eps_multiple}]{\includegraphics[width=0.45\columnwidth]{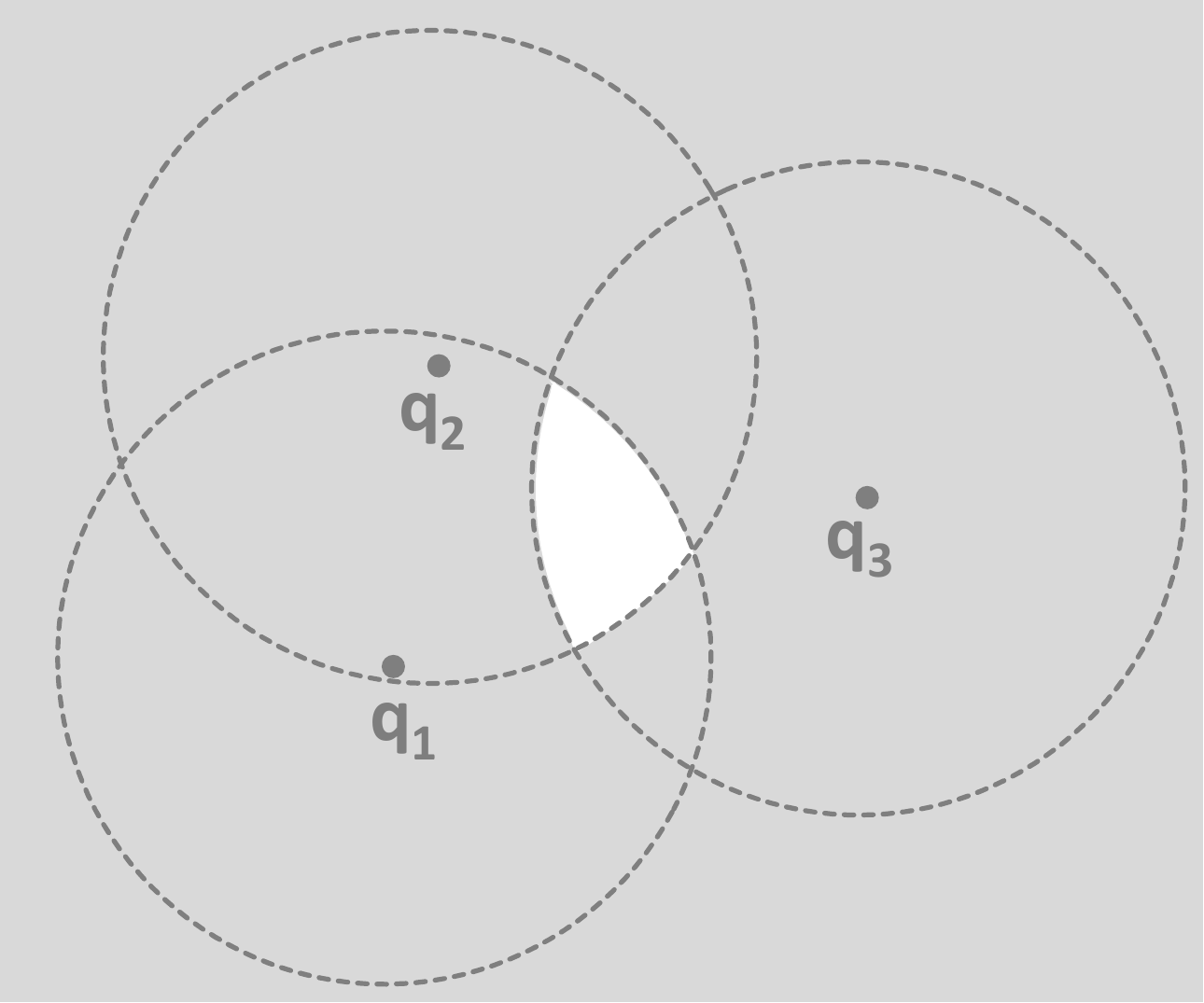}}
\vspace{-3mm}
    \caption{Pruning space for $I\eps\mbox{-}RQ$.}
    \label{fig:pruning_space_eps}
\end{figure}

\subsection{Framework Implementation}

\subsubsection*{Fast Query Based Validation}
There is no possible result if there exists a pair $q,q'$ of
queries in $Q$, such that their $\eps$-ranges do not intersect
(i.e., $d(q,q')>2\cdot\eps$). In this case, there can be no object
$r$ having both $q$ and $q'$ within its $\eps$-range (a necessary
condition for $r$ to be in the result).

\subsubsection*{Query Based Pruning}
Let $S_i^{\eps}\subseteq\RR^d$ be the $\eps$-sphere around query
point $q_i$ for all $q_i\in Q$, as depicted in the example shown
in Figure \ref{subfig:pruning_space_eps_multiple}. Obviously, any
point in the intersection region of all spheres, i.e.
$\cap_{i=1..m}S_i^{\eps}$, has all query objects $q_i\in Q$ in its
$\eps$-range. Consequently, all objects outside of this region can
be pruned. However, the computation of the search region can
become too expensive in an arbitrary high dimensional space; thus,
we compute the intersection between rectangles that minimally
bound the hyper-spheres and  use it as  a filter. This can be done
quite efficiently even in high dimensional spaces; the resulting
filter rectangle is used as a window query and all objects in it
are passed to the refinement step as candidates.

\subsubsection*{Object Based Pruning} As mentioned in Section
\ref{sec:InverseQueryFramework} this filter is not applicable for
inverse $\eps$-range queries, since objects cannot be used to
prune other objects.

\subsubsection*{Refinement}
In the refinement step, for all  candidates we compute their
distances to all query points $q\in Q$ and report only objects
that are within distance $\eps$ from all query objects.

\subsection{Algorithm}

The implementation of our framework above can be easily converted
to an algorithm, which, after applying the filter steps, performs
a window query to retrieve  the candidates, which are finally
verified. Search can be facilitated
by an R-tree that indexes $\DB$. Starting from the root, we search
the tree, using the filter rectangle. To minimize the I/O cost,
for each entry $P$ of the tree that intersects the filter
rectangle, we compute its distance to all points in $Q$ and access
the corresponding subtree only if all these distances are smaller
than $\eps$.

\section{Inverse $k$-NN Query}

For inverse $k$-nearest neighbor queries (I$k$-NNQ),
%
we first consider the case of a single query object (i.e.,
$|Q|=1$). As discussed in Section \ref{sec:related}, this case can
be processed by the bi-section-based R$k$-NN approach (TPL)
proposed in \cite{TaoPapLia04}, enhanced by the rectangle-based
pruning criterion proposed in \cite{EmrKriKroRenZue10}. The core
idea of TPL is to use bi-section-hyperplanes between database
objects $o$ and the query object $q$ in order to check which
objects are closer to $o$ than to $q$. Each bi-section-hyperplane
divides the object space into two half-spaces, one containing $q$
and one containing $o$. Any object located in the half-space
containing $o$ is closer to $o$ than to $q$. The objects spanning
the hyperplanes are collected in an iterative way. Each object $o$
is then checked against the resulting half-spaces that do not
contain $q$. As soon as $o$ is inside more than $k$ such
half-spaces, it can be pruned.
Next, we consider queries with multiple objects (i.e., $|Q|>1$)
and discuss how the framework presented in Section
\ref{sec:InverseQueryFramework} is implemented in this case.

\subsection{Framework Implementation}

\subsubsection*{Fast Query Based Validation}
Recall that this filter uses  the set of query objects $Q$ only,
to perform a quick check on whether the result is empty. Here, we
use the obvious rule that the result is empty if the number of
query objects exceeds the query parameter $k$.

\subsubsection*{Query Based Pruning}
We can exploit the query objects in order to reduce the I$k$-NN
query to an I$k'$-NN query with $k'<k$. A smaller query parameter
$k'$ allows us to terminate the query process earlier and reduce
the search space. We first show how $k$ can be reduced by means of
the query objects only.
The proofs for all lemmas are presented in Appendix
\ref{appendix:proofs}.

\begin{lemma}
\label{lemma:IkNNQ_pruning_criterion_I} Let $\DB\subseteq\RR^d$ be
a set of database objects and $Q\subseteq\DB$ be a set of query
objects.
Let $\DB' = \DB-Q$. For each $o\in\DB'$, the following statement
holds:
\begin{small}
$$
o\in Ik\mbox{-}NNQ(Q)~\textrm{in}~\DB \Rightarrow \forall q\in Q:
o\in Ik'\mbox{-}NNQ(\{q\})~\textrm{in}~\DB'\cup\{q\},
$$
$$
\mbox{where }k'=k-|Q|+1.
$$
\end{small}
\end{lemma}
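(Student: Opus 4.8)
\emph{Proof plan.} The plan is to unfold both inverse‑query predicates into statements about nearest‑neighbour ranks of $o$ and then run a counting argument on the ``budget'' of $k$ neighbours of $o$. By Definition~\ref{def:inv}, the hypothesis $o\in Ik\mbox{-}NNQ(Q)$ in $\DB$ says exactly that every $q\in Q$ lies among the $k$ nearest neighbours of $o$ within $\DB$. Fix, once and for all, a total order $\prec_o$ on $\DB\setminus\{o\}$ that refines ordering by distance to $o$ (this resolves ties uniformly, and matches the convention that a point is not its own neighbour). Then the hypothesis is equivalent to: for every $q\in Q$, at most $k-1$ objects of $\DB\setminus\{o\}$ strictly precede $q$ in $\prec_o$. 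What we must prove, for a fixed $q\in Q$, is that $q$ lies among the $k'$ nearest neighbours of $o$ within $\DB'\cup\{q\}$, i.e.\ that at most $k'-1=k-|Q|$ objects of $(\DB'\cup\{q\})\setminus\{o\}$ precede $q$ in $\prec_o$. (Note the hypothesis already forces $|Q|\le k$, so $k'\ge 1$.)

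First I would fix $q\in Q$ and let $B=\{y\in\DB\setminus\{o\}:y\prec_o q\}$, so $|B|\le k-1$. Since $o\in\DB'$ gives $o\notin Q$ and $q\in Q$ gives $q\notin\DB'$, the database splits disjointly as $\DB=Q\cup\DB'$, hence $B=(B\cap Q)\cup(B\cap\DB')$ disjointly. In the reduced set $\DB'\cup\{q\}$ the objects preceding $q$ (other than $o$) are precisely those of $B\cap\DB'$, because $q\notin B$ and $o\notin B$. So it suffices to prove $|B\cap\DB'|\le k-|Q|$.

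The crux is to bound $|B\cap\DB'|$ from above by accounting for the query objects that do \emph{not} precede $q$. Let $Q_A=(Q\setminus\{q\})\setminus B$ be the query objects following $q$; then $|B\cap Q|+|Q_A|=|Q|-1$. Every element of $Q_A$ still belongs to the $k$NN of $o$ in $\DB$ and has $\prec_o$‑rank strictly larger than that of $q$, which is $|B|+1$; since these are distinct integers bounded by $k$, we get $|Q_A|\le k-(|B|+1)$. Combining, $|Q|-1=|B\cap Q|+|Q_A|\le |B\cap Q|+k-|B|-1$, hence $|B|-|B\cap Q|\le k-|Q|$, i.e.\ $|B\cap\DB'|\le k-|Q|=k'-1$. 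Therefore $q$ is within the $k'$ nearest neighbours of $o$ in $\DB'\cup\{q\}$, which by Definition~\ref{def:inv} is exactly $o\in Ik'\mbox{-}NNQ(\{q\})$ in $\DB'\cup\{q\}$; since $q\in Q$ was arbitrary, the implication follows.

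The main obstacle is the bookkeeping, and the temptation to argue too crudely: one might want to ``delete the $|Q|-1$ other query points and drop $k$ by $|Q|-1$,'' but this is wrong, since some of those points lie \emph{behind} $q$ in the ranking and are not removed by the window $\DB'\cup\{q\}$ anyway. The correct slack comes precisely from observing that the query points behind $q$ still consume part of the $k$‑budget of $o$, which is what pins $q$'s rank down to $k'$. A secondary point of care is the uniform treatment of distance ties and of whether $o$ is counted among its own neighbours, both of which are dispatched by fixing the refining total order $\prec_o$ at the outset.
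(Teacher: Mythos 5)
Your proof is correct, but it is organized differently from the paper's. The paper argues by contraposition and pivots on the farthest query point: assuming some $q\in Q$ has at least $k'$ objects of $\DB'$ closer to $o$ than $q$, it adds to these the $|Q|-1$ query objects other than $q_{ref}$ (all at distance at most $d(o,q_{ref})$) to exhibit $k'+|Q|-1=k$ objects closer to $o$ than $q_{ref}$, so $q_{ref}$ falls out of the $k$NN set of $o$ and $o\notin Ik\mbox{-}NNQ(Q)$. You instead prove the conclusion directly for an arbitrary fixed $q$, splitting $Q\setminus\{q\}$ into the query points ahead of $q$ and those behind it, and squeezing the latter into the rank interval between $q$'s rank and $k$ to bound the number of non-query blockers of $q$ by $k'-1$. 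The two arguments have the same combinatorial content (the query objects consume part of $o$'s budget of $k$ neighbours), but they spend it differently: the paper avoids the crude ``delete $|Q|-1$ points'' fallacy you flag by transferring the contradiction to $q_{ref}$, whereas you stay with $q$ and account for the query points ranked behind it. Your version has two modest advantages: it yields the per-$q$ statement without detouring through $q_{ref}$, and your explicit tie-breaking order $\prec_o$ cleans up a genuine sloppiness in the paper's proof, which asserts strict inequalities $d(o,q')<d(o,q_{ref})$ for the other query objects even though ties are possible.
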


Simply speaking, if a candidate object $o$ is not in the
$Ik'\mbox{-}NNQ(\{q\})$ result of some $q\in Q$ considering only
the points $\DB'\cup\{q\}$, then $o$ cannot be in the
$Ik\mbox{-}NNQ(Q)$ result considering all points in \DB\ and $o$
can be pruned. As a consequence, $Ik'\mbox{-}NNQ(\{q\})$ in
$\DB'\cup\{q\}$ can be used to prune candidates for any $q\in Q$.
The pruning power of $Ik'\mbox{-}NNQ(\{q\})$ depends on how $q\in
Q$ is selected.

From Lemma \ref{lemma:IkNNQ_pruning_criterion_I}
we can conclude the following:

\begin{lemma}
\label{lemma:IkNNQ_query_point_filter} Let $o\in\DB-Q$ be a
database object and $q_{ref}\in Q$ be a query object such that
$\forall q\in Q: d(o,q_{ref})\geq d(o,q)$. Then
$$
o\in Ik\mbox{-}NNQ(Q)\Leftrightarrow o\in
Ik'\mbox{-}NNQ(\{q_{ref}\})~\textrm{in}~\DB'\cup\{q\},
$$
where $k'=k-|Q|+1$.
\end{lemma}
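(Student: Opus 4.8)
The plan is to derive Lemma~\ref{lemma:IkNNQ_query_point_filter} as a direct corollary of Lemma~\ref{lemma:IkNNQ_pruning_criterion_I}, by showing that among all the necessary conditions $o\in Ik'\mbox{-}NNQ(\{q\})$ in $\DB'\cup\{q\}$ (one for each $q\in Q$), the one for the \emph{farthest} query point $q_{ref}$ is the strongest, so that it is not only necessary but also sufficient. The forward direction ($\Rightarrow$) is immediate: it is just the instance $q=q_{ref}$ of Lemma~\ref{lemma:IkNNQ_pruning_criterion_I}. So the real work is the backward direction ($\Leftarrow$).

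For the backward direction I would argue as follows. Suppose $o\in Ik'\mbox{-}NNQ(\{q_{ref}\})$ in $\DB'\cup\{q_{ref}\}$, i.e. $q_{ref}$ is among the $k'$ nearest neighbors of $o$ when we look only at the point set $\DB'\cup\{q_{ref}\}$; equivalently, fewer than $k'$ points of $\DB'$ are strictly closer to $o$ than $q_{ref}$ is. We want to conclude that $Q\subseteq k\mbox{-}NN(o)$ in the full database $\DB$, i.e. $o\in Ik\mbox{-}NNQ(Q)$. Take any $q\in Q$. By the choice of $q_{ref}$ we have $d(o,q)\le d(o,q_{ref})$, so every point of $\DB'$ that is closer to $o$ than $q_{ref}$ is also a candidate to be closer than $q$, but more importantly: the points of $\DB$ that are strictly closer to $o$ than $q$ are among (the points of $\DB'$ strictly closer to $o$ than $q_{ref}$) together with (the other query points in $Q\setminus\{q\}$). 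The first group has size at most $k'-1 = k-|Q|$, and the second group has size $|Q|-1$. Hence the number of points in $\DB$ strictly closer to $o$ than $q$ is at most $(k-|Q|)+(|Q|-1) = k-1$, which means $q$ is among the $k$ nearest neighbors of $o$ in $\DB$. Since $q\in Q$ was arbitrary, $Q\subseteq k\mbox{-}NN(o)$, i.e. $o\in Ik\mbox{-}NNQ(Q)$.

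The step I expect to be the main obstacle — or at least the one requiring care — is the counting argument in the backward direction, specifically making precise which points can ``block'' $q$ from $o$'s $k$NN set and verifying there is no double counting or off-by-one error. One has to be careful about: (i) whether $o$ itself lies in $\DB$ and must be excluded from the neighbor count (it does, since $Q\subseteq\DB$ and $o\in\DB-Q$); (ii) ties in distance (the statement implicitly assumes a tie-breaking rule, or distinct distances, consistent with how $k$NN is defined in the single-object case); and (iii) that the $|Q|-1$ other query points and the $\le k-|Q|$ database points from $\DB'$ really are disjoint sets — which holds because $\DB' = \DB - Q$ is disjoint from $Q$ by definition. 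Once these bookkeeping points are nailed down, the arithmetic $k' = k-|Q|+1$ falls out exactly, and combined with the trivial forward direction this gives the claimed equivalence.
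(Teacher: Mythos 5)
Your proposal is correct, and it is in fact more complete than the paper's own proof. The forward direction is handled identically in both: it is just the instance $q=q_{ref}$ of Lemma~\ref{lemma:IkNNQ_pruning_criterion_I}, and the paper's proof consists of exactly this chaining and nothing else. What the paper never does is prove the ``$\Leftarrow$'' of the stated equivalence: its displayed argument only establishes $o\in Ik\mbox{-}NNQ(Q)\Rightarrow o\in Ik'\mbox{-}NNQ(\{q_{ref}\})$. The ingredients for the converse are present elsewhere (the ``side note'' after Lemma~\ref{lemma:IkNNQ_pruning_criterion_I} proves that the condition holding for \emph{all} $q\in Q$ implies $o\in Ik\mbox{-}NNQ(Q)$), but bridging from the single condition at $q_{ref}$ to the condition at every $q$ requires precisely the monotonicity observation you make: since $d(o,q)\le d(o,q_{ref})$, any $p\in\DB'$ with $d(o,p)<d(o,q)$ also satisfies $d(o,p)<d(o,q_{ref})$, so the at-most-$(k'-1)$ bound transfers from $q_{ref}$ to every $q$, and adding the at-most-$(|Q|-1)$ other query points (disjoint from $\DB'$) gives at most $k-1$ blockers. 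Your counting is exact, and your bookkeeping caveats (exclusion of $o$ itself, ties, disjointness of $\DB'$ and $Q$) are the right ones to flag; they apply equally to the paper's arguments, which gloss over them. So: same route for ``$\Rightarrow$'', and for ``$\Leftarrow$'' you supply a correct argument that the paper's proof of this lemma omits entirely.
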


Lemma \ref{lemma:IkNNQ_query_point_filter} suggests that for any
candidate object $o$ in $\DB$, we should use the furthest query
point to check whether $o$ can be pruned.

\subsubsection*{Object Based Pruning}
Up to now, we only used the query points in order to reduce $k$ in
the inverse $k$-NN query. Now, we will show how to consider
database objects in order to further decrease  $k$.

\begin{lemma}
\label{lemma:convex_hull_criterion_I} Let $Q$ be the set of query
objects and $\mathcal{H}\subseteq\DB-Q$ be the non-query(database)
objects covered by the convex hull of $Q$. Furthermore, let
$o\in\DB$ be a database object and $q_{ref}\in Q$ a query object
such that $\forall q\in Q: d(o,q_{ref})\geq d(o,q)$. Then for each
object $p\in\mathcal{H}$ it holds that $d(o,p)\leq d(o,q_{ref})$.
\end{lemma}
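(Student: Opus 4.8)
The goal is to show that any database point $p$ lying inside the convex hull of the query set $Q$ is at most as far from $o$ as the reference query point $q_{ref}$, where $q_{ref}$ is by definition the query point furthest from $o$ among all of $Q$. The plan is to exploit convexity directly: since $p \in \mathrm{conv}(Q)$, by Carath\'eodory's theorem we can write $p = \sum_{i} \lambda_i q_i$ as a convex combination of points $q_i \in Q$ (with $\lambda_i \ge 0$, $\sum_i \lambda_i = 1$). The natural tool is then the convexity of the squared Euclidean distance function $f(x) = d(o,x)^2 = \|x - o\|^2$, which is a standard fact (its Hessian is $2I \succ 0$). Applying Jensen's inequality to $f$ gives
$$
d(o,p)^2 = \Bigl\| \sum_i \lambda_i q_i - o \Bigr\|^2 = \Bigl\| \sum_i \lambda_i (q_i - o) \Bigr\|^2 \le \sum_i \lambda_i \|q_i - o\|^2 = \sum_i \lambda_i\, d(o,q_i)^2 .
$$

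The second and final step is to bound each term by the maximum. By the defining property of $q_{ref}$, namely $d(o,q_{ref}) \ge d(o,q)$ for every $q \in Q$, we have $d(o,q_i)^2 \le d(o,q_{ref})^2$ for each $i$ appearing in the combination. Hence
$$
d(o,p)^2 \le \sum_i \lambda_i\, d(o,q_{ref})^2 = d(o,q_{ref})^2 \sum_i \lambda_i = d(o,q_{ref})^2,
$$
and taking square roots yields $d(o,p) \le d(o,q_{ref})$, which is exactly the claim. Note the argument uses only that $p$ is a convex combination of query points and that $q_{ref}$ realizes the maximum distance to $o$ over $Q$; it makes no use of $o$'s own position relative to the hull, so it applies uniformly whether $o \in Q$, $o \in \mathcal{H}$, or $o$ is outside the hull.

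I do not expect a genuine obstacle here; the only things to be careful about are invoking convexity of $\|x-o\|^2$ (rather than of $\|x-o\|$ itself, which is also convex but for which Jensen gives a weaker-looking linear bound that still suffices) and making sure the convex-combination representation of $p$ is legitimate — any point of a convex hull of finitely many points is by definition such a combination, so no appeal to Carath\'eodory is strictly needed, only a finite index set. One could alternatively phrase the whole thing geometrically: the ball $B(o, d(o,q_{ref}))$ is convex and contains every $q_i \in Q$, hence contains $\mathrm{conv}(Q) \ni p$. That one-line version is arguably cleaner and I would likely present it that way, with the Jensen computation as the underlying justification of "balls are convex."
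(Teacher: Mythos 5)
Your proof is correct, but it takes a different route from the paper's. The paper first establishes a separate corollary — that for any point $o$, the farthest point of $Q$ from $o$ must be a vertex of the convex hull of $Q$ — via a case-based contradiction argument (distinguishing whether the supposed farthest point lies in the interior of the hull or on a face but not at a vertex, and arguing geometrically that the hull would then protrude beyond the sphere of radius $d(o,q)$ around $o$). It then concludes that $q_{ref}$ maximizes distance to $o$ over the entire hull, hence over $\mathcal{H}$. You instead argue directly that the closed ball $B(o, d(o,q_{ref}))$ is convex and contains every point of $Q$, hence contains $\mathrm{conv}(Q) \supseteq \mathcal{H}$; your Jensen computation on $\|x-o\|^2$ is the rigorous backbone of that containment. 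The underlying geometric fact is the same (both proofs hinge on the sphere around $o$ through $q_{ref}$ enclosing all of $Q$), but your decomposition is more economical: it avoids the intermediate ``farthest point is a vertex'' claim entirely, sidesteps the paper's somewhat informal case analysis about moving along faces of the hull, and applies verbatim to any convex superset of $Q$, not just the polytope $\mathrm{conv}(Q)$. The paper's version buys a slightly stronger intermediate statement (the maximizer over the hull is a vertex), which is mildly informative but not needed for the lemma. Your closing one-line formulation is the one I would keep.
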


According to the above lemma the following statement holds:

\begin{lemma}
\label{lemma:IkNNQ_pruning_criterion_II} Let $Q$ be the set of
query objects, $\mathcal{H}\subseteq\DB-Q$ be the database
(non-query) objects covered by the convex hull of $Q$ and let
$q_{ref}\in Q$ be a query object such that $\forall q\in Q:
d(o,q_{ref})\geq d(o,q)$. Then
$$
\forall o\in\DB-\mathcal{H}-Q: o\in
Ik\mbox{-}NNQ(Q)\Leftrightarrow
$$
at most $k'=k-|\mathcal{H}|-|Q|$ objects $p\in\DB-\mathcal{H}$ are
closer to $o$ than $q_{ref}$, and
$$
\forall o\in\mathcal{H}: o\in Ik\mbox{-}NNQ(Q)\Leftrightarrow
$$
at most $k'=k-|\mathcal{H}|-|Q|+1$ objects $p\in\DB-\mathcal{H}$
are closer to $o$ than $q_{ref}$.
\end{lemma}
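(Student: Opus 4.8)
The plan is to derive Lemma~\ref{lemma:IkNNQ_pruning_criterion_II} directly from Lemmas~\ref{lemma:IkNNQ_query_point_filter} and~\ref{lemma:convex_hull_criterion_I} by a careful bookkeeping of which points are guaranteed to lie closer to a candidate $o$ than the reference query point $q_{ref}$. First I would recall from Lemma~\ref{lemma:IkNNQ_query_point_filter} that, for the fixed $q_{ref}\in Q$ realizing $\max_{q\in Q}d(o,q)$, membership $o\in Ik\mbox{-}NNQ(Q)$ is equivalent to $o\in Ik'\mbox{-}NNQ(\{q_{ref}\})$ in $\DB'\cup\{q\}$ with $k'=k-|Q|+1$; and $o\in Ik'\mbox{-}NNQ(\{q_{ref}\})$ means exactly that $q_{ref}$ is among the $k'$ nearest neighbors of $o$ within $\DB'\cup\{q\}$, i.e.\ that \emph{fewer than} $k'$ database points are strictly closer to $o$ than $q_{ref}$ is. So the whole statement reduces to counting how many points of $\DB$ are necessarily closer to $o$ than $q_{ref}$.

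Next I would use Lemma~\ref{lemma:convex_hull_criterion_I}: every one of the $|\mathcal{H}|$ points in the convex hull of $Q$ satisfies $d(o,p)\le d(o,q_{ref})$, hence each such $p$ counts against the budget. Likewise the other query points: since $q_{ref}$ is the \emph{furthest} query point from $o$, each of the remaining $|Q|-1$ members of $Q$ is also at least as close to $o$ as $q_{ref}$. The remaining freedom is only among the points of $\DB-\mathcal{H}-Q$. Subtracting the two forced groups from the threshold: the number of points in $\DB-\mathcal{H}-Q$ allowed to be closer to $o$ than $q_{ref}$ is at most $k'-|\mathcal{H}|-(|Q|-1) = (k-|Q|+1)-|\mathcal{H}|-|Q|+1$; rewriting in terms of ``points $p\in\DB-\mathcal{H}$ closer to $o$ than $q_{ref}$'' (which adds back the $|\mathcal{H}|$ hull points that are always counted) gives the stated bound $k-|\mathcal{H}|-|Q|+1$ for a candidate $o\in\mathcal{H}$, and the off-by-one-smaller bound $k-|\mathcal{H}|-|Q|$ when $o\notin\mathcal{H}$. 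The two cases differ precisely because when $o$ itself lies in $\mathcal{H}$, $o$ is one of the $|\mathcal{H}|$ hull points but is not closer to itself, so one more ``slot'' is available — this is exactly the $+1$ discrepancy between the two displayed inequalities, and it mirrors the case split already seen in Lemma~\ref{lemma:IkNNQ_pruning_criterion_I}.

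The main obstacle I expect is getting the constants exactly right across the translation between three different ways of phrasing the same count: (i) ``$q_{ref}$ is in the $k'$-NN of $o$'', (ii) ``fewer than $k'$ points are closer to $o$ than $q_{ref}$'', and (iii) ``at most $k'-1$ points are closer'', together with whether $o$, the $q_i$'s, and the hull points $\mathcal{H}$ are included in or excluded from the pools $\DB'\cup\{q\}$, $\DB-\mathcal{H}$, and $\DB-\mathcal{H}-Q$. I would handle this by first writing the membership condition purely in terms of the set $\{p\in\DB : d(o,p) < d(o,q_{ref})\}$ (being careful about ties — the $\le$ in Lemma~\ref{lemma:convex_hull_criterion_I} is non-strict, so I would note that hull points tied in distance are still safely counted as ``not farther,'' which is what the $k$NN predicate needs), then partition that set as (hull points) $\cup$ (other query points) $\cup$ (everything else), and only at the very end re-express the bound in the form the lemma states. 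A clean statement of which points are forced into the count, done once and reused for both cases, is what keeps the argument from collapsing into error-prone arithmetic.
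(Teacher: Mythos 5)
Your overall strategy is the right one and is essentially the paper's: fix $q_{ref}$ as the furthest query point, observe that membership reduces to ``$q_{ref}$ is among the $k$ nearest neighbors of $o$,'' and then count the points that are \emph{forced} to be at least as close to $o$ as $q_{ref}$ --- namely the $|Q|-1$ other query points (by maximality of $d(o,q_{ref})$) and the $|\mathcal{H}|$ hull points (by Lemma~\ref{lemma:convex_hull_criterion_I}), with the $+1$ in the second case coming from $o$ occupying one of the $\mathcal{H}$ slots itself. That is exactly how the paper argues both directions.

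However, the one piece of arithmetic you actually display is wrong, and it is wrong for a reason you should fix rather than paper over: you double-count the query points. Lemma~\ref{lemma:IkNNQ_query_point_filter} is stated over the reduced database $\DB'\cup\{q_{ref}\}$ with $\DB'=\DB-Q$, so the reduction $k'=k-|Q|+1$ has \emph{already} absorbed the $|Q|-1$ other query objects; the only forced neighbors remaining in that pool are the $|\mathcal{H}|$ hull points, and the budget for $\DB-\mathcal{H}-Q$ is $(k'-1)-|\mathcal{H}| = k-|Q|-|\mathcal{H}|$. Your expression $k'-|\mathcal{H}|-(|Q|-1)$ subtracts the query points a second time and evaluates to $k-2|Q|-|\mathcal{H}|+2$, which does not equal the claimed $k-|\mathcal{H}|-|Q|+1$ (and ``adding back $|\mathcal{H}|$'' cannot repair a $-2|Q|$ versus $-|Q|$ discrepancy). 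The clean route is the one you sketch at the end but do not execute: work directly in $\DB$, write the condition as $\bigl|\{p\in\DB : d(o,p)<d(o,q_{ref})\}\bigr|\le k-1$, partition that set into $(Q\setminus\{q_{ref}\})$, $\mathcal{H}$ (minus $o$ if $o\in\mathcal{H}$), and the rest, and subtract once. Doing that consistently --- and committing to one convention for strict versus non-strict ties, which you correctly flag but leave unresolved --- yields the lemma; as written, the chain of equalities in your proof does not.
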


Based on Lemma \ref{lemma:IkNNQ_pruning_criterion_II}, given the
number of objects in the convex hull of $Q$, we can prune objects
outside of the hull from I$k$-NN($Q$). Specifically, for an
I$k$-NN query we have the following pruning criterion: An object
$o\in\DB$ can be pruned, as soon as we find more than $k'$ objects
$p\in\DB-\mathcal{H}$ outside of the convex hull of $Q$, that are
closer to $o$ than $q_{ref}$. Note that the parameter $k'$ is set
according to Lemma \ref{lemma:IkNNQ_pruning_criterion_II} and
depends on whether $o$ is in the convex hull of $Q$ or not.
Depending on the size of $Q$ and the number of objects within the
convex hull of $Q$, $k'=k-|\mathcal{H}|+1$ can become negative. In
this case,
we can terminate query evaluation immediately, as no object can
qualify the inverse query (i.e., the inverse query result is
guaranteed to be empty). The case where $k'=k-|\mathcal{H}|+1$
becomes zero is another special case, as all objects outside of
$\mathcal{H}$ can be pruned. For all objects in the convex hull of
$Q$ (including all query objects) we have to check whether there
are objects outside of $\mathcal{H}$ that prune them.
\begin{figure}
    \centering

    \subfigure[Pruning $o_1$\label{subfig:iknn1}]{\includegraphics[width=0.45\columnwidth]{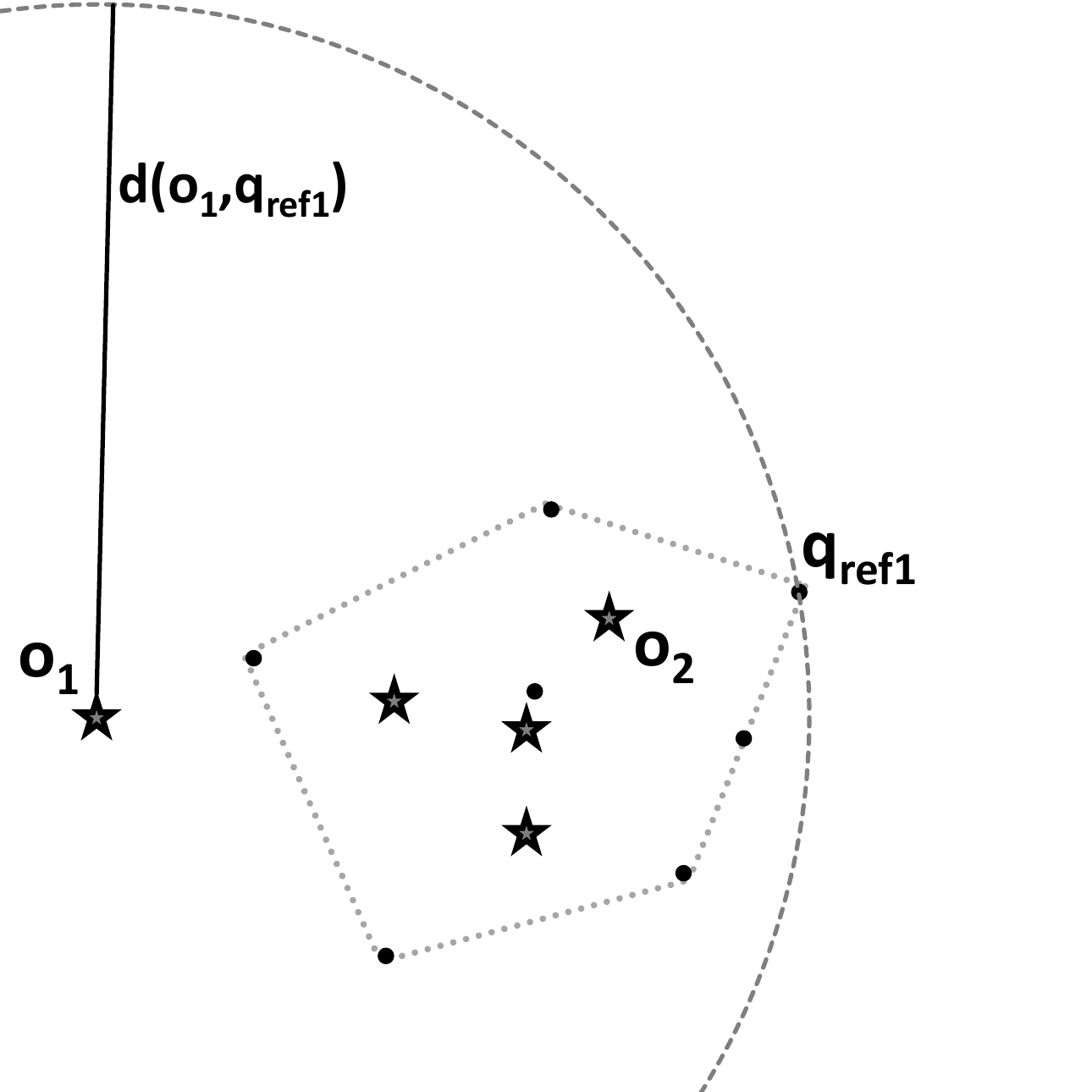}}
    ~~~~~~
    \subfigure[Pruning $o_2$\label{subfig:iknn2}]{\includegraphics[width=0.45\columnwidth]{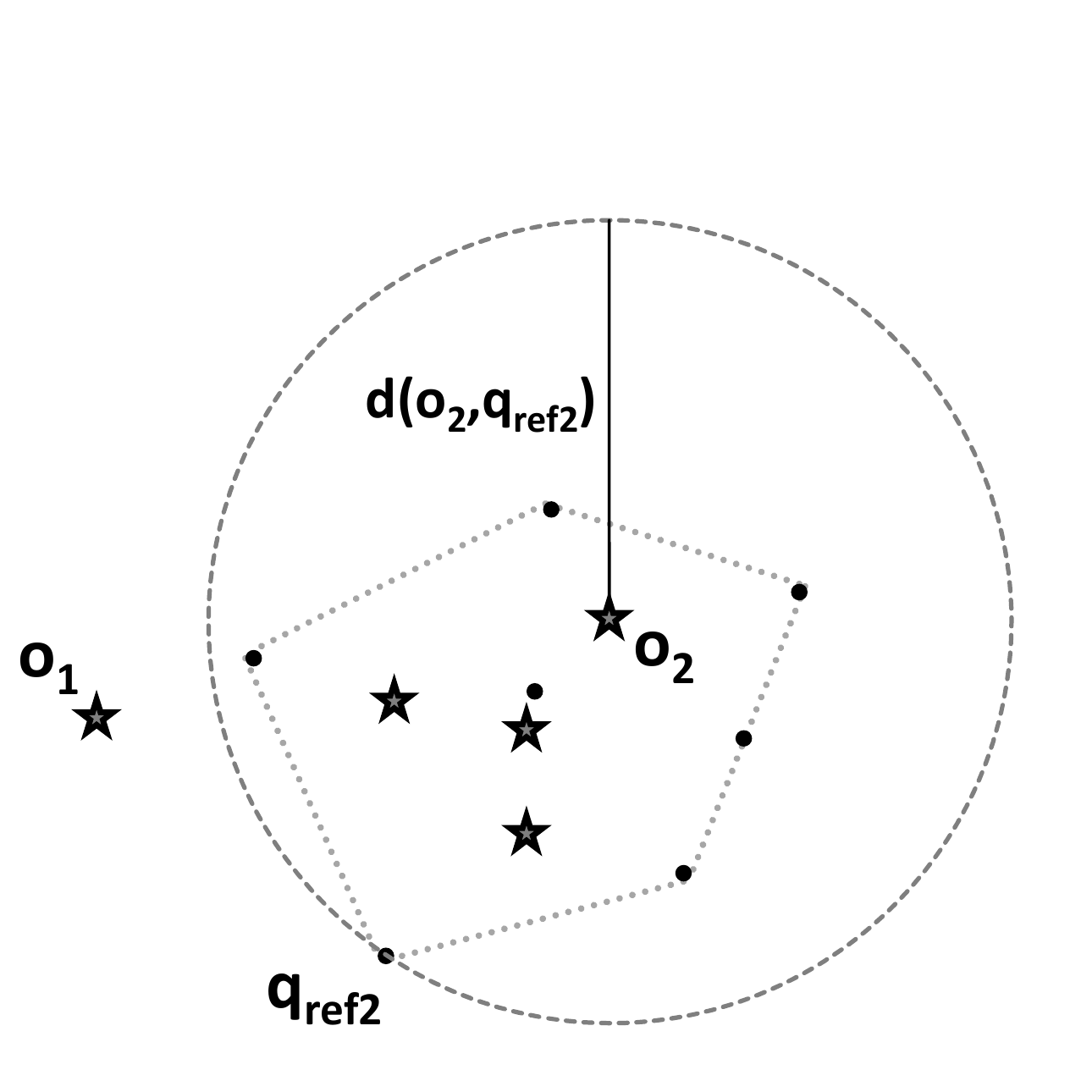}}
    \caption{I$k$-NN pruning based on Lemma \ref{lemma:IkNNQ_pruning_criterion_II}}
    \label{fig:iknn}
\end{figure}

As an example of how Lemma \ref{lemma:IkNNQ_pruning_criterion_II}
can be used, consider the data shown in Figure \ref{fig:iknn} and
assume that
 we wish to perform an inverse 10NN query using a set $Q$ of
seven query objects, shown as points in the figure; non-query
database points are represented by stars. In Figure
\ref{subfig:iknn1}, the goal is to determine whether candidate
object $o_1$ is a result, i.e., whether $o_1$ has all $q\in Q$ in
its 10NN set. The query object having the largest distance to
$o_1$ is $q_{ref1}$. Since $o_1$ is located outside of the convex
hull of $Q$ (i.e, $o\in\DB-\mathcal{H}-Q$), the first equivalence
of Lemma \ref{lemma:IkNNQ_pruning_criterion_II}, states that $o_1$
is a result if at most $k^{\prime}=k-|\mathcal{H}|-|Q|=10-4-7=-1$
objects in $\DB-\mathcal{H}-Q$ are closer to $o_1$ than
$q_{ref1}$. Thus, $o_1$ can be safely pruned without even
considering these objects (since obviously, at least zero objects
are closer to $o_1$ than $q_{ref1}$). Next, we consider object
$o_2$ in Figure \ref{subfig:iknn2}. The query object with the
largest distance to $o_2$ is $q_{ref2}$. Since $o_2$ is inside the
convex hull of $Q$, the second equivalence of Lemma
\ref{lemma:IkNNQ_pruning_criterion_II} yields that $o_2$ is a
result if at most $k^{\prime}=k-|\mathcal{H}|-|Q|+1=10-4-7+1=0$
objects $\DB-\mathcal{H}-Q$ are closer to $o_2$ than $q_{ref2}$.
This, $o_2$ remains a candidate until at least one object in
$\DB-\mathcal{H}-Q$ is found that is closer to $o_2$ than
$q_{ref2}$.

\subsubsection*{Refinement}
Each remaining candidate is checked whether it is a result of the
inverse query by performing a $k$-NN search and verifying whether
its result includes $Q$.


\subsection{Algorithm}
\label{sec:iknn-algo} We now present a complete algorithm that
traverses an {\em aggregate} R-tree ($ARTree$), which indexes
$\DB$ and computes  $Ik\mbox{-}NNQ(Q)$ for a given set $Q$ of
query objects, using Lemma \ref{lemma:IkNNQ_pruning_criterion_II}
to prune the search space. The entries in the tree nodes are
augmented with the cardinality of objects in the corresponding
sub-tree. These counts can be used to accelerate search, as we
will see later.

In a nutshell, the algorithm, while traversing the tree, attempts
to prune nodes based on the lemma using the information known so
far about the points of $\DB$ that are included in the convex hull
({\em filtering}). The objects that survive the pruning are
inserted in the {\em candidates} set. During the {\em refinement}
step, for each point $c$ in the candidates set, we run a $k$-NN
query to verify whether $c$ contains $Q$ in its $k$-NN set.

%

%
%

Algorithm \ref{alg:iknn} in Appendix \ref{app:alg} is a pseudocode
of our approach. The $ARTree$ is traversed in a best-first search
manner \cite{HjaSam95}, prioritizing the access of the nodes
according to the maximum possible distance (in case of a non-leaf
entry we use {\em MinDist}) of their contents to the query points
$Q$. In specific, for each R-tree entry $e$ we can compute, based
on its MBR, the furthest possible point $q_{ref}$ in $Q$ to a
point indexed under $e$. Processing the entries with the smallest
such distances first helps to find points in the convex hull of
$Q$ earlier, which helps making the pruning bound tighter.

Thus, initially, we set $|\mathcal{H}|=0$, assuming that in the
worst case the number of non-query points in the convex hull of
$Q$ is 0. If the object which is deheaped is inside the convex
hull, we increase $|\mathcal{H}|$ by one. If a non-leaf entry is
deheaped and its MBR is contained in the hull, we increase
$|\mathcal{H}|$ by the number of objects in the corresponding
sub-tree, as indicated by its augmented counter.

During the tree traversal, the accessed tree entries could be in
one of the following sets (i) the set of {\em candidates}, which
contains objects that could possibly be results of the inverse
query, (ii) the set of {\em pruned entries}, which contains
(pruned) entries whose subtrees may not possibly contain inverse
query results, and (iii) the set of entries which are currently in
the priority queue. When an entry $e$ is deheaped, the algorithm
checks whether it can be pruned. For this purpose, it initializes
a $prune\_counter$ which is a lower bound of the number of objects
that are closer to every point $p$ in $e$ than $Q$'s furthest
point to $p$. For every entry $e'$ in all three sets (candidates,
pruned, and priority queue), we increase the $prune\_counter$ of
$e$ by the number of points in $e'$ if the following condition
holds: $\forall p \in e, \forall p' \in e': dist(e,e') <
dist(e,q_{ref})$. This condition can efficiently be checked using
the technique from \cite{EmrKriKroRenZue10}. An example were this
condition is fulfilled is shown in Figure \ref{fig:prune_count}.
Here the $prune\_counter$ of $e$ can be increased by the number of
points in $e'$.

\begin{figure}
    \centering
    \includegraphics[width = 0.5\columnwidth]{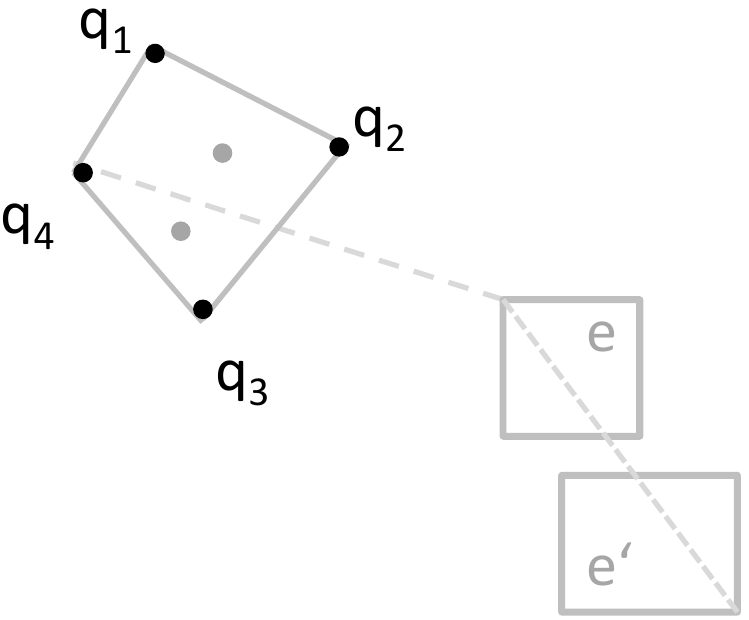}
    \caption{Calculating the $prune\_count$ of e}
    \label{fig:prune_count}
\end{figure}

While updating $prune\_counter$ for $e$, we check whether
\linebreak $prune\_counter>k-|\mathcal{H}|-|Q|$
($prune\_counter>k-|\mathcal{H}|-|Q|+1$) for entries that are
entirely outside of (intersect) the convex hull. As soon as this
condition is true, $e$ can be pruned as it cannot contain objects
that can participate in the inverse query result (according to
Lemma \ref{lemma:IkNNQ_pruning_criterion_II}). Considering again
Figure \ref{fig:prune_count} and assuming the number of points in
$e'$ to be 5, $e$ could be pruned for $k \leq 10$ (since
$prune\_counter(5) > k(10)-|\mathcal{H}|(2)-|Q|(4)$ holds). In
this case $e$ is moved to the set of pruned entries. If $e$
survives pruning, the node pointed to by $e$ is visited and its
entries are enheaped if $e$ is a non-leaf entry; otherwise $e$ is
inserted in the {\em candidates} set.

When the queue becomes empty, the filter step of the algorithm
completes with a set of {\em candidates}. For each object $c$ in
this set, we check whether $c$ is a result of the inverse query by
performing a $k$-NN search and verifying whether its result
includes $Q$. In our implementation, to make this test faster, we
replace the $k$-NN search by an aggregate $\eps$-range query
around $c$, by setting $\eps = d(c,q_{ref})$, where $q_{ref}$ is
the furthest point of $Q$ to $p$. The objective is to count
whether the number of objects in the range is greater than $k$. In
this case, we can prune $c$, otherwise $c$ is a result of the
inverse query. $ARTree$ is used to process the aggregate
$\eps$-range query; for every entry $e$ included in the
$\eps$-range, we just increase the aggregate count by the
augmented counter to $e$ without having to traverse the
corresponding subtree. In addition, we perform batch searching for
candidates that are close to each other, in order to optimize
performance. The details are skipped due to space constraints.

%

\vspace{-3mm}
\section{Inverse Dynamic Skyline Query}
\label{sec:idsq}

We again first discuss the case of a single query object, which
corresponds to the reverse dynamic skyline query \cite{LiaChe08}
and then present a solution for the more interesting case where
$|Q|>1$. Let $q$ be the (single) query object with respect to
which we want to compute the inverse dynamic skyline. Any object
$o\in \DB$ defines a pruning region, such that any object $o'$ in
this region cannot be part of the inverse query result. Formally:
\begin{definition}[Pruning Region]
\label{def:pruning_region} Let $q=(q^1,\ldots,q^d)\in Q$ be a
single $d$-dimensional query object and $o=(o^1,\ldots,o^d)\in\DB$
be any $d$-dimensional database object. Then the pruning region
$PR_q(o)$ of $o$ w.r.t. $q$ is defined as the $d$-dimensional
rectangle where the $i$th dimension of $PR_q(o)$ is given by
$[\frac{q^i+o^i}{2},+\infty]$ if $q^i\leq o^i$ and
$[-\infty,\frac{q^i+o^i}{2}]$ if $q^i\geq o^i$.\\
\end{definition}
\vspace{-3mm}

The pruning region of an object $o$ with respect to a single query
object $q$ is illustrated by the shaded region in Figure
\ref{subfig:skyline_1a}.

\begin{figure}
    \centering
    \subfigure[pruning region\label{subfig:skyline_1a}]{\includegraphics[width=0.48\columnwidth]{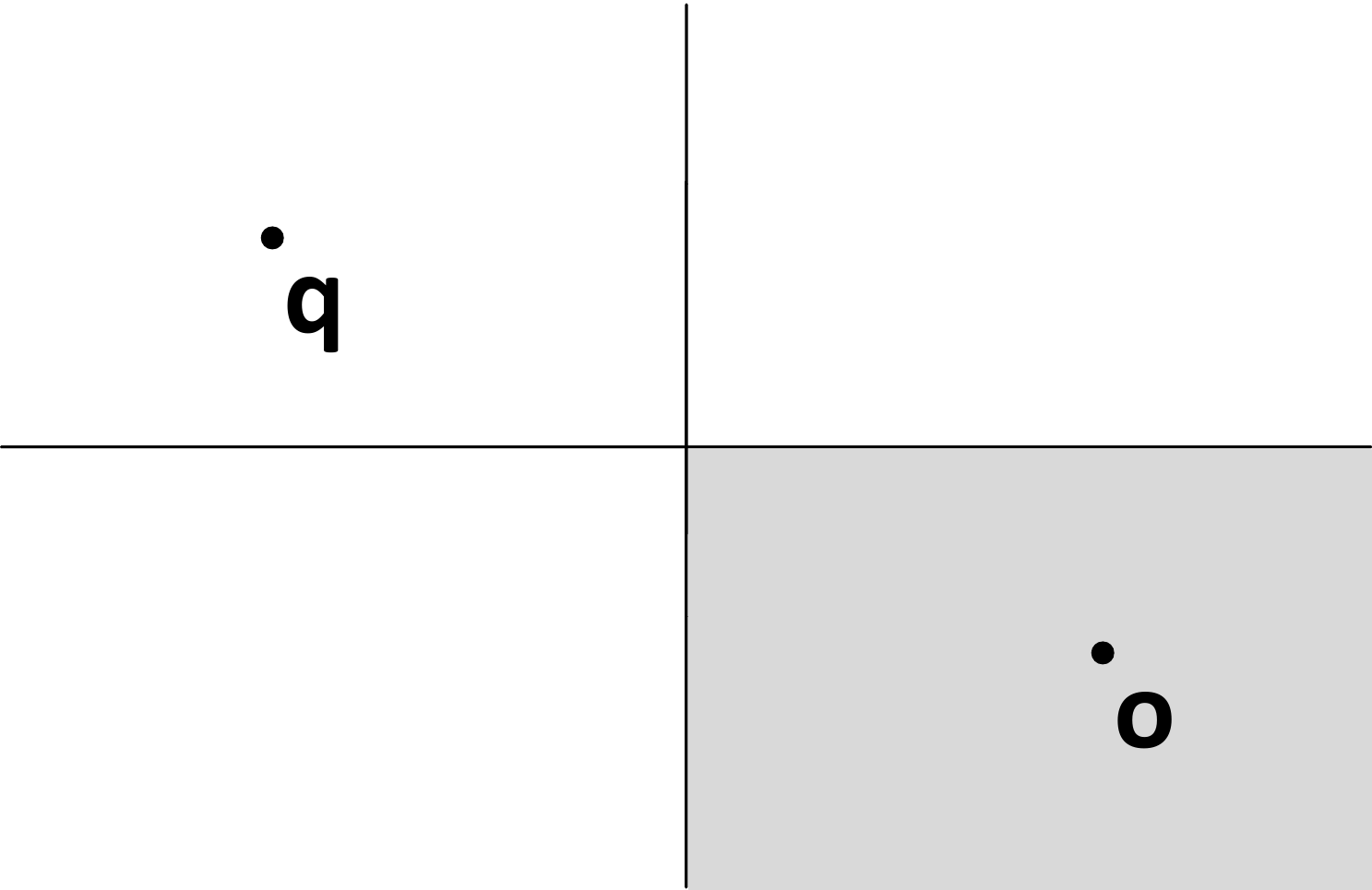}}
    \hfill
    \subfigure[candidates\label{subfig:skyline_1b}]{\includegraphics[width=0.48\columnwidth]{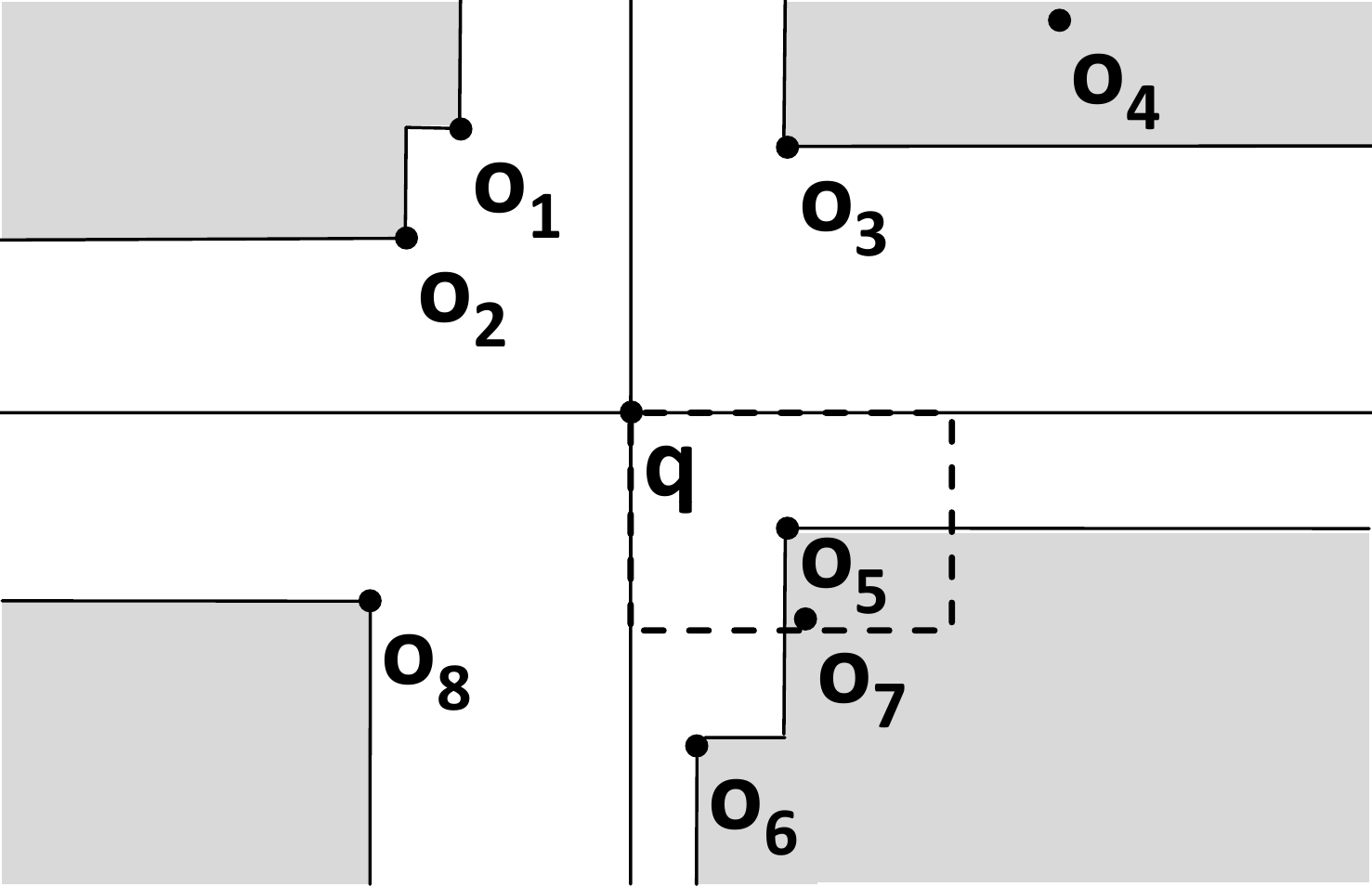}}
\vspace{-3mm} \caption{Single-query case}
    \label{fig:skyline_1}
\end{figure}

{\bf Filter step.} As shown in \cite{LiaChe08}, any object
$p\in\DB$ can be safely pruned if $p$ is contained in the pruning
region of some $o\in\DB$ w.r.t. $q$ (i.e. $p\in PR_q(o)$).
Accordingly, we can use $q$ to divide the space into $2^d$
partitions by splitting along each dimension at $q$. Let $o\in\DB$
be an object in any partition $P$; $o$ is an $I\mbox{-}DSQ$ {\em
candidate}, iff there is no other object $p\in P\subseteq\DB$ that
dominates $o$ w.r.t. $q$.

Thus, we can derive all $I\mbox{-}DSQ$ candidates as follows:
First, we split the data space into the $2^d$ partitions at the
query object $q$ as mentioned above. Then in each partition, we
compute the skyline\footnote{Only objects within the same
partition are considered for the dominance relation.}, as
illustrated in the example depicted in Figure
\ref{subfig:skyline_1b}. The union of the four skylines is the set
of the inverse query candidates (e.g.,
$\{o_1,o_2,o_3,o_5,o_6,o_8\}$ in our example).

{\bf Refinement.} The result of the reverse dynamic skyline query
is finally obtained by verifying for each candidate $c$, whether
there is an object in $\DB$ which dominates $q$ w.r.t. $c$. This
can be done by checking whether the hypercube centered at $c$ with
extent $2\cdot|c^i-q^i|$ at each dimension $i$ is empty. For
example, candidate $o_5$ in Figure \ref{subfig:skyline_1b} is not
a result, because the corresponding box (denoted by dashed lines)
contains $o_7$. This means that in both dimensions $o_7$ is closer
to $o_5$ than $q$ is.

\subsection{IQ Framework Implementation}
\label{subsec:FrameworkImplementation_skyline}

\subsubsection*{Fast Query Based Validation} Following our framework,
first the set $Q$ of query objects is used to decide whether it is
possible to have any result at all. For this, we use the following
lemma:
\begin{lemma}\label{lem:fastSky}
Let $q\in Q$ be any query object and let $\mathcal{S}$ be the set
of $2^d$ partitions derived from dividing the object space at $q$
along the axes into two halves in each dimension. If in each
partition $r\in\mathcal{S}$ there is at least one query object
$q'\in Q$ $(q'\neq q)$, then there cannot be any result.
\end{lemma}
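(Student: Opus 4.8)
The plan is a proof by contradiction. Suppose some point $r\in\RR^d$ lies in $I\mbox{-}DSQ(Q)$; then every query object of $Q$ is contained in the dynamic skyline of $r$, and the aim is to contradict this for a suitably chosen query object. Recall that an object $x$ dynamically dominates an object $y$ with respect to a reference point $z$ when $|x^i-z^i|\le|y^i-z^i|$ holds in every coordinate $i$ and is strict for at least one, and that $y$ is in the dynamic skyline of $z$ precisely when no object of $\DB$ dominates $y$ w.r.t. $z$. The whole argument reduces to exhibiting a query object that is dominated, with respect to $r$, by the database object $q$ named in the statement (note $q\in Q\subseteq\DB$).

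First I would locate the relevant partition. For each coordinate $i$, the point $r$ lies on one side of the splitting hyperplane $\{x^i=q^i\}$ (or exactly on it); let $\rho\in\mathcal{S}$ be the partition lying on the \emph{opposite} side of $q$ from $r$ in every coordinate simultaneously (when $r^i=q^i$ the choice of side in that coordinate is immaterial). By the hypothesis of the lemma, $\rho$ contains some query object $p\in Q$ with $p\neq q$.

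The second and essentially only computational step is the elementary observation that $q$ dominates $p$ with respect to $r$. Since $p\in\rho$, in every coordinate $i$ the value $q^i$ lies (weakly) between $r^i$ and $p^i$, which yields $|q^i-r^i|\le|p^i-r^i|$ at once; and since $p\neq q$ there is a coordinate with $p^i\neq q^i$, where either the betweenness is strict or else $r^i=q^i$ forces $|q^i-r^i|=0<|p^i-r^i|$, so the inequality is strict there. Hence $q$ dynamically dominates $p$ w.r.t.\ $r$, so $p$ is \emph{not} in the dynamic skyline of $r$. But $p\in Q$, so $Q$ is not contained in the dynamic skyline of $r$, i.e.\ $r\notin I\mbox{-}DSQ(Q)$, contradicting the choice of $r$. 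As $r$ was arbitrary, $I\mbox{-}DSQ(Q)=\emptyset$.

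The only point that requires care is the boundary bookkeeping in the middle step --- coordinates with $r^i=q^i$, or a witness $p$ lying on a splitting hyperplane of $\rho$, where ``opposite side'' and strict betweenness degenerate. This is harmless: such coordinates still give the weak inequality $|q^i-r^i|\le|p^i-r^i|$, while the single strict inequality needed for dominance is supplied by any coordinate in which $p$ and $q$ genuinely differ (one exists since $p\neq q$). Alternatively, one can remove the bookkeeping entirely by reflecting through the hyperplanes $\{x^i=q^i\}$ in exactly the coordinates where $r^i<q^i$ --- isometries fixing $q$ and merely permuting $\mathcal{S}$, so that the hypothesis is preserved --- thereby reducing to the case $r^i\ge q^i$ for all $i$, in which $\rho=\{x:x^i\le q^i\ \forall i\}$ and every distance estimate is immediate.
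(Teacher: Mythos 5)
Your proof is correct and is essentially the paper's argument in pointwise form: the paper observes that each query object $q'$ in a partition, paired with $q$, yields a pruning region covering the partition opposite to $q'$, so the union of these regions covers all of space; you fix a candidate $r$, pick the query object $p$ in the partition opposite to $r$, and verify directly that $q$ lies coordinatewise between $r$ and $p$ and hence dynamically dominates $p$ w.r.t.\ $r$. Your treatment of the boundary/strictness cases is in fact more careful than the paper's.
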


\subsubsection*{Query Based Pruning}
We now propose a  filter, which uses the set $Q$ of query objects
only in order to reduce the space of candidate results. We explore
similar strategies as the fast query based validation. For any
pair of query objects $q,q'\in Q$, we can define two pruning
regions, according to Definition \ref{def:pruning_region}:
$PR_q(q')$ and $PR_{q'}(q)$.
Any object inside these regions cannot be a candidate of the
inverse query result because it cannot have both $q_1$ and $q_2$
in its dynamic skyline point set. Thus, for every pair of query
objects, we can determine the corresponding pruning regions and
use their union to prune objects or R-tree nodes that are
contained in it. Figure \ref{fig:skyline_3} shows examples of the
pruning space for $|Q|=3$ and $|Q|=4$. Observe that with the
increase of $|Q|$ the remaining space, which may contain
candidates, becomes very limited.

The main challenge is how to encode and use the pruning space
defined by $Q$, as it can be arbitrarily complex in the
multidimensional space. As for the $Ik\mbox{-}NNQ$ case, our
approach is not to explicitly compute and store the pruning space,
but to check on-demand whether each object (or R-tree MBR) can be
pruned by one or more query pairs. This has a complexity of
$O(|Q|^2)$ checks per object. In Appendix \ref{appn:sky_2d}, we
show how to reduce this complexity for the special 2D case. The
techniques shown there can also be used in higher dimensional
spaces, with lower pruning effect.

\begin{figure}
    \centering
    \subfigure[$|Q|=3$\label{subfig:skyline_3a}]{\includegraphics[width=0.48\columnwidth]{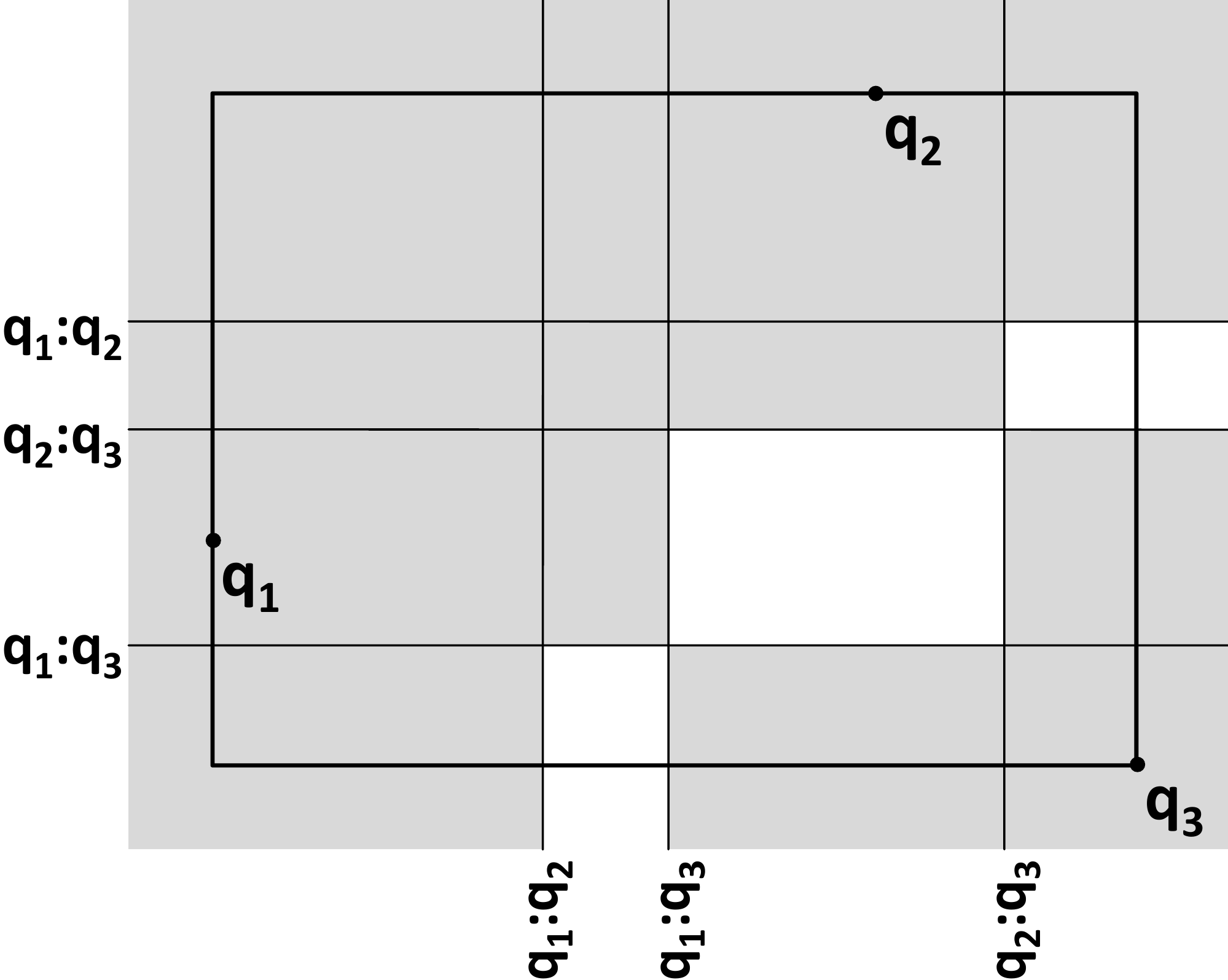}}
    \hfill
    \subfigure[$|Q|=4$\label{subfig:skyline_3b}]{\includegraphics[width=0.48\columnwidth]{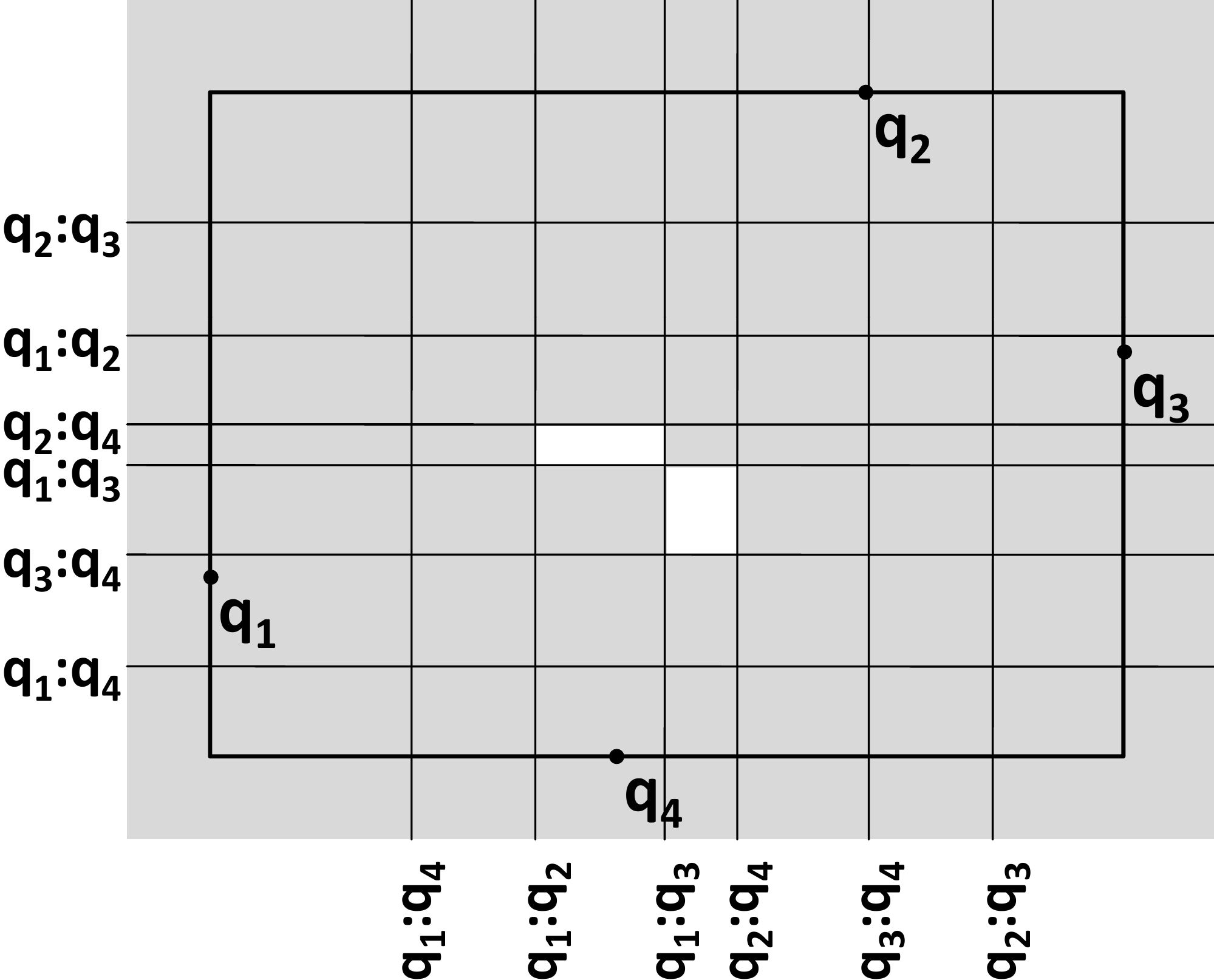}}
\vspace{-3mm}
    \caption{Pruning regions of query objects}
    \label{fig:skyline_3}
\end{figure}

\subsubsection*{Object Based Pruning}
For any candidate object $o$ that is not pruned during the
query-based filter step, we need to check if there exists any
other database object $o^\prime$ which dominates some $q\in Q$
with respect to $o$. If we can find such an $o^\prime$, then $o$
cannot have $q$ in its dynamic skyline and thus $o$ can be pruned
for the candidate list.

\subsubsection*{Refinement}
In the refinement step, each candidate $c$ is verified by
performing a dynamic skyline query using $c$ as query point. The
result should contain all $q_i \in Q$, otherwise $c$ is dropped.
The refinement step can be improved by the following observation
(cf. Figure \ref{fig:refine}): for checking if a candidate $o_1$
has all $q_i \in Q$ in its dynamic skyline, it suffices to check
whether there exists at least one other object $o_j \in \DB$ which
prevents one $q_i$ from being part of the skyline. Such an object
has to lie within the MBR defined by $q_i$ and $q_i'$ (which is
obtained by reflecting $q_i$ through $o_1$). If no point is within
the $|Q|$ MBRs, then $o_1$ is reported as result.

\subsection{Algorithm}
\label{sec:idsq-algo}

The algorithm for $I\mbox{-}DSQ$ is shown as Algorithm
\ref{alg:isnn} in Appendix \ref{app:alg}.
During the filter steps, the tree is traversed in a best first
manner, where entries are accessed by their minimal distance
(MinDist) to the farthest query object. For each entry $e$ we
check if $e$ is completely contained in the union of pruning
regions defined by all pairs of queries $(q_i, q_j)\in Q$; i.e.,
$\bigcup_{(q_i, q_j)\in Q}PR_{q_i}(q_j)$.
In addition, for each accessed database object $o_i$ and each
query object $q_j$, the pruning region is extended by
$PR_{q_j}(o_i)$. Analogously to the $Ik\mbox{-}NN$ case, lists for
the candidates and pruned entries are maintained. The pruning
conditions of Appendix \ref{appn:sky_2d} are used wherever
applicable to reduce the computational cost. Finally, the
remaining candidates are refined using the refinement strategy
described in Section \ref{subsec:FrameworkImplementation_skyline}.

\begin{figure}[t]
  \centering
  \includegraphics[width=0.2\textwidth]{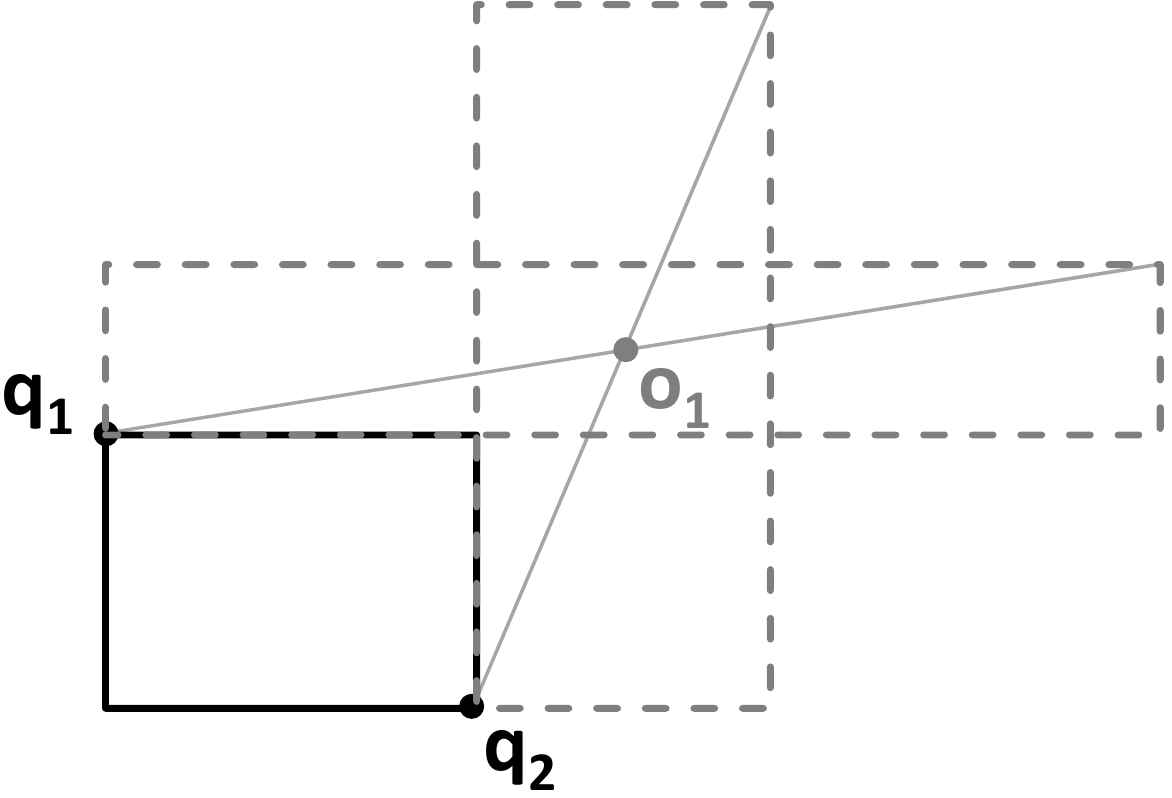}
\vspace{-3mm}
  \caption{Refinement area defined by $q_1, q_2$ and $o_1$}
  \label{fig:refine}
\end{figure}

\vspace{-3mm}
\section{Experiments}
\label{sec:experiments}

For each of the inverse query predicates discussed in the paper,
we compare our proposed solution based on multi-query-filtering
(MQF), with a naive approach (Naive) and another intuitive
approach based on single-query-filtering (SQF). The naive
algorithm (Naive) computes the corresponding reverse query for
every $q\in Q$ and intersects their results iteratively. To be
fair, we terminated Naive as soon as the intersection of results
obtained so far is empty. SQF
 performs a R$k$NN (R$\eps$-range / RDS) query using one randomly
chosen query point as a filter step to obtain candidates. For each
candidate an $\eps$-range ($k$NN / DS) query is issued and the
candidate is confirmed if all query points are contained in the
result of the query (refinement step). Since the pages accessed by
the queries in the refinement step are often redundant, we use a
buffer to further boost the performance of SQF. We employed
$R^*$-trees (\cite{BecKriSchSee90}) of pagesize 1Kb to index the
datasets used in the experiments. For each method, we present the
number of page accesses and runtime. To give insights into the
impact of the different parameters on the cardinality of the
obtained results we also included this number to the charts. In
all settings we performed 1000 queries and averaged the results.
All methods were implemented in Java 1.6 and tests were run on a
dual core (3.0 Ghz) workstation with 2 GB main memory having
windows xp as OS.
The performance evaluation settings are summarized below; the
numbers in \textbf{bold} correspond to the default settings:

\begin{center}
  \begin{tabular}{| l | c |}
    \hline
    \textbf{parameter} & \textbf{values}\\ \hline \hline
    db size & 100000 (synthetic), 175812 (real) \\ \hline
    dimensionality & 2, \textbf{3}, 4, 5 \\ \hline
    $\eps$ & 0.04, 0.05, \textbf{0.06}, 0.07, 0.08, 0.09, 0.1 \\ \hline
    $k$ & 50, \textbf{100}, 150, 200, 250 \\ \hline
    \# inverse queries  & 1, 3, 5, \textbf{10}, 15, 20, 25, 30, 35 \\ \hline
    query extent  &  0.0001, 0.0002, 0.0003, \textbf{0.0004}, 0.0005,
    0.0006\\
    \hline
  \end{tabular}
\end{center}
The experiments were performed using several datasets:
\begin{itemize}
  \item Synthetic Datasets: Clustered and uniformly distributed objects
  in $d$-dimensional space.
  \item Real Dataset: Vertices in the Road Network of North America \footnote{Obtained and modified from
\emph{http://www.cs.fsu.edu/$\sim$lifeifei/SpatialDataset.htm}.
The original source is the \emph{Digital Chart of the World Server
(http://www.maproom.psu.edu/dcw/)}.}. Contains 175,812
two-dimensional points.
\end{itemize}
The datasets were normalized, such that their minimum bounding box
is $[0,1]^d$. For each experiment, the query objects $Q$ for the
inverse query were chosen randomly from the database. Since the
number of results highly depends on the distance between inverse
query points (in particular for the $I\eps\mbox{-}RQ$ and
$Ik\mbox{-}NNQ$) we introduced an additional parameter called
\textit{extent} to control the maximal distance between the query
objects. The value of \textit{extent} corresponds to the volume
(fraction of data space) of a cube that minimally bounds all
queries. For example in the 3D space the default cube would have a
side length of 0.073. A small \textit{extent} assures that the
queries are placed close to each other generally resulting in more
results. In this section, we show the  behavior of all three
algorithms on the uniform datasets only. Experiments on the other
datasets can be found in Appendix~\ref{appn:exp}.

\subsection{Inverse $\eps$-Range Queries}
\label{subsec:experiments_range} We first compared the algorithms
on inverse $\eps$ range queries. Figure \ref{fig:eps-e} shows that
the relative speed of our approach (MQF) compared to Naive grows
significantly with increasing $\eps$; for Naive, the cardinality
of the result set returned by each query depends on the space
covered by the hypersphere which is in $O(\eps^d)$. In contrast,
our strategy applies spatial pruning early, leading to a low
number of page accesses.  SQF  is faster than Naive, but still
needs around twice as much page accesses as MQF. MQF  performs
even better with an increasing number of query points in $Q$ (as
depicted in Figure \ref{fig:eps-q}), as in this case the
intersection of the ranges becomes smaller. The I/O cost of SQF in
this case remains almost constant which is mainly due to the use
of the buffer which lowers the page accesses in the refinement
step. Similar results can be observed when varying the database
size (Figure \ref{fig:eps-scale}) and query extent (Figure
\ref{fig:eps-ext}). For the data dimensionality experiment
(Figure \ref{fig:eps-dim}) we set epsilon such that the sphere
defined by $\eps$ covers always the same percentage of the
dataspace, to make sure that we still obtain results when
increasing the dimensionality (note, however, that the number of
results is still unsteady). Increasing dimensionality has a
negative effect on performance.
However MQF  copes better with  data dimensionality than the other
approaches. In a last experiment (see Figure \ref{fig:eps-cpu}) we
compared the computational costs of the algorithms. Even though
Inverse Queries are I/O bound, MQF is still preferable for
main-memory problems.

\begin{figure}[h]
    \centering
    \subfigure[I/O cost w.r.t. $\eps$.]{
        \label{fig:eps-e}
        \includegraphics[width = 0.3\columnwidth]{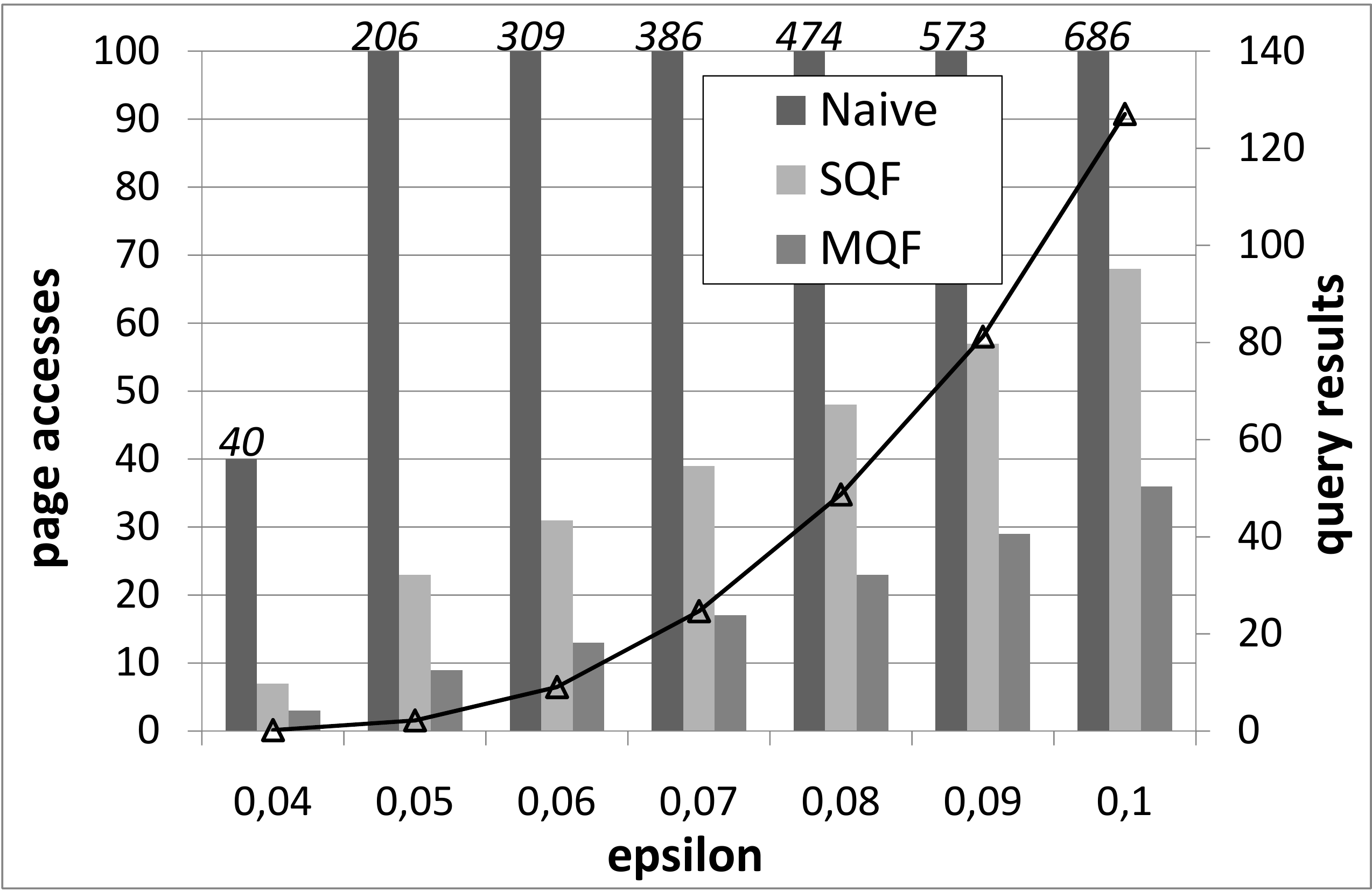}
    }
    \subfigure[I/O cost w.r.t. $|Q|$.]{
        \label{fig:eps-q}
        \includegraphics[width = 0.3\columnwidth]{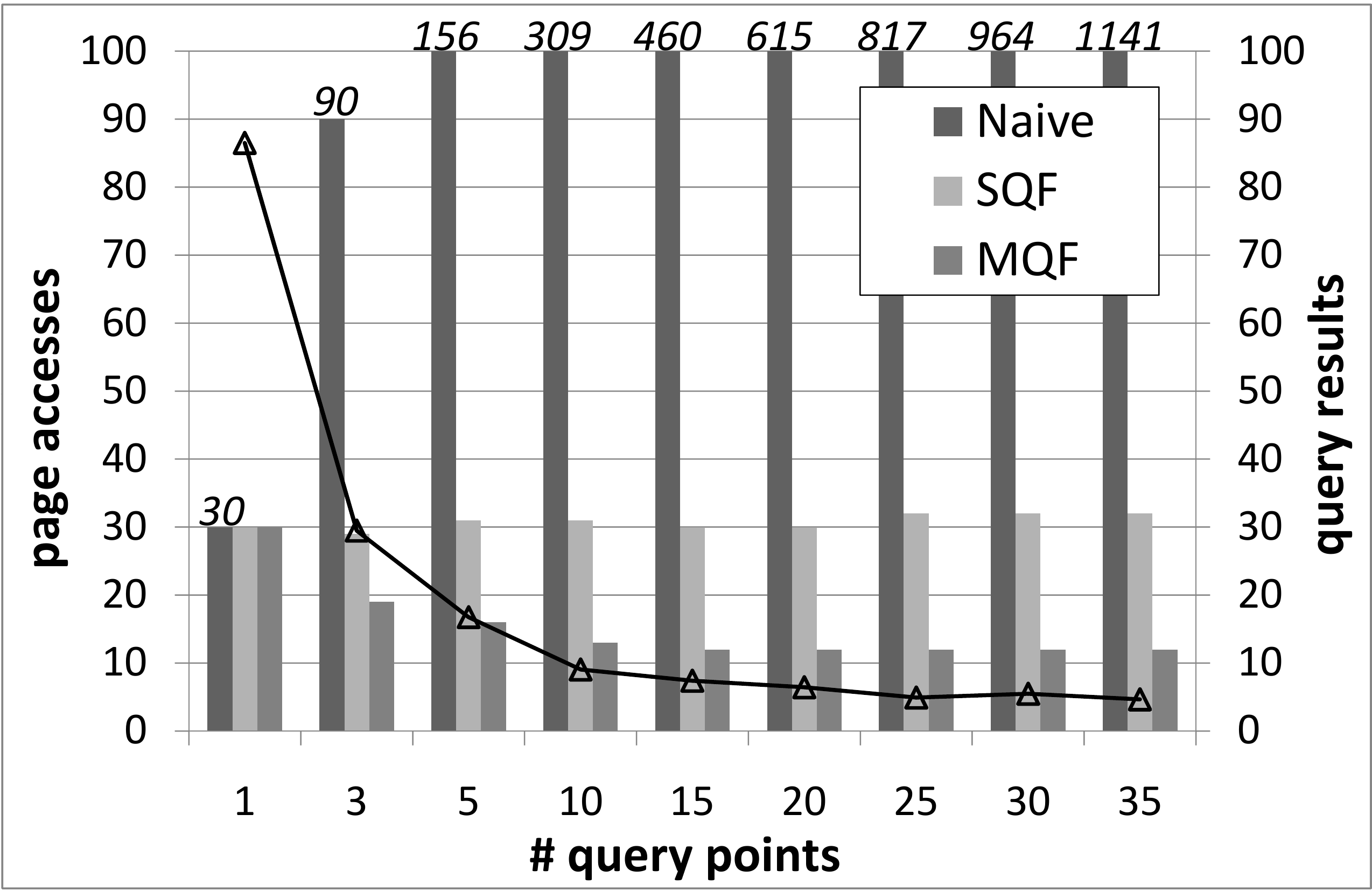}
    }
    \subfigure[I/O cost w.r.t. $d$.]{
        \label{fig:eps-dim}
        \includegraphics[width = 0.3\columnwidth]{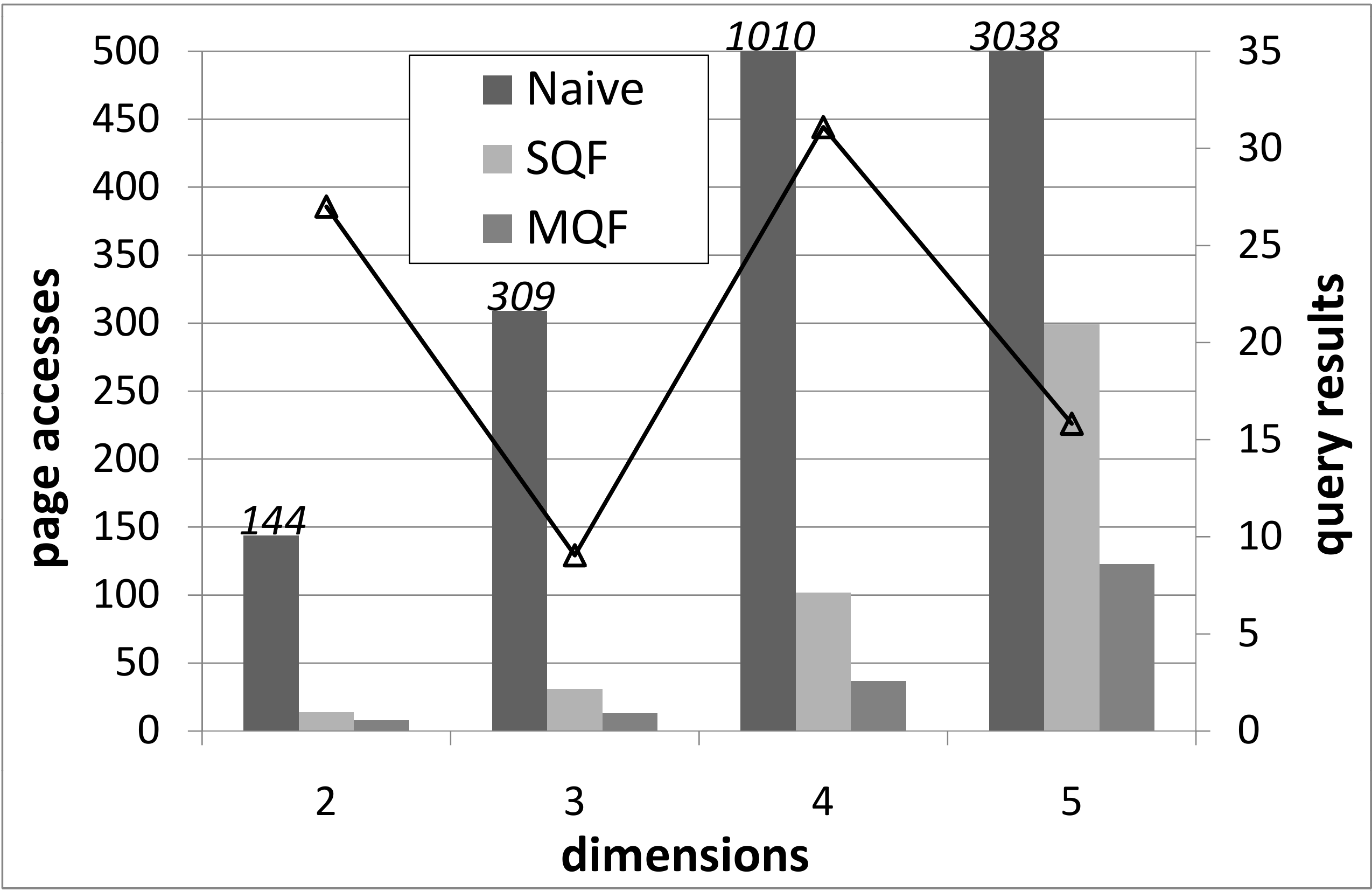}
    }
    \subfigure[I/O cost w.r.t. extent.]{
        \label{fig:eps-ext}
        \includegraphics[width = 0.3\columnwidth]{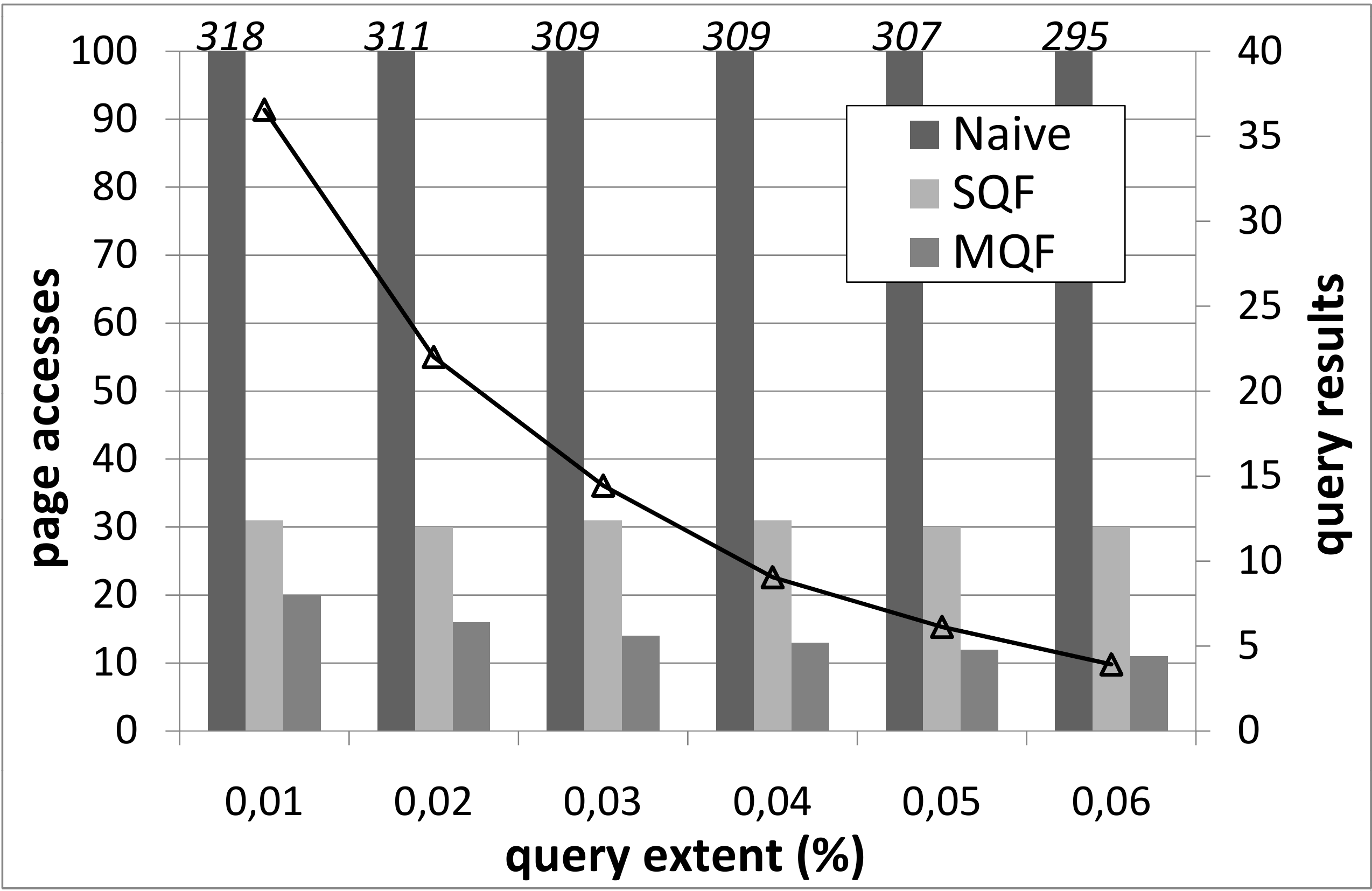}
    }
    \subfigure[I/O cost w.r.t. $|\DB|$.]{
        \label{fig:eps-scale}
        \includegraphics[width = 0.3\columnwidth]{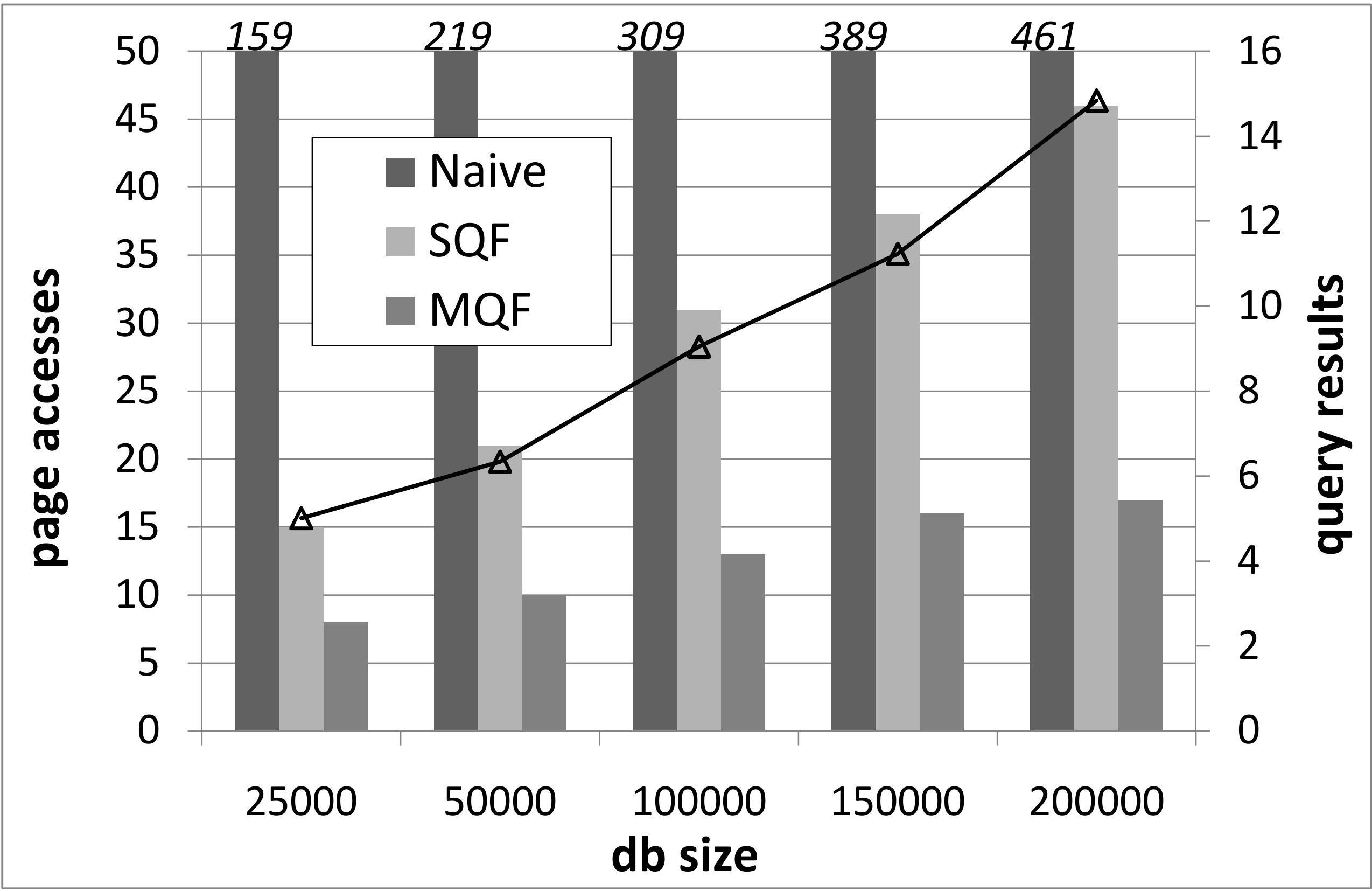}
    }
    \subfigure[CPU cost w.r.t. $|Q|$.]{
        \label{fig:eps-cpu}
        \includegraphics[width = 0.3\columnwidth]{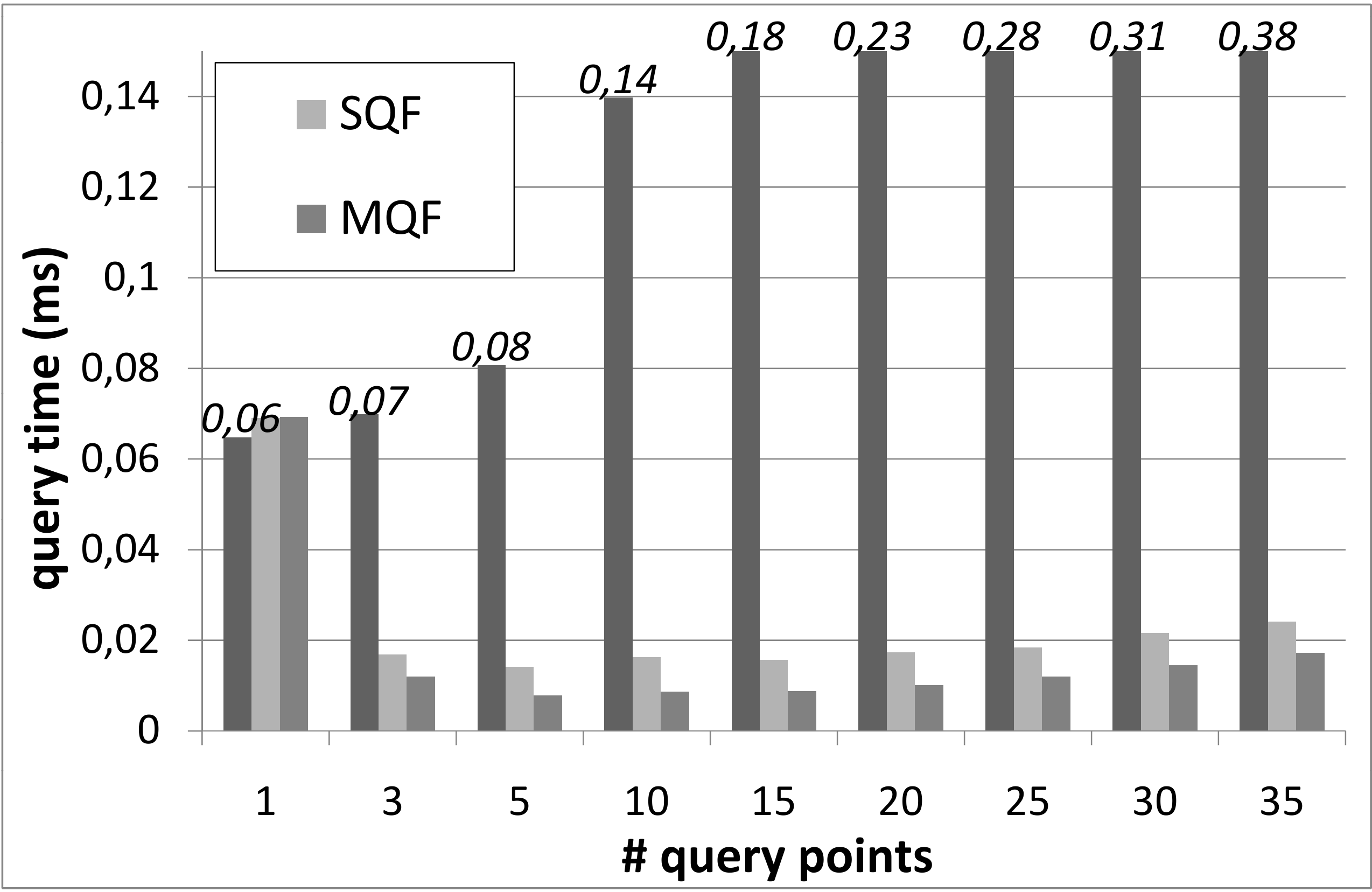}
    }
\vspace{-3mm}
    \caption{$I\eps\mbox{-}Q$ algorithms on uniform dataset}
\end{figure}

\subsection{Inverse $k$NN Queries}
\label{subsec:experiments_knn}

The three approaches for inverse $k$NN search show a similar
behavior as those for the I$\eps$-RQ. Specifically the behavior
for varying $k$ (Figure \ref{fig:knn-k}) is comparable to varying
$\eps$ and increasing the query number (Figure \ref{fig:knn-q})
and the query extent (Figure \ref{fig:knn-ext}) yields the
expected results. When testing on datasets with different
dimensionality, the advantage of MQF becomes even more significant
when $d$ increases (cf. Figure \ref{fig:knn-dim}). In contrast to
the I$\eps$-RQ results for I$k$NN queries the page accesses of MQF
decrease (see Figure \ref{fig:knn-scale}) when the database size
increases (while the performance of SQF still degrades). This can
be explained by the fact, that the number of pages accessed is
strongly correlated with the number of obtained results. Since for
the I$\eps$-RQ the parameter $\eps$ remained constant, the number
of results increased with a larger database. For I$k$NN the number
of results in contrast decreases and so does the number of
accessed pages by MQF. As in the previous set of experiments MQF
has also the lowest runtime (Figure \ref{fig:knn-cpu}).

\begin{figure}[h]
    \centering
    \subfigure[I/O cost w.r.t. $k$.]{
        \label{fig:knn-k}
        \includegraphics[width = 0.3\columnwidth]{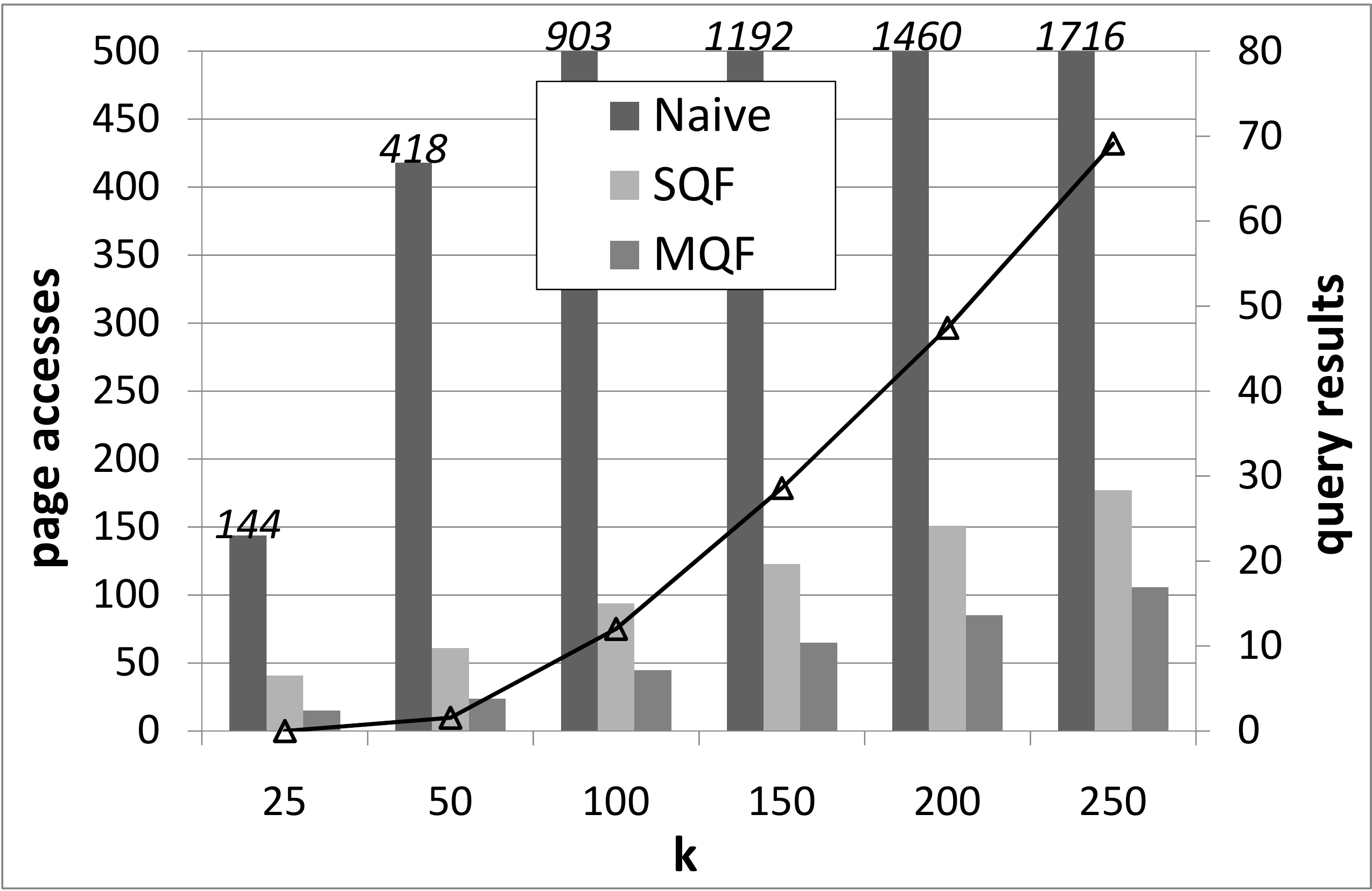}
    }
    \subfigure[I/O cost w.r.t. $|Q|$.]{
        \label{fig:knn-q}
        \includegraphics[width = 0.3\columnwidth]{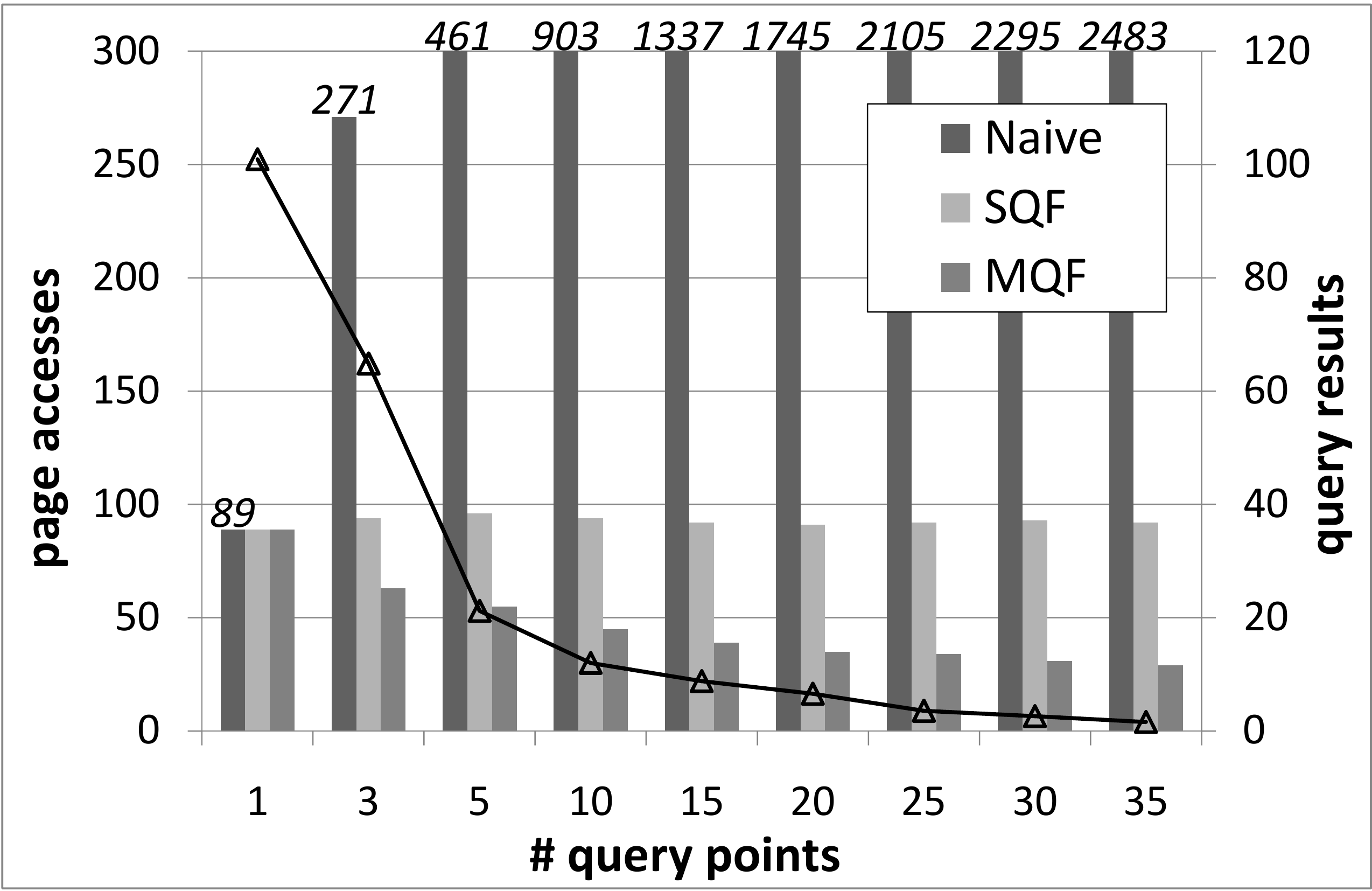}
    }
    \subfigure[I/O cost w.r.t. $d$.]{
        \label{fig:knn-dim}
        \includegraphics[width = 0.3\columnwidth]{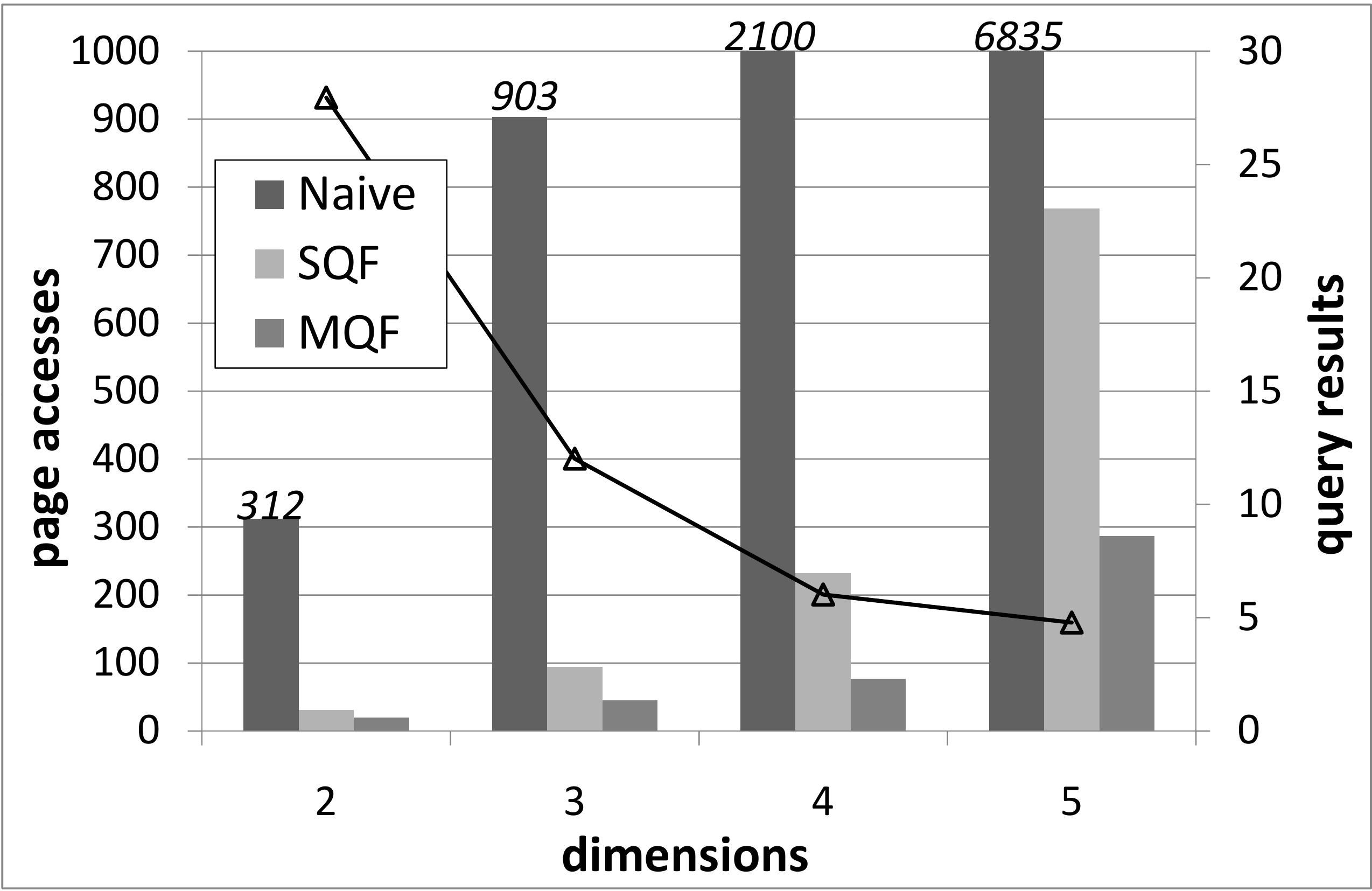}
    }
    \subfigure[I/O cost w.r.t. extent.]{
        \label{fig:knn-ext}
        \includegraphics[width =
        0.3\columnwidth]{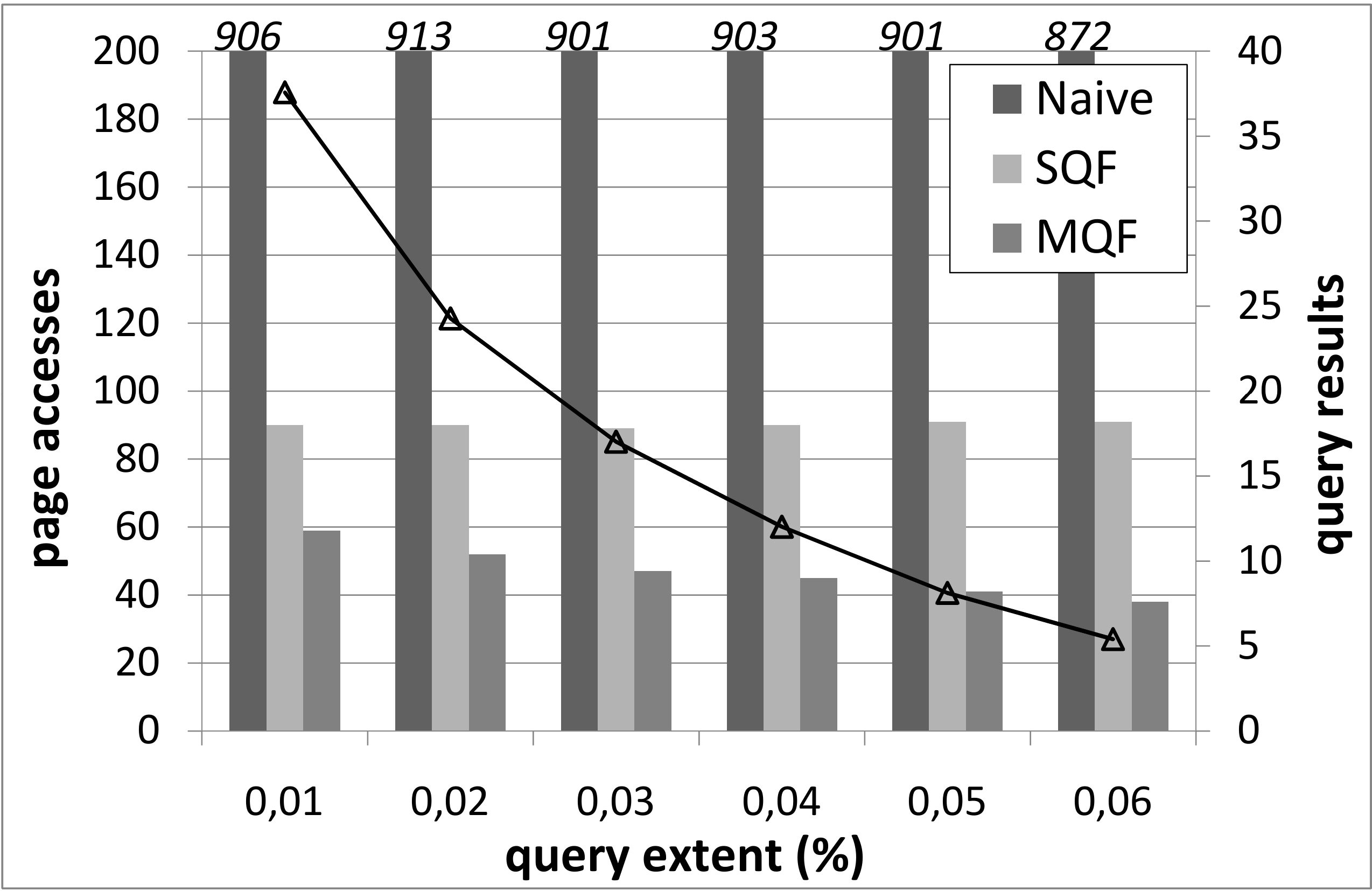} }
    \subfigure[I/O cost w.r.t. $|\DB|$.]{
        \label{fig:knn-scale}
        \includegraphics[width = 0.3\columnwidth]{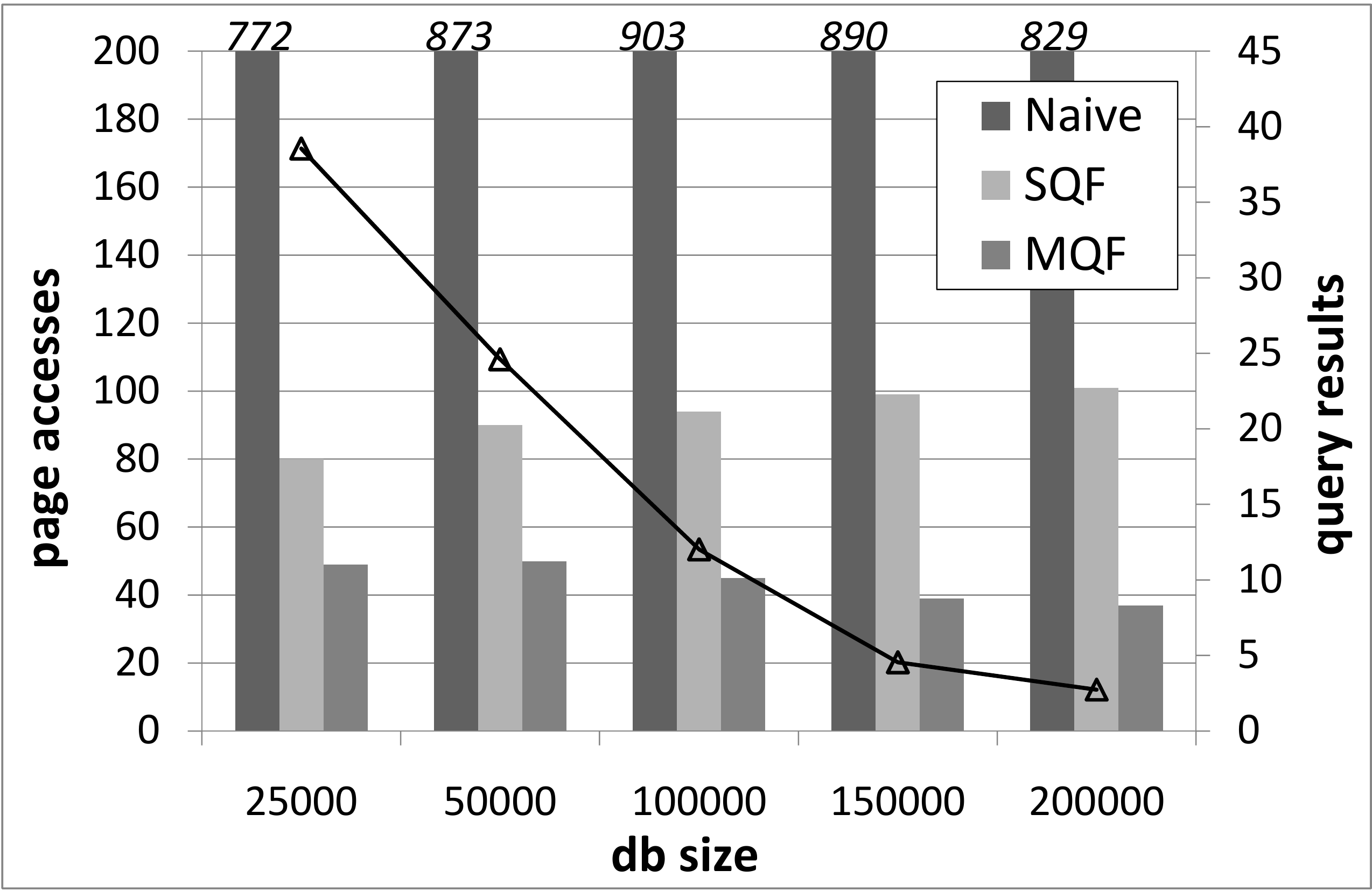}
    }
    \subfigure[CPU cost w.r.t. $|Q|$.]{
        \label{fig:knn-cpu}
        \includegraphics[width = 0.3\columnwidth]{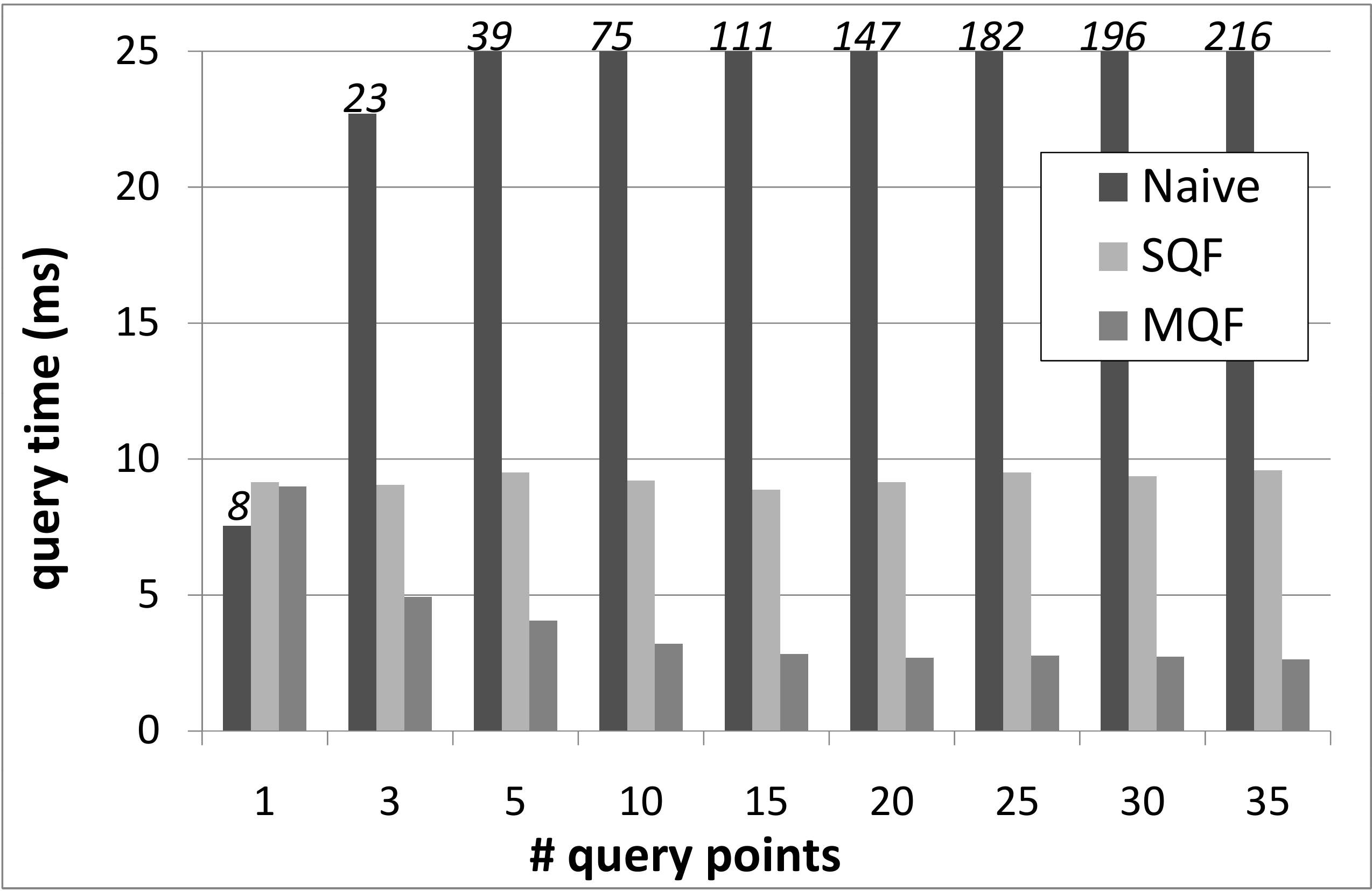}
    }
\vspace{-3mm}
    \caption{$Ik\mbox{-}NNQ$ algorithms on uniform dataset}
\end{figure}

\subsection{Inverse Dynamic Skyline Queries}
\label{subsec:experiments_skyline} Similar results as for the
$Ik\mbox{-}NNQ$ algorithm are obtained for the inverse dynamic
skyline queries ($I\mbox{-}DSQ$). Increasing the number of queries
in $Q$ reduces the cost of the MQF approach while the costs of the
competitors  increase. Since the average number of results
approaches 0 faster than for the other two types of inverse
queries we choose 4 as the default size of the query set. Note
that the number of results for $I\mbox{-}DSQ$ intuitively
increases exponentially with the dimensionality of the dataset
(cf. Figure \ref{fig:skyline-dim}), thus this value can be much
larger for higher dimensional datasets. Increasing the distance
among the queries does not affect the performance as seen in
Figure \ref{fig:skyline-ext}; regarding the number of results in
contrast to inverse range- and $k$NN-queries, inverse dynamic
skyline queries are almost not sensitive to the distance among the
query points. The rationale  is that dynamic skyline queries can
have results which are arbitrary far away from the query point,
thus the same holds for the inverse case. The same effect can be
seen for increasing database size (cf. Figure
\ref{fig:skyline-scale}). The advantage of MQF remains constant
over the other two approaches. Like inverse range- and
$k$NN-queries, $I\mbox{-}DSQ$ are I/O bound (see Figure
\ref{fig:skyline-cpu}), but MQF is still preferable for
main-memory problems.

\begin{figure}[h]
    \centering
    \subfigure[I/O cost w.r.t. $|Q|$.]{
        \label{fig:skyline-q}
        \includegraphics[width = 0.3\columnwidth]{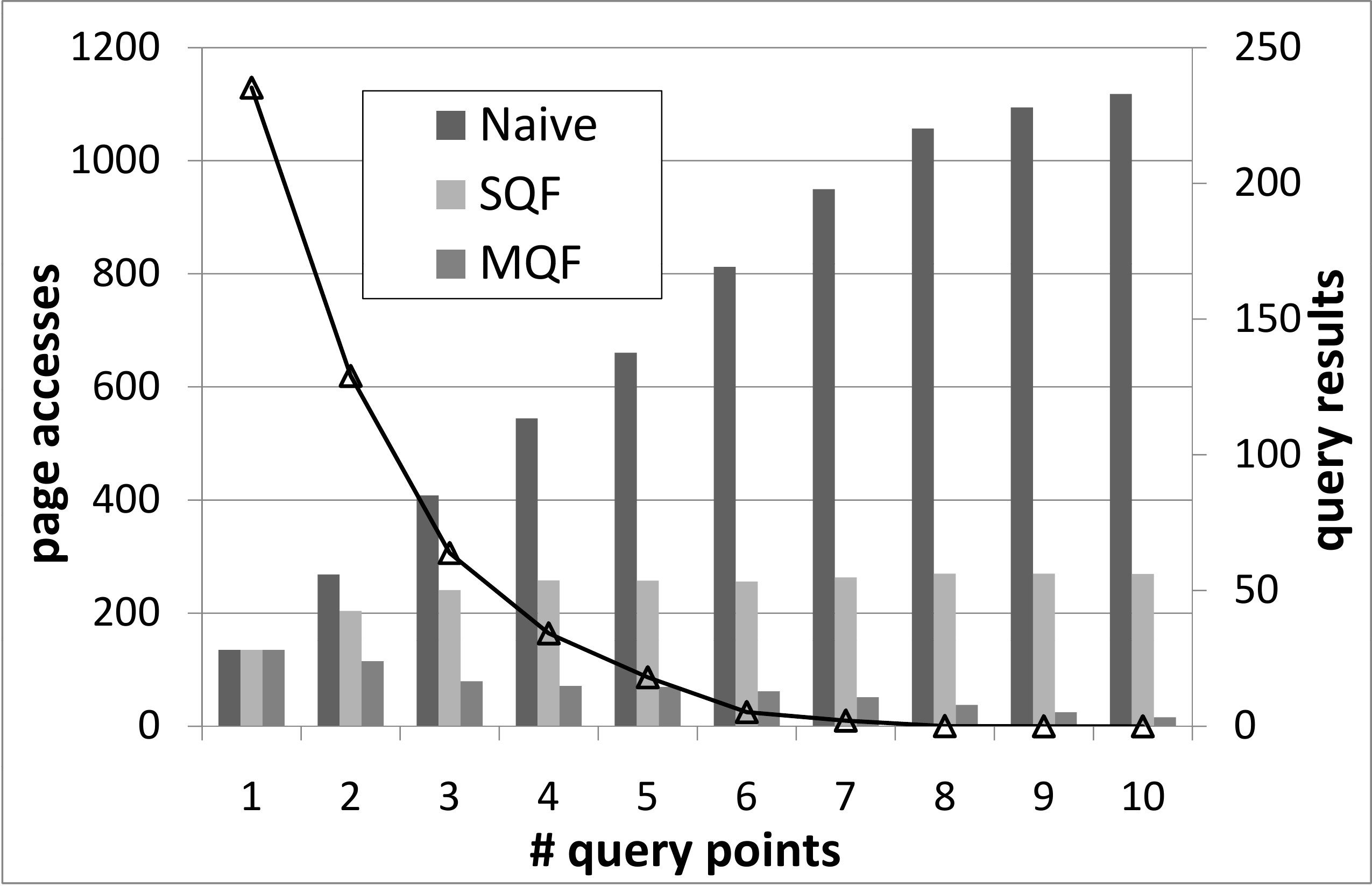}
    }
    \subfigure[I/O cost w.r.t. $d$.]{
        \label{fig:skyline-dim}
        \includegraphics[width = 0.3\columnwidth]{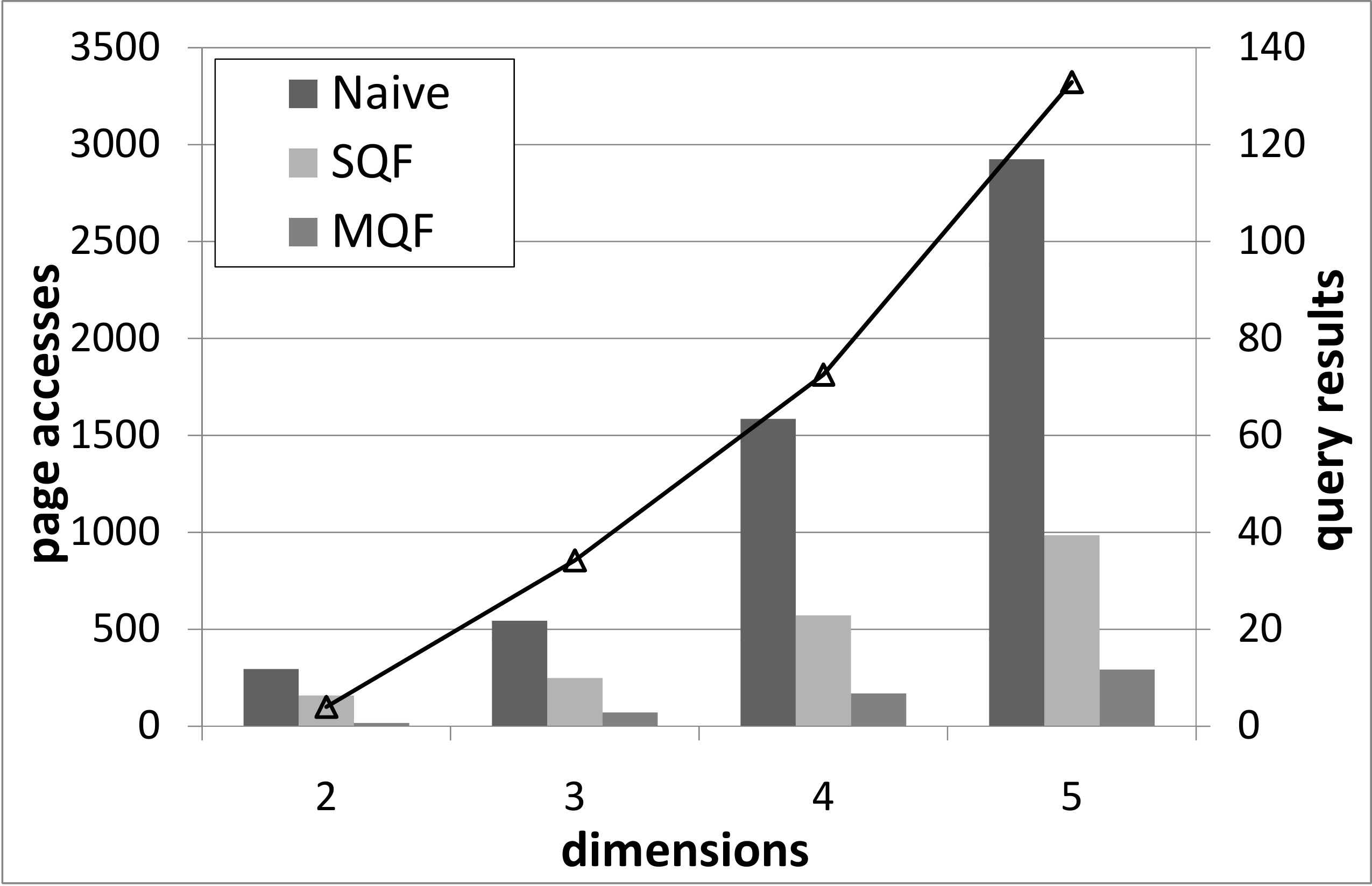}
    }
    \subfigure[I/O cost w.r.t. extent.]{
        \label{fig:skyline-ext}
        \includegraphics[width =
        0.3\columnwidth]{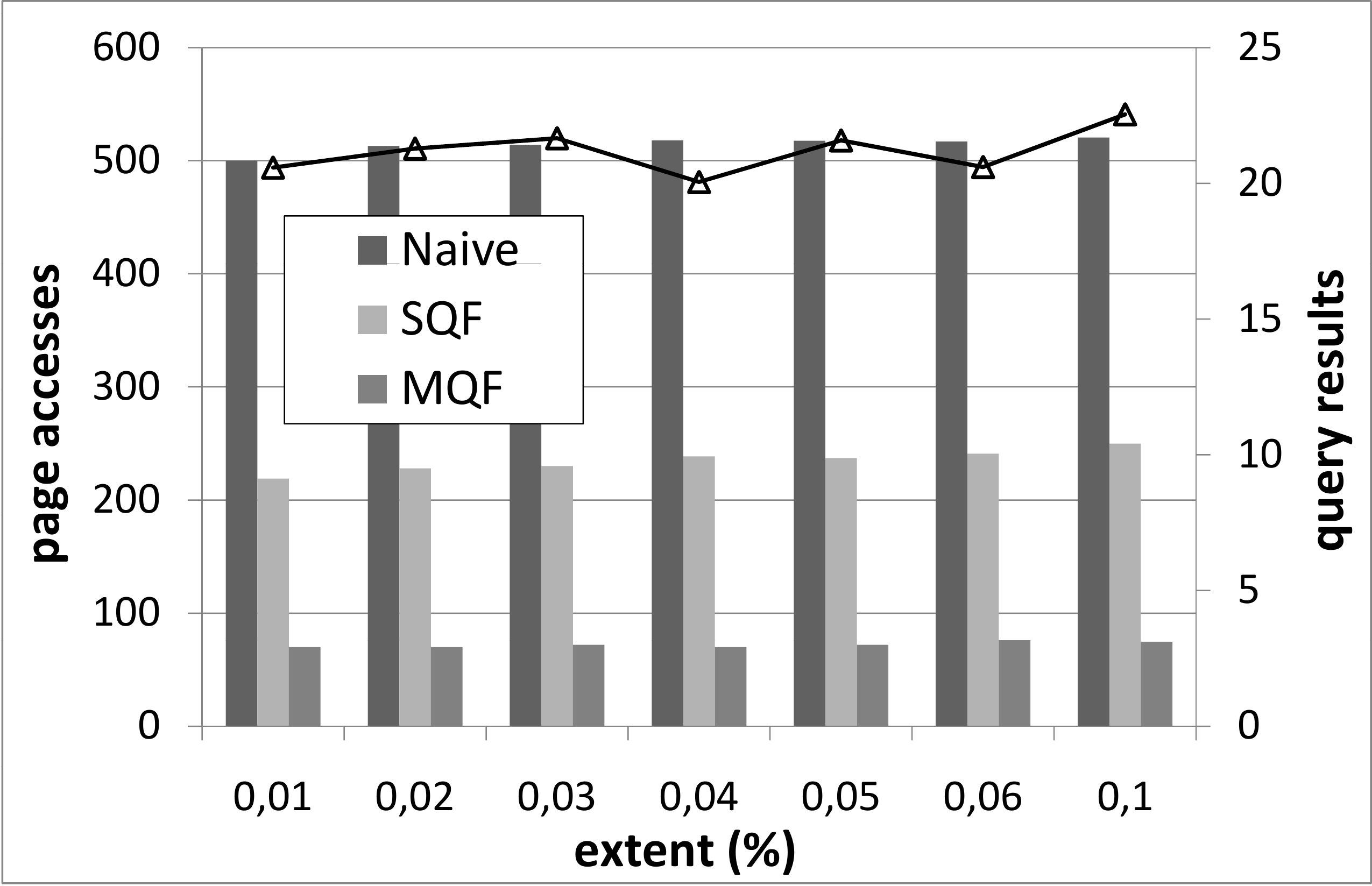} }
    \subfigure[I/O cost w.r.t. $|\DB|$.]{
        \label{fig:skyline-scale}
        \includegraphics[width = 0.3\columnwidth]{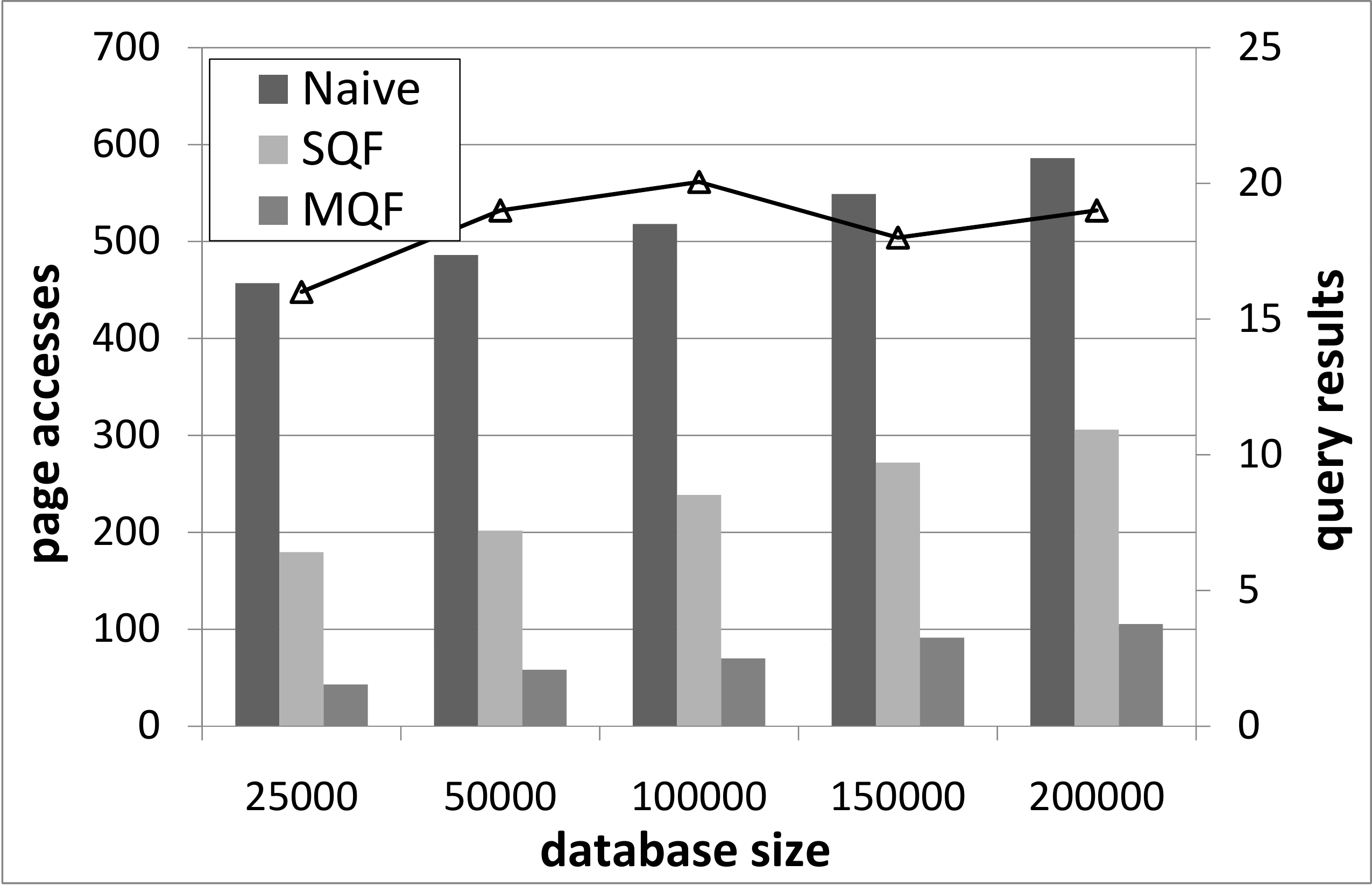}
    }
    \subfigure[CPU cost w.r.t. $|Q|$.]{
        \label{fig:skyline-cpu}
        \includegraphics[width = 0.3\columnwidth]{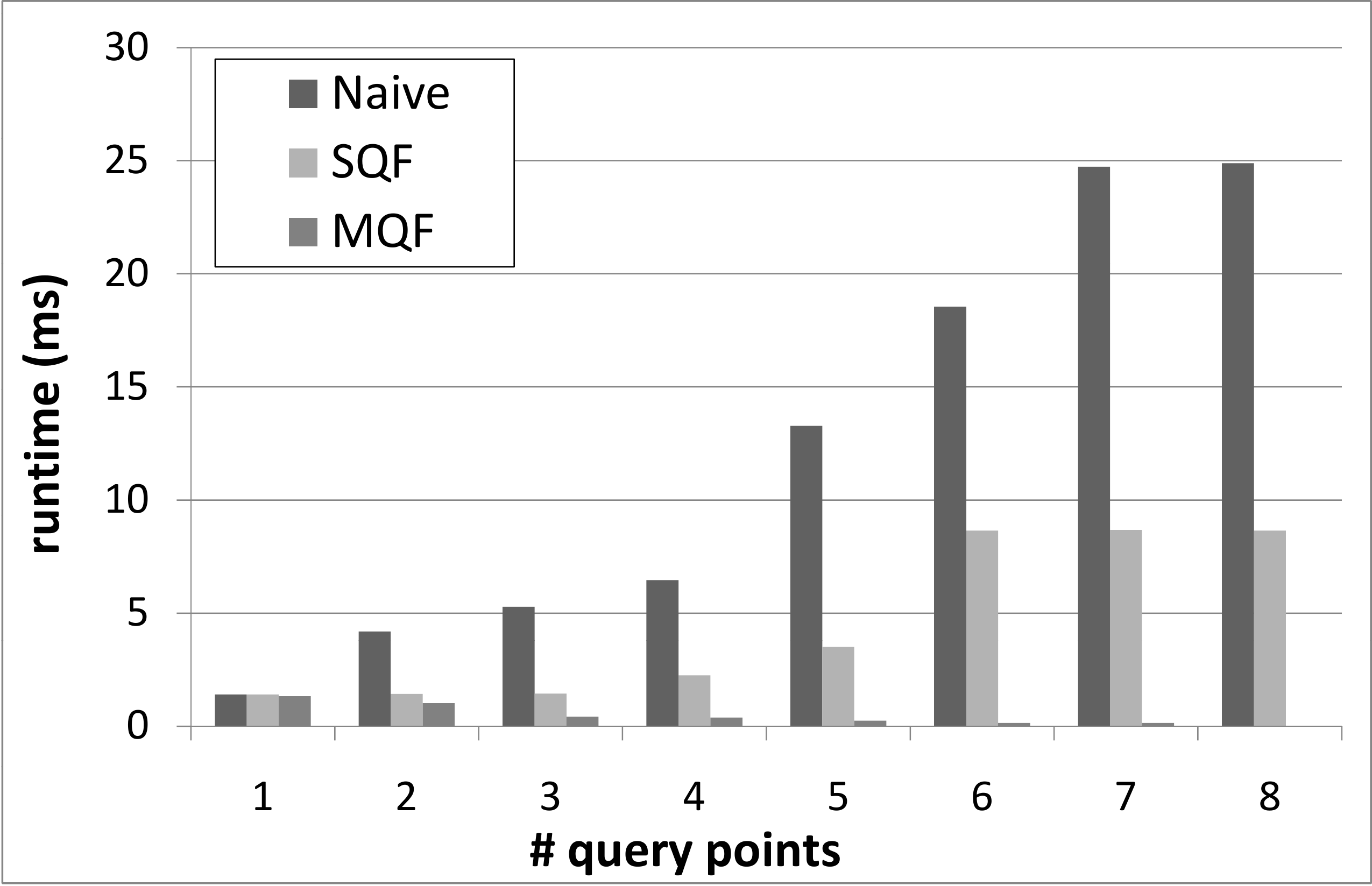}
    }
\vspace{-3mm}
    \caption{$I\mbox{-}DSQ$ algorithms on uniform dataset}
\end{figure}

%
%

\vspace{-3mm}
\section{Conclusions}
\label{sec:conclusion} In this paper we introduced and formalized
the problem for inverse query processing. We proposed a general
framework to such queries using a filter-refinement strategy and
applied this framework to the problem of answering inverse
$\eps$-range queries, inverse $k$NN queries and inverse dynamic
skyline queries. Our experiments show that our framework
significantly reduces the cost of inverse queries compared to
straightforward approaches. In the future, we plan to extend our
framework for inverse queries with different query predicates,
such as top-$k$ queries. In addition, we will investigate inverse
query processing in the bi-chromatic case, where queries and
objects are taken from different datasets. Another interesting
extension of inverse queries is to allow the user not only to
specify objects that have to be in the result, but also objects
that must not be in the result.

\newpage

\section*{Acknowledgements}
\begin{footnotesize}
This work was supported by a grant from the Germany/Hong
Kong Joint Research Scheme sponsored by the Research Grants Council of Hong Kong (Reference No. G\_HK030/09) and the Germany Academic Exchange Service of Germany (Proj. ID 50149322)
\end{footnotesize}

\vspace{-3mm}
\bibliographystyle{abbrv}
\bibliography{abbrev,literature}

\appendix

\section{Proofs of Lemmas}\label{appendix:proofs}
\subsection{Proof of Lemma \ref{lemma:IkNNQ_pruning_criterion_I}}
\begin{proof}
 By contradiction: Let $q\in Q$ such that:
$$
o\not\in Ik'\mbox{-}NNQ(\{q\})~\textrm{in}~\DB'\cup\{q\}.
$$
That is, $o$ does not have $q$ as one of its $k'$-nearest
neighbors in the database $\DB'\cup\{q\}$ containing all (and
only) non-query objects and $q$. This implies that there exist at
least $k'$ objects $o^\prime\in \DB'\cup\{q\}$ such that
$dist(o,o^\prime)<dist(o,q)$. Let $q^{ref}\in Q$ be the query
farthest from $o$. Thus, for each of the $Q-1$ objects
$q^{\prime}\in Q, q\neq q^{ref}$ it holds that
$dist(o,q)<dist(o,q^{ref})$ and for each of the $k'$ object
$o^{\prime}$ it holds that $dist(o,o^{\prime})<dist(o,q)\leq
dist(o,q^{ref})$. Thus there exist at least $Q-1+k'=k$ objects
that are closer to $o$ than $q^{ref}$. This implies that
$$
q^{ref}\not\in kNN(o)~\textrm{in}~\DB
$$
and thus
$$
o\not\in Ik\mbox{-}NNQ(Q)~\textrm{in}~\DB.
$$
Therefore, we have shown that
$$
\neg (o\in Ik'\mbox{-}NNQ(\{q\})~\textrm{in}~\DB'\cup\{q\})
\Rightarrow \neg (o\in Ik\mbox{-}NNQ(Q)~\textrm{in}~\DB)
$$
which is equivalent to
$$
o\in Ik\mbox{-}NNQ(Q)~\textrm{in}~\DB \Rightarrow \forall q\in Q:
o\in Ik'\mbox{-}NNQ(\{q\})~\textrm{in}~\DB'\cup\{q\}
$$
\end{proof}
As a side note: The counter direction $\mathbf{\Leftarrow:}$ also
holds, but is not required for pruning.
\begin{proof}
Assume that
$$
\forall q\in Q: o\in
Ik'\mbox{-}NNQ(\{q\})~\textrm{in}~\DB'\cup\{q\}
$$
$$
\mbox{where }k'=k-|Q|+1.
$$
Then, for each $q\in Q$ there exists at most $k'-1$ objects
$o^\prime \in \DB \setminus Q$ such that
$dist(o,o^\prime)<dist(o,q)$. In addition, there exist at most
$|Q|-1$ query objects $q^\prime\in Q\setminus \{q\}$ such that
$dist(o,q^\prime)<dist(o,q)$. Thus, there exist at most
$k'-1+|Q|-1=k-|Q|+1-1+|Q|-1=k-1$ objects which are closer to $o$
than to $q$, thus $q$ must be a $k$NN of $o$. Since this holds for
each $q\in Q$ we get
$$
o\in Ik\mbox{-}NNQ(Q)~\textrm{in}~\DB
$$
\end{proof}

\subsection{Proof of Lemma \ref{lemma:IkNNQ_query_point_filter}}
\begin{proof}
Due to Lemma \ref{lemma:IkNNQ_pruning_criterion_I} we have
$$
o\in Ik\mbox{-}NNQ(Q)~\textrm{in}~\DB \Rightarrow \forall q\in Q:
o\in Ik'\mbox{-}NNQ(\{q\})~\textrm{in}~\DB'\cup\{q\},
$$
Again, let $q^{ref}$ be the query point with the largest distance
to $o$. Since $q^{ref}\in Q$ we get:
$$
\forall q\in Q: o\in
Ik'\mbox{-}NNQ(\{q\})~\textrm{in}~\DB'\cup\{q\} \Rightarrow $$
$$o\in Ik'\mbox{-}NNQ(\{q^{ref}\})~\textrm{in}~\DB'\cup\{q^{ref}\}
$$
\end{proof}

%

\begin{figure}
    \centering
    \includegraphics[width = 0.5\columnwidth]{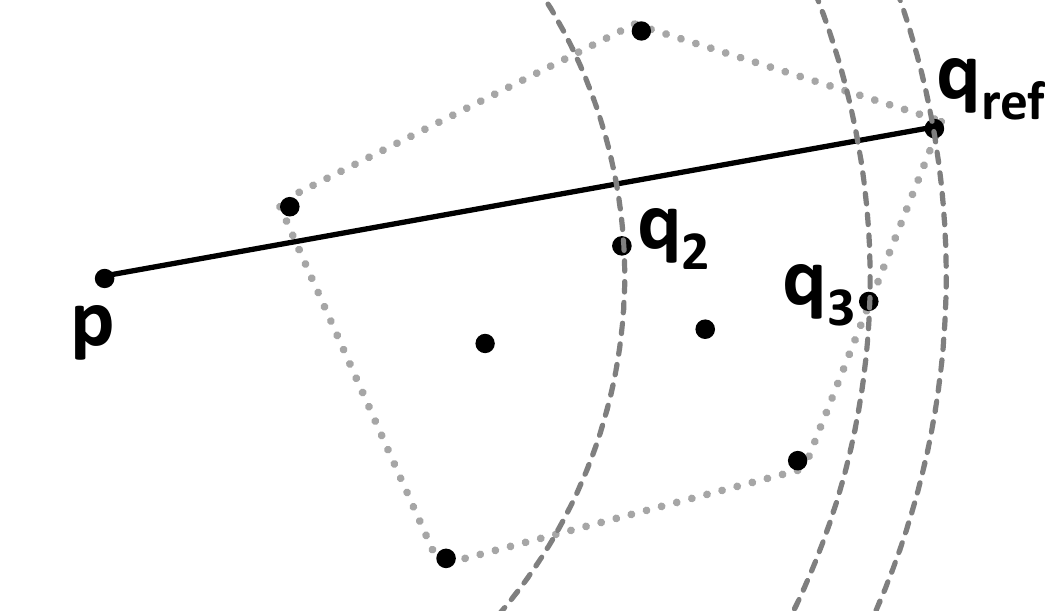}
    \caption{Illustration of Corollary \ref{lemma:convex_hull_property}}
    \label{fig:convexhull}
\end{figure}

\subsection{Proof of Lemma \ref{lemma:convex_hull_criterion_I}}
We first require the following corollary:
\begin{corollary}
\label{lemma:convex_hull_property} Let $Q\in\RR^d$ be a set of
points and $Q'\subseteq Q$ be the vertices of the
convex hull of $Q$
in $\RR^d$. Then, for each point $o\in\RR^d$, the farthest point
in $Q$ to $o$ must be in $Q'$ as well.
\end{corollary}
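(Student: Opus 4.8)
The statement to prove is Corollary~\ref{lemma:convex_hull_property}: for a finite point set $Q\subseteq\RR^d$ with convex-hull vertex set $Q'\subseteq Q$, and any point $o\in\RR^d$, the farthest point of $Q$ from $o$ lies in $Q'$.

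\textbf{Proof plan.} The plan is to argue that the function $q\mapsto d(o,q)^2$, which we maximize over the finite set $Q$, is a convex function of $q$, and that a convex function on a polytope attains its maximum at a vertex. Concretely, first I would observe that $\max_{q\in Q} d(o,q) = \max_{q\in \mathrm{conv}(Q)} d(o,q)$: enlarging the feasible set from the finite set $Q$ to its convex hull cannot decrease the maximum, and I claim it cannot increase it either. To see the latter, take any $x\in\mathrm{conv}(Q)$ and write $x=\sum_i \lambda_i q_i$ with $q_i\in Q$, $\lambda_i\ge 0$, $\sum_i\lambda_i=1$. Since $\|\cdot\|$ is a norm (hence convex) and translation by $-o$ is affine, $d(o,x)=\|x-o\|=\bigl\|\sum_i\lambda_i(q_i-o)\bigr\|\le \sum_i\lambda_i\|q_i-o\|\le \max_i\|q_i-o\|=\max_{q\in Q}d(o,q)$. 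Hence the maximum over the hull equals the maximum over $Q$ itself.

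\textbf{Reducing to vertices.} Next I would use that $\mathrm{conv}(Q)=\mathrm{conv}(Q')$, since every point of $Q$ is a convex combination of the hull vertices $Q'$ (this is the defining property of $Q'$ and is the content already implicitly used in Lemma~\ref{lemma:convex_hull_criterion_I}). Combining with the previous paragraph applied to $Q'$ in place of $Q$, we get
$$
\max_{q\in Q} d(o,q) \;=\; \max_{x\in\mathrm{conv}(Q)} d(o,q) \;=\; \max_{x\in\mathrm{conv}(Q')} d(o,x) \;=\; \max_{q\in Q'} d(o,q).
$$
So the maximal distance value is achieved by some vertex $q^*\in Q'$. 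It remains to rule out the degenerate possibility that the farthest point of $Q$ is some non-vertex $\tilde q\in Q\setminus Q'$ achieving the same maximal value; but if $d(o,\tilde q)$ equals the maximum, then $q^*\in Q'$ also achieves it, and the statement only asserts that \emph{a} farthest point lies in $Q'$ (which is how it is used: $q_{ref}$ may be chosen in $Q'$). If one wants the strict claim that \emph{every} farthest point is a vertex, one invokes strict convexity of the Euclidean norm: for $\tilde q=\sum_i\lambda_i q_i$ a nontrivial combination of distinct points $q_i-o$, the triangle inequality $\|\sum_i\lambda_i(q_i-o)\|\le\sum_i\lambda_i\|q_i-o\|$ is strict unless all the vectors $q_i-o$ are nonnegative scalar multiples of one another, i.e.\ collinear with $o$ on one ray — an easily handled special case.

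\textbf{Main obstacle.} The only subtlety is the last point: whether to claim "a farthest point is a vertex" (immediate from the maximization identity above) or "all farthest points are vertices" (needs the strict-convexity case analysis, or a tie-breaking convention). For the paper's purposes the weak form suffices, since $q_{ref}$ is introduced as "a query object such that $\forall q\in Q: d(o,q_{ref})\ge d(o,q)$" and we are free to pick it among the hull vertices. I would therefore present the short version: convexity of the norm gives $\max_Q d(o,\cdot)=\max_{Q'} d(o,\cdot)$ via the convex-combination bound, hence some hull vertex realizes the farthest distance, which is exactly what the corollary asserts.
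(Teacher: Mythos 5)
Your proof is correct, and it takes a genuinely different route from the paper's. The paper argues by contradiction with a geometric picture: it considers the hypersphere centered at $o$ with radius $d(o,q)$ for the putative farthest point $q$, and argues that if $q$ were interior to the hull or on a face but not at a vertex, one could move within the hull while leaving the sphere, so the hull would contain points farther from $o$ than $q$. You instead give the standard algebraic argument that a convex function attains its maximum over a polytope at a vertex: writing any $x\in\mathrm{conv}(Q)=\mathrm{conv}(Q')$ as $\sum_i\lambda_i q_i$ and applying the triangle inequality yields $d(o,x)\le\max_i d(o,q_i)$, hence $\max_{Q}d(o,\cdot)=\max_{Q'}d(o,\cdot)$. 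Your version is the more rigorous of the two: the paper's contradiction step only shows that the \emph{hull} contains points farther than $q$, which are not a priori points of $Q$, so strictly speaking it needs exactly the convexity identity you prove to close the loop. Your explicit discussion of the tie case (a non-vertex point of $Q$ attaining the same maximal distance) is also a useful clarification the paper omits; as you note, the weak form "some farthest point is a vertex" is all that is needed to justify choosing $q_{ref}$ among the hull vertices, and if one wants the strict form, working with the squared distance $d(o,\cdot)^2$ (which is strictly convex) removes even the collinearity caveat you mention.
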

\begin{proof}
Consider any point $p\in\RR^d$ and its farthest point $q\in Q$.
Then all points in $Q$ must be located in the hyper-sphere
centered at $o$ with radius $d(o,q)$. Now, we can proof the above
lemma by contradiction assuming that $q$ is not a convex-hull
vertex. If $q$ is assumed to be within the convex-hull (not lying
on the margin of the convex hull), then the hyper-sphere splits
the convex-hull into points that are inside the sphere and points
that are out-side of the sphere as shown for $q_2$ in Figure
\ref{fig:convexhull}. Consequently, the convex hull contains
points that are farther from $o$ than $q$ which contradicts the
assumption. Now, we assume that $q$ lies on the margin (but not on
a vertex) of the convex hull which corresponds to a region of a
hyper-plane like $q_3$ in our example. If we move along this
hyper-plane starting from $q$, we are still within the convex-hull
but leave the hyper-sphere of $o$. Consequently, again, the convex
hull contains points that are farther from $o$ than $q$ which
again contradicts the assumption.
\end{proof}
Now we can use Corollary \ref{lemma:convex_hull_property} to prove
Lemma \ref{lemma:convex_hull_criterion_I}:
\begin{proof}
By definition of $q_{ref}$ it holds that
$$
q_{ref}=argmax_{q\in Q}(dist(o,q))
$$
Since the vertices of the convex hull of $Q$ consists only of
points in $Q$, Corollary \ref{lemma:convex_hull_property} leads to
$$
q_{ref}=argmax_{c\in C}(dist(o,c))
$$
Thus,
$$
\forall c\in C:dist(o,q_{ref})\geq dist(o,c)
$$
and since $\mathcal{H}\subseteq C$:
$$
\forall p\in \mathcal{H}:dist(o,q_{ref})\geq dist(o,p)
$$
\end{proof}

\subsection{Proof of Lemma \ref{lemma:IkNNQ_pruning_criterion_II}}
\begin{proof}
$\mathbf{\Rightarrow:}$ If $o\in Ik\mbox{-}NNQ(Q)$, then all query
points (including $q_{ref}$) are in the $k$NN set of $o$. Since
for all points $p$ in $\mathcal{H}$, $d(o,p)\le d(o,q_{ref})$ (see
Lemma \ref{lemma:convex_hull_criterion_I}), all points in
$\mathcal{H}$ should also be in the $k$NN set of $o$. Therefore,
in the (worst) case, where $q_{ref}$ is the $k$-th NN of $o$,
there can be $k-|\mathcal{H}|-|Q|$ points outside the convex hull
closer to $o$ than $q_{ref}$, if $o\notin \mathcal{H}$, or
$k-|\mathcal{H}|+1-|Q|$ points if $o\in \mathcal{H}$.

$\mathbf{\Leftarrow:}$ If $o$ is outside the hull, from the points
in $\mathcal{H}\cup Q$, $q_{ref}$ is the furthest one to $o$ (see
Lemma \ref{lemma:convex_hull_criterion_I}). If there are at most
$k-|\mathcal{H}|-|Q|$ points outside the hull closer to $o$ than
$q_{ref}$ is, then the distance ranking of $q_{ref}$ is at most
$k$. Since all other points in $Q$ are closer to $o$ than
$q_{ref}$ is, it should be $o\in Ik\mbox{-}NNQ(Q)$. If $o\in
\mathcal{H}$, the bound should be $k-|\mathcal{H}|-|Q|+1$, as $o$
should be excluded from $\mathcal{H}$ in the proof.
\end{proof}

\subsection{Proof of Lemma \ref{lem:fastSky}}
\begin{proof}
Let us consider the space partitioning $\mathcal{S}$ derived from
dividing the object space at $q$. Each $q'$ located within
partition $r\in\mathcal{S}$ generates a pruning region $PR_q(q')$
(cf. Definition \ref{def:pruning_region}) that totally covers the
partition $r'\in\mathcal{S}$ which is opposite to $r$ w.r.t. $q$.
Since we assume that we have at least one query object $q'\neq q$
in each partition $r\in\mathcal{S}$, all partitions
$r'\in\mathcal{S}$ are totally covered by a pruning region and,
consequently, the complete data space can be pruned. An example in
the two-dimensional space is illustrated in Figure
\ref{fig:skyline_proof}.
\end{proof}

\begin{figure}
    \centering
    \includegraphics[width=0.48\columnwidth]{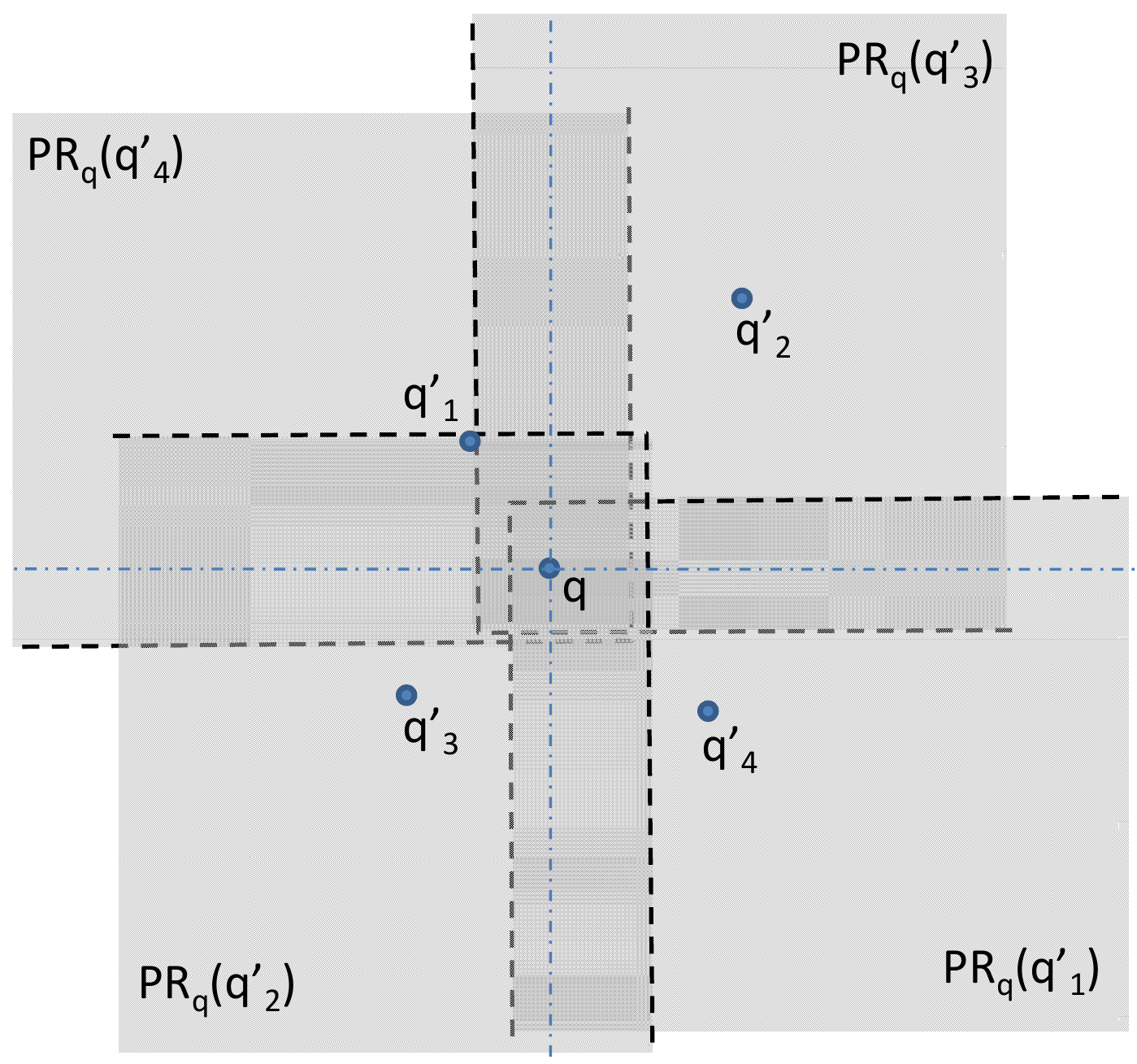}
\vspace{-3mm} \caption{Fast query based validation filter in
2D-space}
    \label{fig:skyline_proof}
\end{figure}

\section{Algorithms}
\label{app:alg}

In this section we illustrate the pseudo code of the I$k$NNQ (cf.
Algorithm \ref{alg:iknn}) and the IDSQ (cf. Algorithm
\ref{alg:isnn}) algorithm. A more detailed explanation is given in
Section \ref{sec:iknn-algo} and \ref{sec:idsq-algo} respectively.

\begin{algorithm}
\footnotesize
  \caption{Inverse kNNQuery}
  \label{alg:iknn}
    \begin{algorithmic}[1]
        \REQUIRE $Q$, $k$, $ARTree$
        \STATE \textit{//Fast Query Based Validation}
        \IF {$|Q|>k$}
            \STATE return "no result" and terminate algorithm
        \ENDIF
        \STATE $pq$ PriorityQueue ordered by $max_{q_i \in Q}$MinDist
        \STATE $pq.add(ARTree.root~\textrm{entries})$
        \STATE $|\mathcal{H}|= 0$
         \STATE LIST $candidates, prunedEntries$
        \STATE \textit{//Query/Object Based Pruning}
        \WHILE{$\neg pq.isEmpty()$}
            \STATE $e = pq.poll()$
            \IF {$getPruneCount(e, Q, candidates, prunedEntries, pq) > k -
            |\mathcal{H}|-|Q|$}
                \STATE $prunedEntries.add(e)$
            \ELSIF{$e.isLeafEntry()$}
                \STATE $candidates.add(e)$
            \ELSE
                \STATE $pq.add(e.getChildren())$
            \ENDIF
            \IF{$e \in convexHull(Q)$}
                \STATE $|\mathcal{H}| += e.agg\_count$
            \ENDIF
        \ENDWHILE
        \STATE \textit{//Refinement Step}
        \STATE LIST $result$
        \FOR{ $c \in candidates$}
            \IF {$q_{ref} \in knnQuery(c, k)$}
                \STATE $result.add(c)$
            \ENDIF
        \ENDFOR
        \STATE return ($result$)
    \end{algorithmic}
\end{algorithm}

\begin{algorithm}
\footnotesize
  \caption{Inverse Dynamic Skyline Query}
  \label{alg:isnn}
    \begin{algorithmic}[1]
        \REQUIRE $Q$, $ARTree$
        \STATE $pq$ PriorityQueue ordered by $min_{q_i \in Q}MaxDist$
        \STATE $pq.add(ARTree.root~\textrm{entries})$
         \STATE LIST $candidates, prunedEntries$
        \STATE \textit{//Filter step}
        \WHILE{$\neg pq.isEmpty()$}
            \STATE $e = pq.poll()$
            \IF {$canBePruned(e, Q, candidates, prunedEntries, pq)$}
            \label{alg:cbp}
                \STATE $prunedEntries.add(e)$
            \ELSIF{$e.isLeafEntry()$}
                \STATE $candidates.add(e)$
            \ELSE
                \STATE $pq.add(e.getChildren())$
            \ENDIF
        \ENDWHILE
        \STATE \textit{//Refinement Step}
        \STATE LIST $result$
        \FOR{ $c \in candidates$}
            \IF {$Q \in dynamicSkyline(c)$}
                \STATE $result.add(c)$
            \ENDIF
        \ENDFOR
        \STATE return ($result$)
    \end{algorithmic}
\end{algorithm}

\section{Pruning  Candidates in Inverse Dynamic Skyline Queries}
\label{appn:sky_2d}

Here, we show how pruning a data object from the candidates set of
an inverse skyline query can be accelerated. The pruning
conditions discussed here hold for the special 2D case.
Nonetheless some of them can be extended to spaces of higher
dimensionality, with lower pruning effectiveness.

%

\begin{figure}
    \centering

    \subfigure[corner point\label{subfig:skyline_2a}]{\includegraphics[width=0.3\columnwidth]{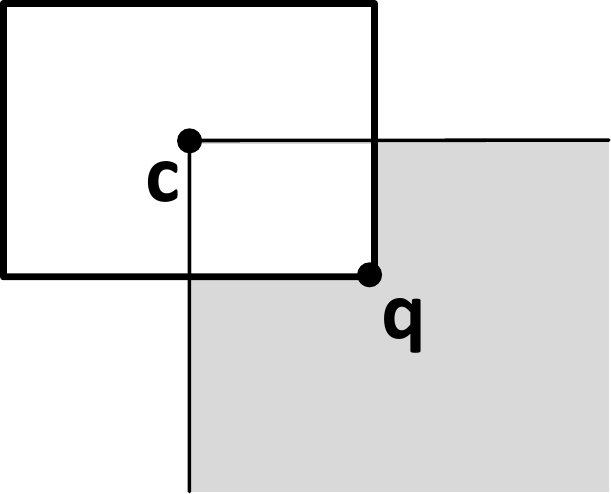}}
    ~~~~~~
    \subfigure[edge point\label{subfig:skyline_2b}]{\includegraphics[width=0.3\columnwidth]{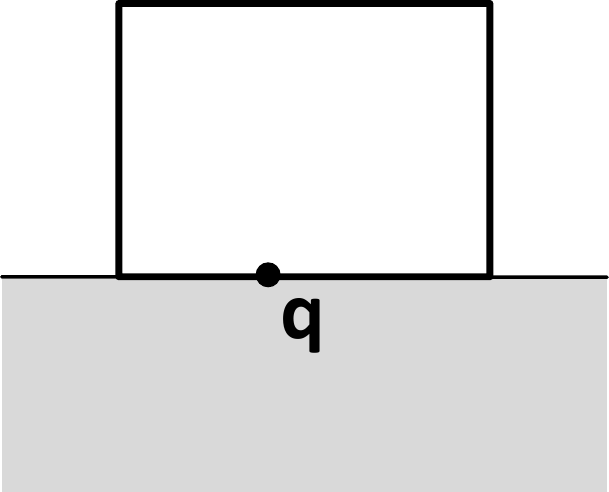}}
    \caption{Using query points at the border of \qmbr to prune space}
    \label{fig:skyline_2}
\end{figure}


\subsection{Query-Based Pruning}

Let \qmbr be the rectangle that minimally bounds all query
objects.
The main idea is to identify pruning regions by just considering
\qmbr and one query object $q$ located on the boundary of \qmbr.
We first concentrate on pruning objects outside \qmbr, the cases
for pruning objects inside of \qmbr will be discussed later.

\textbf{Pruning Condition I:} Assume that $q$ is located at one
corner of \qmbr, then the axis-aligned region outside of \qmbr
where the $i$th dimension is defined by the interval
$[c^i,+\infty]$ if $q^i>c^i$ and $[-\infty,c^i]$ otherwise (where
$c$ is the center of \qmbr) can be pruned. The rationale is that
since \qmbr is an MBR, at least one other query object must be at
each edge of \qmbr located on the opposite side of $q$ which can
be used to create a pruning region at the side of $q$. In the
example shown in Figure \ref{subfig:skyline_2a}, $q$ is located at
the lower right corner of \qmbr and, therefore, additional query
objects must be located on both the left and upper edge of \qmbr.
As a consequence, any object in the lower-right shaded region can
be pruned.

\textbf{Pruning Condition II:} In the case, where $q$ is located
at a boundary of \qmbr, but not at a corner of \qmbr, the
half-space constructed by splitting the data space along the edge
containing $q$ and does not contain \qmbr defines the pruning
region. Here, the rationale is that since $q$ is on an edge $e$
(but not at a corner) of \qmbr, there must be at least two
additional query objects, located at the edges adjacent to $e$,
respectively. By pairing $q$ with each of these two objects, and
merging the corresponding pruning regions, we obtain the pruning
region. For example, in Figure \ref{subfig:skyline_2b}, for $q$,
we get the shaded pruning region below $q$. As another example,
consider the union of $PQ_{q_1}(q_4)$ and $PQ_{q_3}(q_4)$ in
Figure \ref{subfig:skyline_3b} which prunes the whole hyperplane
below $q_4$. Thus, if all four edges of \qmbr contain four
different query objects, then only objects in \qmbr are candidate
$I\mbox{-}DSQ$ results.

\subsection{Object-Based Pruning}
For any candidate object $o$ that is not pruned during the
query-based filter step, we need to check if there exists any
other database object $o^\prime$ which dominates some $q\in Q$
with respect to $o$. If we can find such an $o^\prime$, then $o$
cannot have $q$ in its dynamic skyline and thus $o$ can be pruned
for the candidate list. Naively, we can determine, for each
database object $o^\prime$ that we have found so far and each
query object $q$, the pruning region $PR_q(o^\prime)$ according to
Definition \ref{def:pruning_region} and check, if $o$ is located
in this region. In the following, we show how to perform this
pruning without considering all possible combinations of database
and query objects.

\begin{figure}
    \centering
    \subfigure[Case 1\label{subfig:skyline_4a}]{\includegraphics[width=0.3\columnwidth]{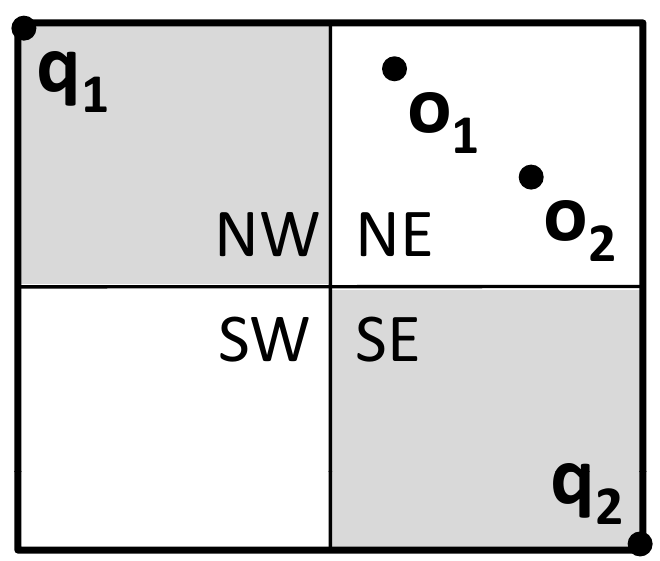}}
    ~~~~~~~
    \subfigure[Case 2\label{subfig:skyline_4b}]{\includegraphics[width=0.3\columnwidth]{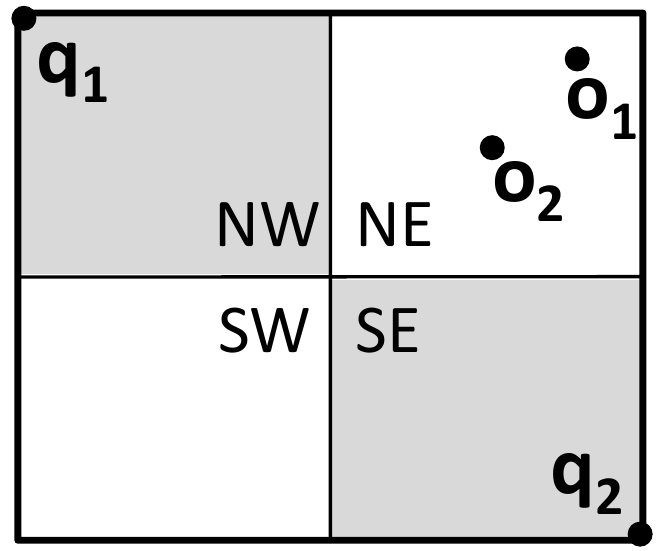}}
    \caption{Object based pruning inside  \qmbr}
    \label{fig:skyline_4}
\end{figure}



Consider two query points $q_1$ and $q_2$ and the rectangle
$\qmbr_{q_1q_2}$ that minimally bounds $q_1$ and $q_2$. Note, that
$\qmbr_{q_1q_2}$ is fully contained in \qmbr and the union of all
$\qmbr_{q_iq_j}$  for each pair $q_i,q_j\in Q, i\neq j$ is equal
to \qmbr. Consider the axis-aligned space partitioning according
to the center point $c$ of $\qmbr_{q_1q_2}$ resulting into four
partitions, denoted as NE, SE, SW and NW as illustrated in Figure
\ref{fig:skyline_4}. According to the query-based pruning, the two
partitions containing $q_1$ and $q_2$ respectively, can be pruned.
Now, we can show the following:
\begin{corollary}
\label{cor:atmostonecand} In each of the two remaining regions (NE
and SW in the example) within $\qmbr_{q_1q_2}$, there can only be
at most one candidate.
\end{corollary}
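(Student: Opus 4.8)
The plan is to prove Corollary~\ref{cor:atmostonecand} by contradiction: assume two distinct candidates $o$ and $o'$ both lie in the same remaining region (say NE) of $\qmbr_{q_1q_2}$, and show that one of them dominates the other with respect to some query point, contradicting the definition of a candidate. Recall that a candidate must, for every $q\in Q$, \emph{not} lie in the pruning region $PR_q(p)$ of any other database object $p$; equivalently (reading Definition~\ref{def:pruning_region}), $o$ remains a candidate only if for each $q_i$ there is no other object that dominates $o$ w.r.t.\ $q_i$ in the $q_i$-partition sense.

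\textbf{Key steps.} First I would set up coordinates so that $c$, the center of $\qmbr_{q_1q_2}$, is the origin, and observe that the NE partition is $\{x : x^1 \ge c^1,\ x^2 \ge c^2\}$ intersected with $\qmbr_{q_1q_2}$. Since $q_1$ and $q_2$ are the two opposite corners defining $\qmbr_{q_1q_2}$ and they sit in the SE/SW or NW partitions (the two \emph{pruned} ones), one checks that $q_1$ and $q_2$ are positioned so that, relative to any point in NE, both query points are "dominated in the NE-ward direction" --- more precisely, for a point $o$ in NE, the midpoint-rectangle $PR_{q_1}(o)$ and $PR_{q_2}(o)$ both open toward increasing $x^1$ and $x^2$ (I would verify this sign-by-sign using that $q_1,q_2$ lie on the SW/SE/NW sides). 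Second, given two candidates $o, o'$ in NE with, WLOG, $o^1 \le (o')^1$ and $o^2 \le (o')^2$ (possible only if this ordering holds in both coordinates --- otherwise I would instead compare along the coordinate where the order is "wrong"; see the obstacle paragraph), I would show $o'$ lies in $PR_{q_1}(o)$ or $PR_{q_2}(o)$, i.e.\ $o'$ dominates $o$ w.r.t.\ $q_1$ (or $q_2$). The computation reduces to checking that $(o')^i$ lies on the correct side of $\tfrac{q_1^i + o^i}{2}$ for each $i$, which follows because $o^i \le (o')^i$ and $q_1^i \le c^i \le o^i$ pushes the midpoint below $o^i \le (o')^i$. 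Third, conclude that $q_1 \notin$ dynamic skyline of $o$, so $o$ is not a candidate --- contradiction. Hence at most one candidate survives in NE, and symmetrically in SW.

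\textbf{The main obstacle} is handling the case where two points $o, o'$ in NE are \emph{incomparable} in the componentwise order (say $o^1 < (o')^1$ but $o^2 > (o')^2$): then neither directly dominates the other under the ordinary product order, so one must argue more carefully which query point's pruning region to use. The resolution is that $\qmbr_{q_1q_2}$ is spanned by $q_1$ and $q_2$ which occupy \emph{diagonally opposite} corners (in the two pruned partitions), so one of $q_1, q_2$ --- say $q_1$, located in the SE partition, with $q_1^1 \ge c^1$, $q_1^2 \le c^2$ --- has the property that $PR_{q_1}(o)$ constrains coordinate~1 with a lower bound and coordinate~2 with a lower bound, but the midpoint in coordinate~1 can be at most $\tfrac{q_1^1 + o^1}{2}$ which may exceed $o^1$. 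I would handle this by noting that whichever of the two coordinates the pair $(o,o')$ is "badly ordered" in, the appropriate one of $q_1,q_2$ (the one sitting on the far side in \emph{that} coordinate) still yields a midpoint on the correct side, so exactly one of the two query points always works; a short case split on the two coordinate-orderings finishes it. I would illustrate the generic case with the figure already present (Figure~\ref{fig:skyline_4}) rather than belabor all sign combinations.
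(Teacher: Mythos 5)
Your overall plan coincides with the paper's: the corollary is proved there by exactly the two-case analysis you sketch (the paper's ``Case 1'' is your incomparable case, where the two objects are oriented like $q_1q_2$ and prune \emph{each other}, one via a region anchored at the other w.r.t.\ $q_1$ and one w.r.t.\ $q_2$; its ``Case 2'' is your componentwise-comparable case, where one object is pruned by the other for both query points). So the route is the same, but the computation you give for the comparable case does not go through as written. You assume $q_1^i\le c^i\le o^i$ in \emph{both} coordinates, i.e.\ that $q_1$ is the SW corner of $\qmbr_{q_1q_2}$ while the candidates sit in the NE partition. That configuration is impossible: NE survives the query-based pruning precisely because $q_1$ and $q_2$ occupy the NW and SE corners (the partitions containing the query points are the pruned ones), so for either query point the sign of $q^i-c^i$ differs between the two coordinates. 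With the correct signs, your claimed containment $o'\in PR_{q_1}(o)$ for $o\le o'$ componentwise is false in general: in the coordinate where $q_1^i\ge o^i$, the pruning region requires $o'$ to lie \emph{below} the midpoint $\tfrac{q_1^i+o^i}{2}$, and nothing forces that. The correct statement is the reverse one: the componentwise-\emph{smaller} point $o$ lies in $PR_{q_1}(o')$ and in $PR_{q_2}(o')$ (using $o^i\ge c^i\ge\tfrac{q^i+(o')^i}{2}$ in the coordinate where $q$ is minimal and $o^i\le (o')^i\le\tfrac{q^i+(o')^i}{2}$ where $q$ is maximal), so it is $o$, not $o'$, that gets pruned --- which is also what makes the subsequent Pruning Condition III (the surviving candidate is the one extremal in both coordinates) consistent.

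This sign confusion propagates: you write that $o'\in PR_{q_1}(o)$ means ``$o'$ dominates $o$'' and then conclude that $o$ is not a candidate, whereas by Definition \ref{def:pruning_region} membership $p\in PR_q(o)$ prunes $p$ (because $o$ dominates $q$ with respect to $p$); so your steps~2 and~3 prune different objects. None of this is fatal to the strategy --- your treatment of the incomparable case is essentially the paper's, and the comparable case is repaired by anchoring the pruning region at the larger point --- but as written the central verification rests on an impossible placement of $q_1$ and draws the pruning conclusion for the wrong object, so the case analysis must be redone with the actual corner positions before it constitutes a proof.
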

To prove this, let us consider the following two cases illustrated
in Figures \ref{subfig:skyline_4a} and \ref{subfig:skyline_4b}:

\textbf{Case 1:} Let us assume that there are two objects $o_1$
and $o_2$ in one of the regions, that have the same topology as
the two query objects $q_1$ and $q_2$, as shown in Figure
\ref{subfig:skyline_4a}. In this case both objects prune each
other. The reason is that $o_1$ is in the pruning region of $o_2$
w.r.t. $q_2$ and $o_2$ is in the pruning region of $o_1$ w.r.t.
$q_1$. Consequently, there cannot exist two candidates in this
partition where the objects have similar spatial relationship as
$q_1$ and $q_2$.

\textbf{Case 2:} Now, we assume that there exist two objects $o_1$
and $o_2$ in one partition within $\qmbr_{q_1q_2}$ such that
$o_1o_2$ is orienter perpendicularly to $q_1q_2$, as illustrated
in the example in Figure \ref{subfig:skyline_3b}. In this case,
$o_1$ is pruned by $o_2$ for both query objects. The reason is
that in each dimension, the distance between $o_1$ and $o_2$ must
be less than the distance between $o_1$ and $q_1$ ($q_2$).
However, $o_2$ can in general not be pruned by $o_1$, thus $o_2$
remains a candidate.

We can now use Corollary \ref{cor:atmostonecand} to obtain the
following pruning condition:

\textbf{Pruning Condition III:} Let $R \subseteq \qmbr$ be a
region inside the query rectangle that cannot be pruned using
query-based pruning. Let $q_1,q_2\in Q$ be two query points for
which it holds that $R$ is fully contained in the rectangle
minimally bounding $q_1$ and $q_2$. Since $R$ cannot be pruned
based on query-pruning only, $R$ must be located in non-pruning
regions (e.g. NE and SW in Figure \ref{subfig:skyline_3a}) of
$q_1$ and $q_2$. Without loss of generality, let us assume that
$R$ is located in the NE region of $\qmbr_{q_1q_2}$. Now Let $O$
be the set of database objects inside $R$. Let $a\in O$ be the
object with the largest $x$ coordinate and let $b\in O$ be the
object with the largest $y$ coordinate. If $a\neq b$ we can prune
$R$. If $a=b$, then $a$ is a candidate and all other objects $c\in
R, c\neq a$ can be pruned.

The above pruning condition allows us to prune objects inside
$\qmbr$. The next pruning condition  allows us to prune objects
outside \qmbr using database objects. In the following, let
$\qmbr_{i}.max$ and $\qmbr_i.min$ denote the maximal and minimal
coordinate of \qmbr, at dimension $i$, respectively.

\begin{figure}[t]
  \centering
  \includegraphics[width=0.33\textwidth]{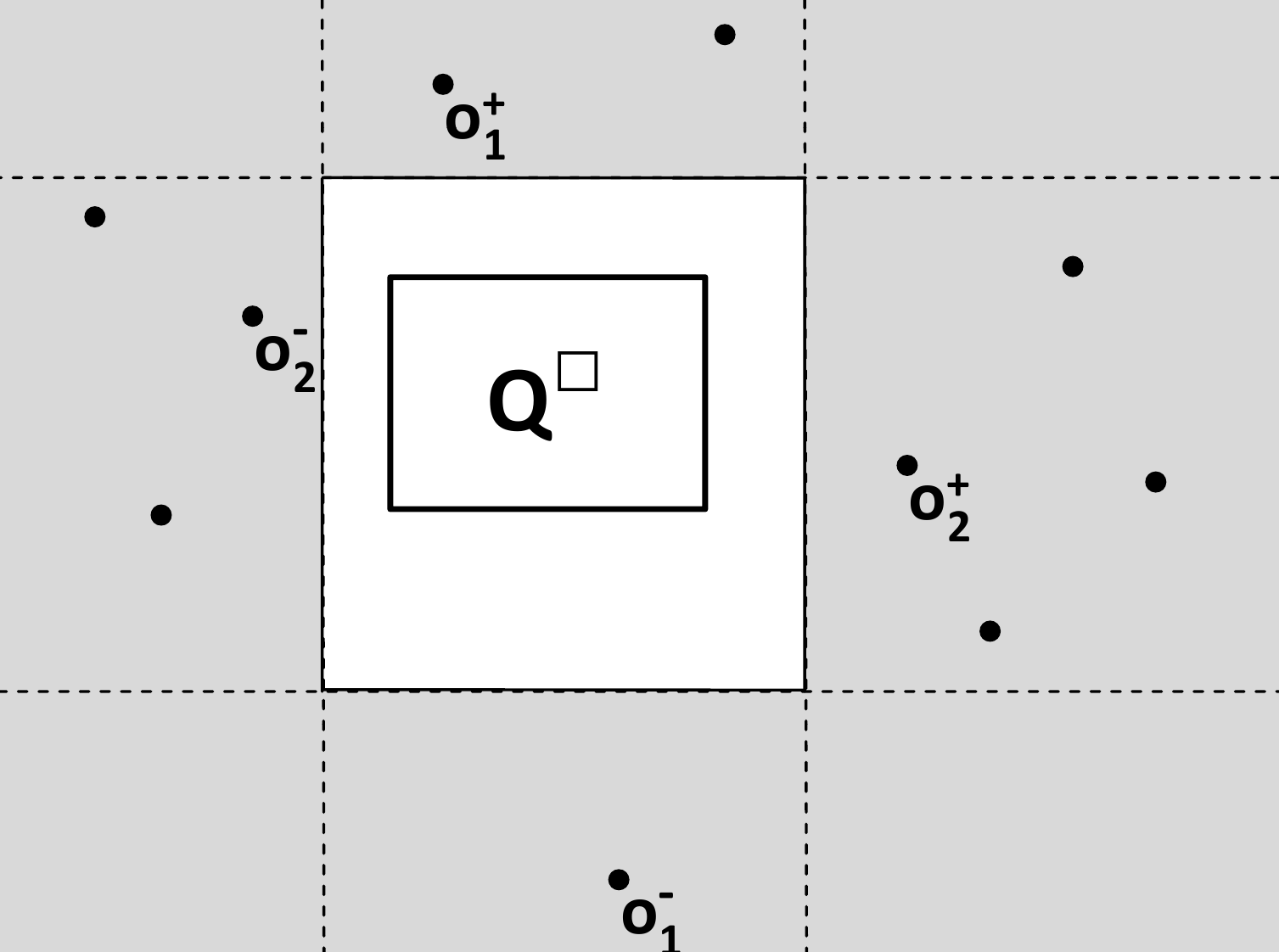}
  \caption{Pruning regions outside of \qmbr.}
  \label{fig:outsidepruning}
\end{figure}

\textbf{Pruning Condition IV}: For the next pruning condition we
use the sets of database objects $O_i\subseteq \DB$ outside \qmbr
for which it holds that each $o\in O_i$ intersects \qmbr in one
dimension but not in the other. Such objects are shown in Figure
\ref{fig:outsidepruning}. Let $O^+_i$ ($O^-_i$) denote the subset
of $O_i$ so that each object $o\in O^+_i$ has a larger (smaller)
coordinate than $\qmbr$ in the other dimension. Now let $o^+_i\in
O^+_i$ ($o^-_i\in O^-_i$) be the object in $O^+_i$ ($O^-_i$) with
the smallest (largest) coordinate in the other dimension $j$. Any
database object which has a $j$ coordinate greater than
$\frac{o^+_i+\qmbr_i.max}{2}$ or less than
$\frac{o^-_i+\qmbr_i.min}{2}$ can be pruned. The rationale of this
pruning condition is that due the MBR property of \qmbr, we know
that at least one query object must be located on each edge of
\qmbr. The pruning regions defined are based on these query
objects. For instance, for object $o_1^+$ in Figure
\ref{fig:outsidepruning} we can exploit that there must be query
objects $q_1$ and $q_2$ located on the left and the right border
edge of \qmbr, respectively. This allows us to create the two
pruning regions $PR_{q_1}(o_1^+)$ and $PR_{q_2}(o_1^+)$ according
to Definition \ref{def:pruning_region}. These pruning regions are
smallest, if $q_1$ and $q_2$ are located at the upper corners of
\qmbr. Thus, we can prune any object above the line biscecting the
upper side of $\qmbr$ and $o_1^+$.

%
%

\section{Pruning Techniques For The Bi-Chromatic Case}
\label{app:bi} Here, we explain how the proposed pruning
techniques which are designed for the mono-chromatic case can
easily be adapted to the bi-chromatic case. Here we assume two
data sets $\DB$ and $\DB'$. A bi-chromatic inverse query returns
the set of objects $r\in \DB'$ for which each inverse query object
$q\in Q \subseteq \DB'-$ is contained in the result of a
$\mathcal{P}$ query applied on data set $\DB$ using $r$ as query
object. For each case of the predicate $\mathcal{P}$, we briefly
explain the changes to our technique.

\subsubsection*{Bi-Chromatic I$\eps$-Range Queries}
Here, the filter rectangle can  directly be applied to $\DB'$, so
there is no practical change.

\subsubsection*{Bi-Chromatic IkNN Queries}
In this case we have to avoid allowing objects in $\DB'$ to prune
each other. In contrast to the mono-chromatic case, we only have
to consider objects in $\DB$ to build $H$, i.e. the number of
objects in the convex hull regions of $Q$.

\subsubsection*{Bi-Chromatic Inverse Dynamic Skyline Queries}
In this case, the pruning region is only defined by objects in
$\DB$. This pruning region is used to prune objects in $\DB'$.

\section{Additional Experiments}\label{appn:exp}
In this section we show the behavior of inverse queries on other
datasets than the ones used in section \ref{sec:experiments}. We
excluded from the evaluation the Naive approach due to its poor
performance; this way,
 the difference between MQF and SQF
becomes more clear. Due to space limitations we focused on I$k$NN
queries on the clustered dataset (cf. Figure
\ref{fig:knn-cluster}) and the real dataset (cf. Figure
\ref{fig:knn-real}). Let us note that similar trends could be
observed for the other inverese query types.

\begin{figure}[h]
    \centering
    \subfigure[I/O cost w.r.t. $k$.]{
        \label{fig:knn-cluster-k}
        \includegraphics[width = 0.45\columnwidth]{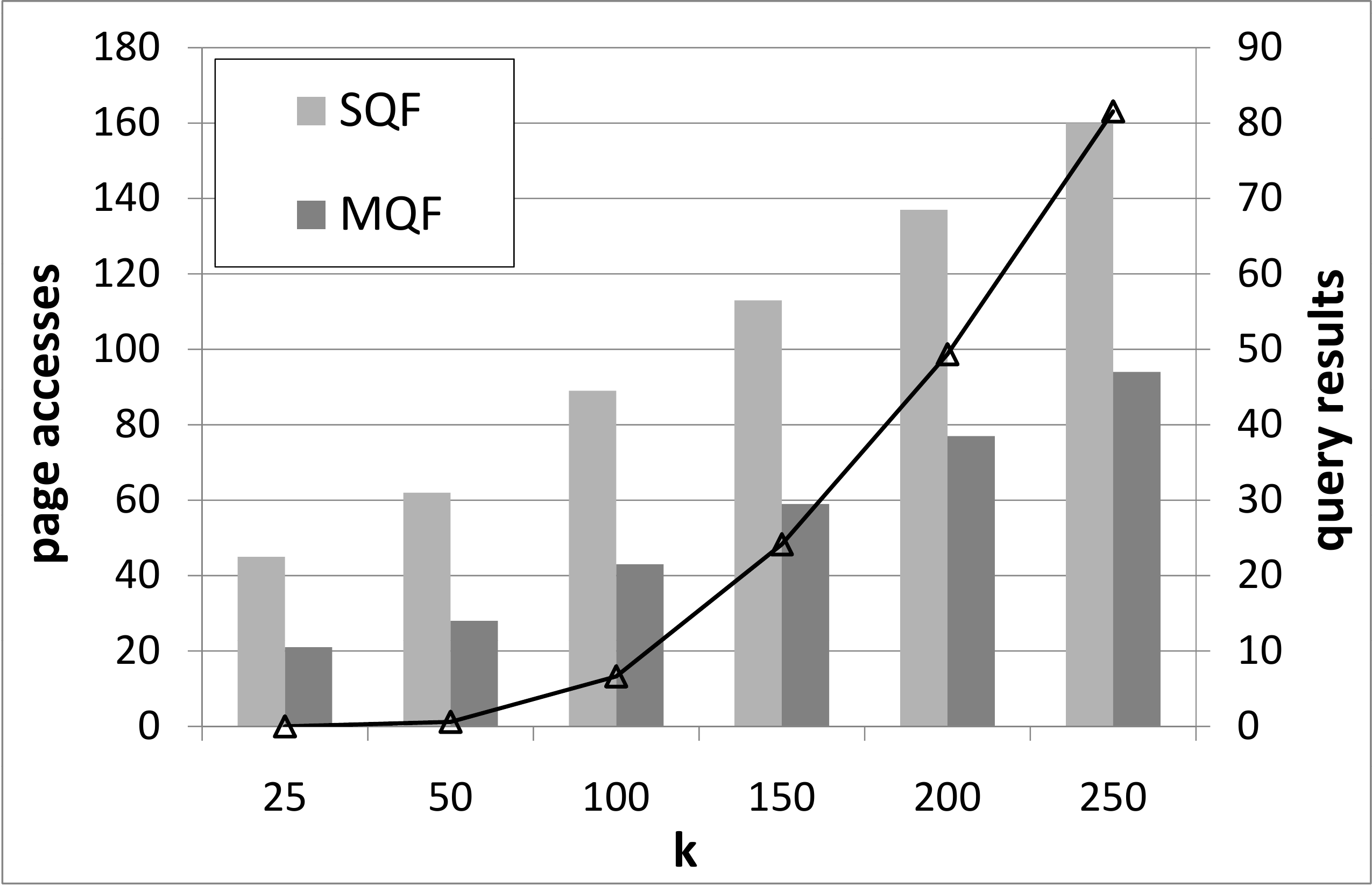}
    }
    \subfigure[I/O w.r.t. $|Q|$.]{
        \label{fig:knn-cluster-q}
        \includegraphics[width = 0.45\columnwidth]{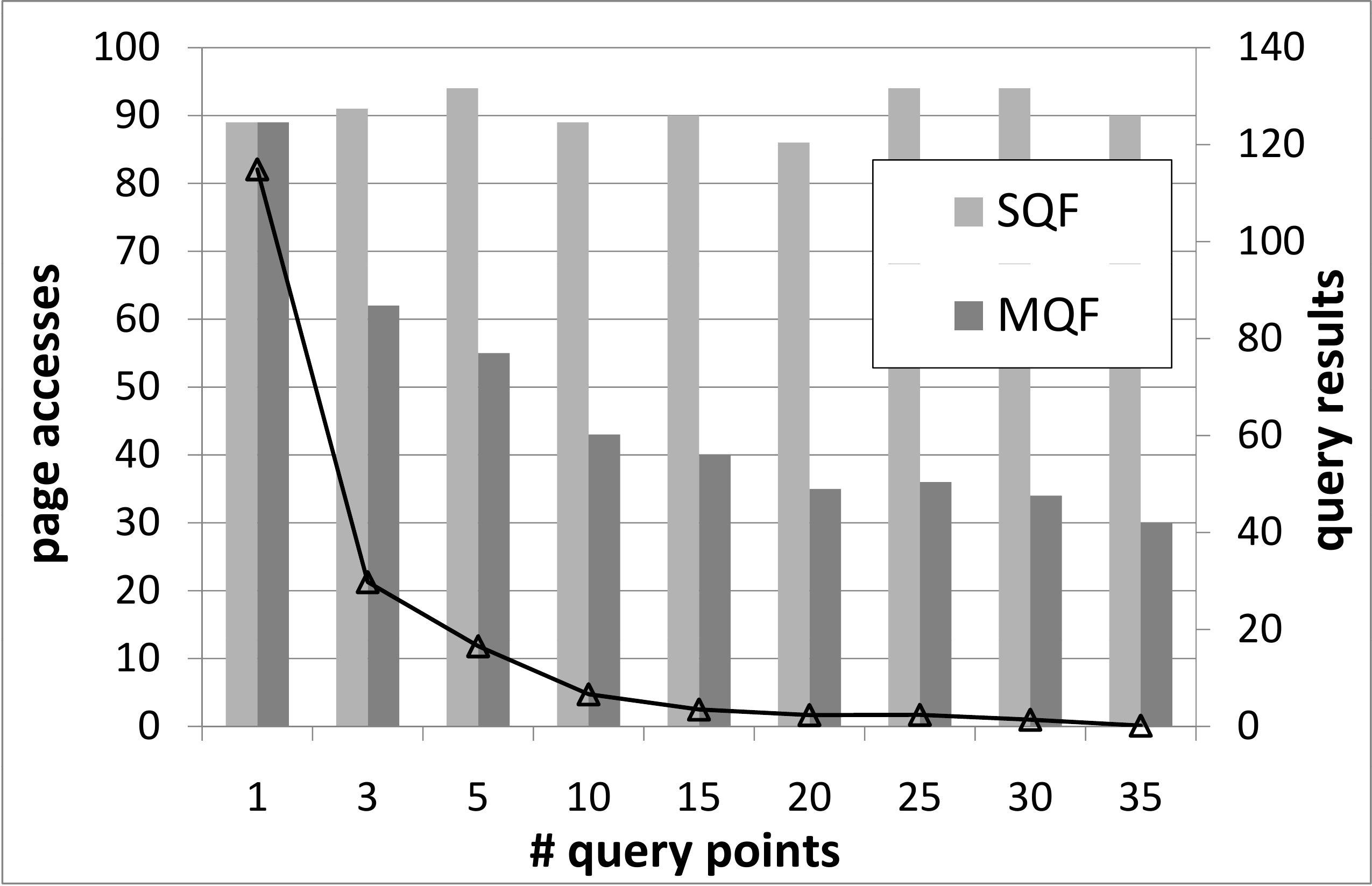}
    }

    \subfigure[I/O w.r.t. $d$.]{
        \label{fig:knn-cluster-dim}
        \includegraphics[width = 0.45\columnwidth]{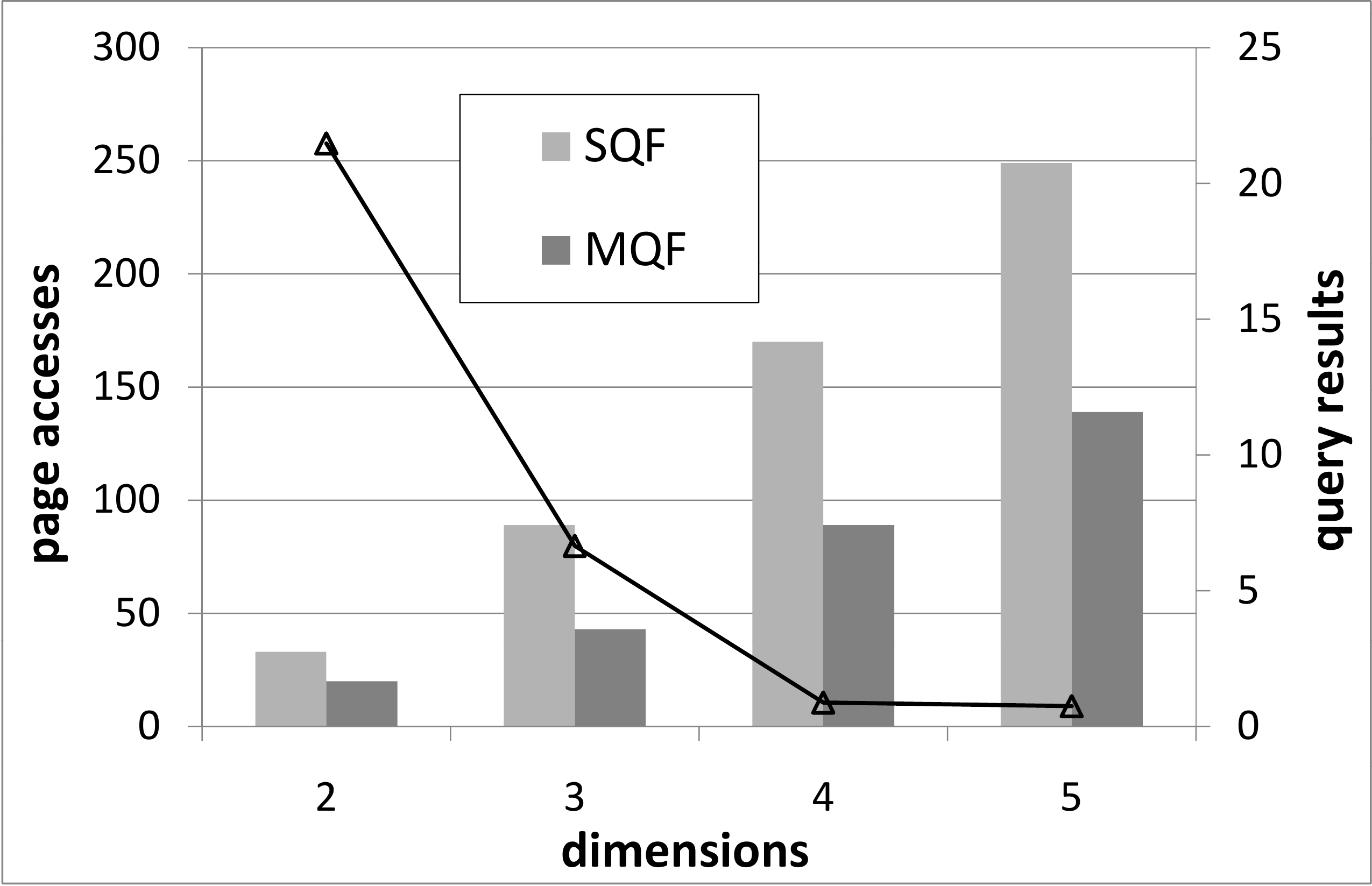}
    }
    \subfigure[I/O w.r.t. extent.]{
        \label{fig:knn-cluster-ext}
        \includegraphics[width =
        0.45\columnwidth]{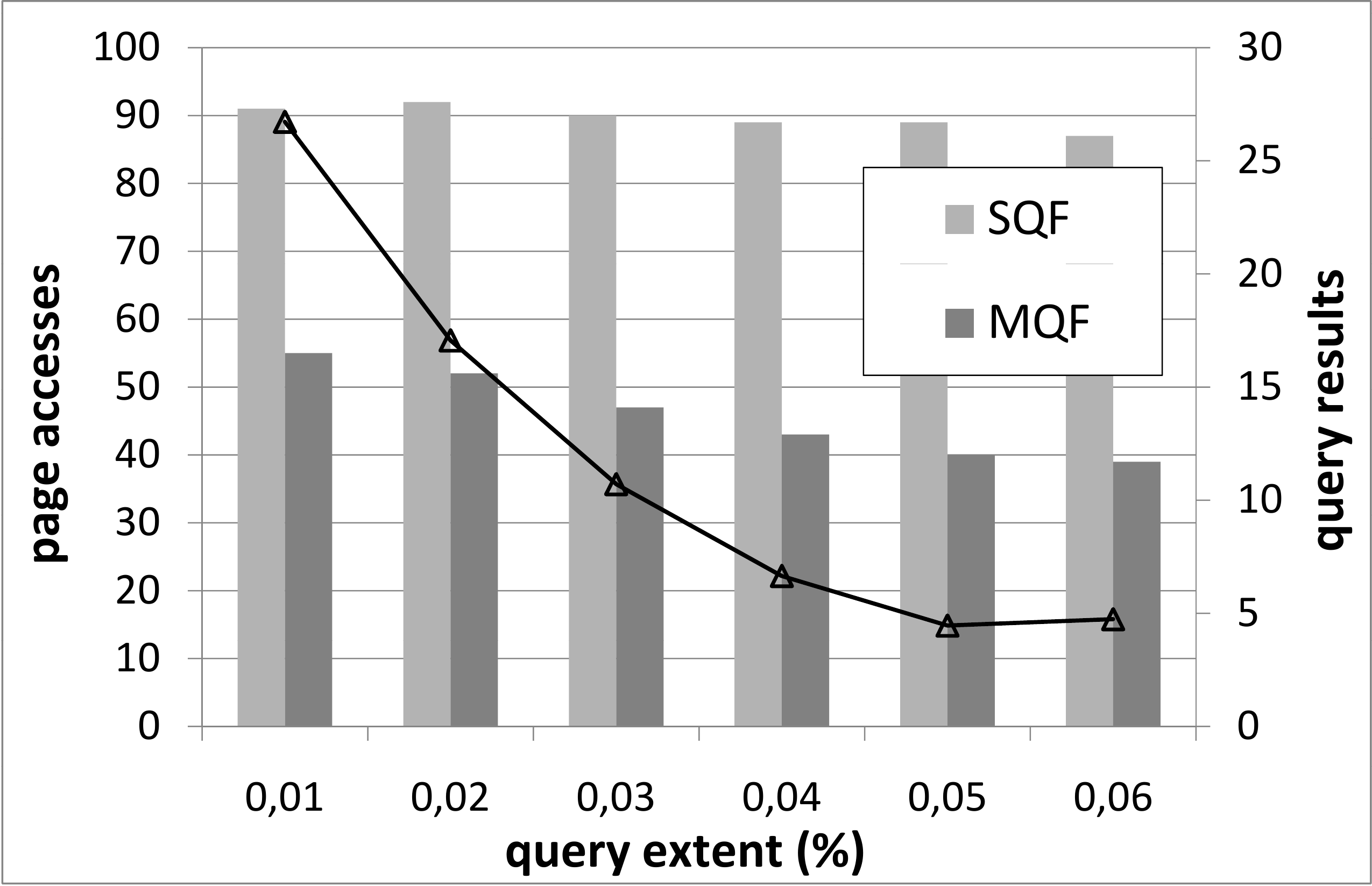} }

    \subfigure[I/O w.r.t. $|\DB|$.]{
        \label{fig:knn-cluster-scale}
        \includegraphics[width = 0.45\columnwidth]{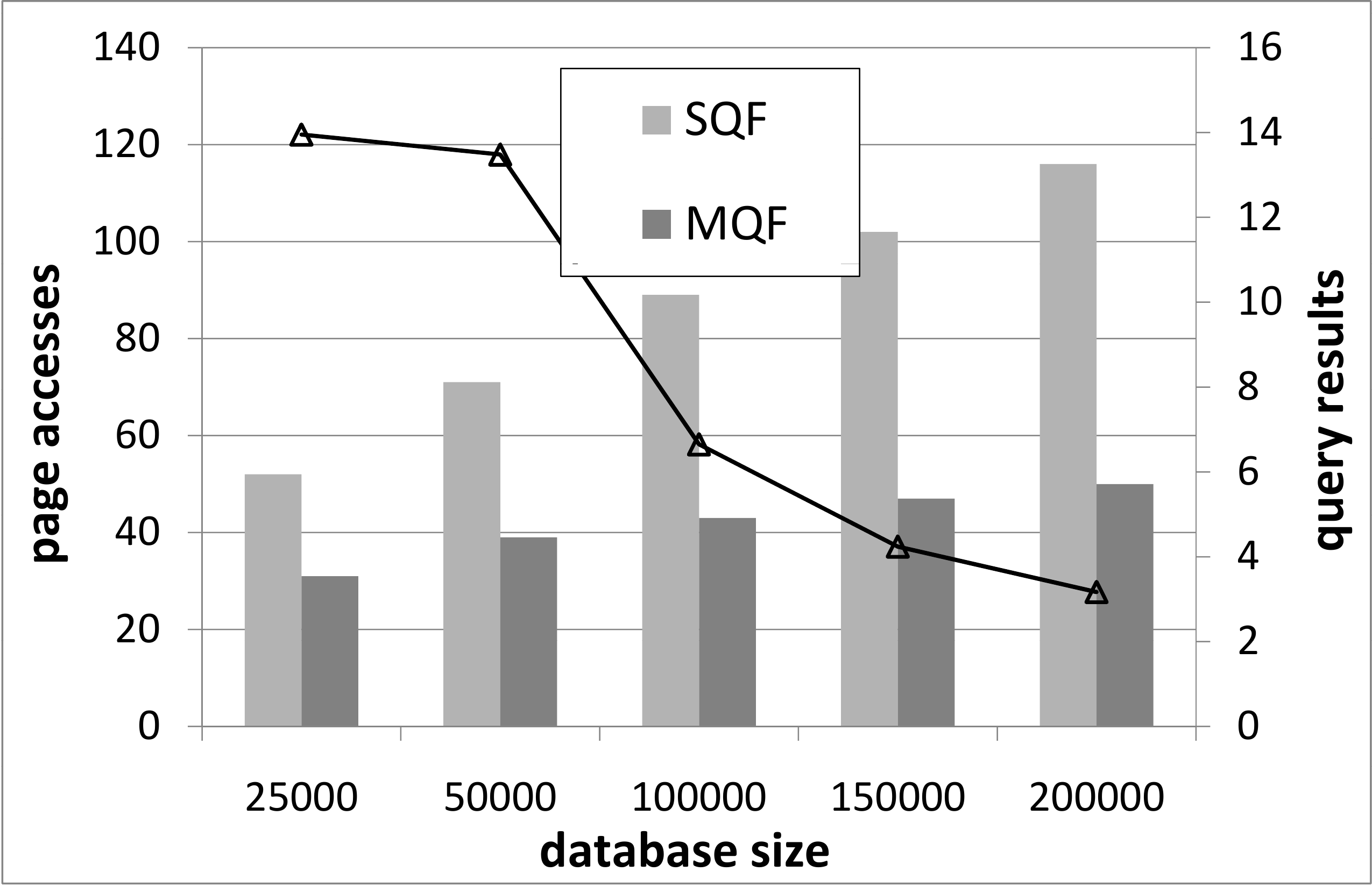}
    }
    \subfigure[CPU w.r.t. $|Q|$.]{
        \label{fig:knn-cluster-cpu}
        \includegraphics[width = 0.45\columnwidth]{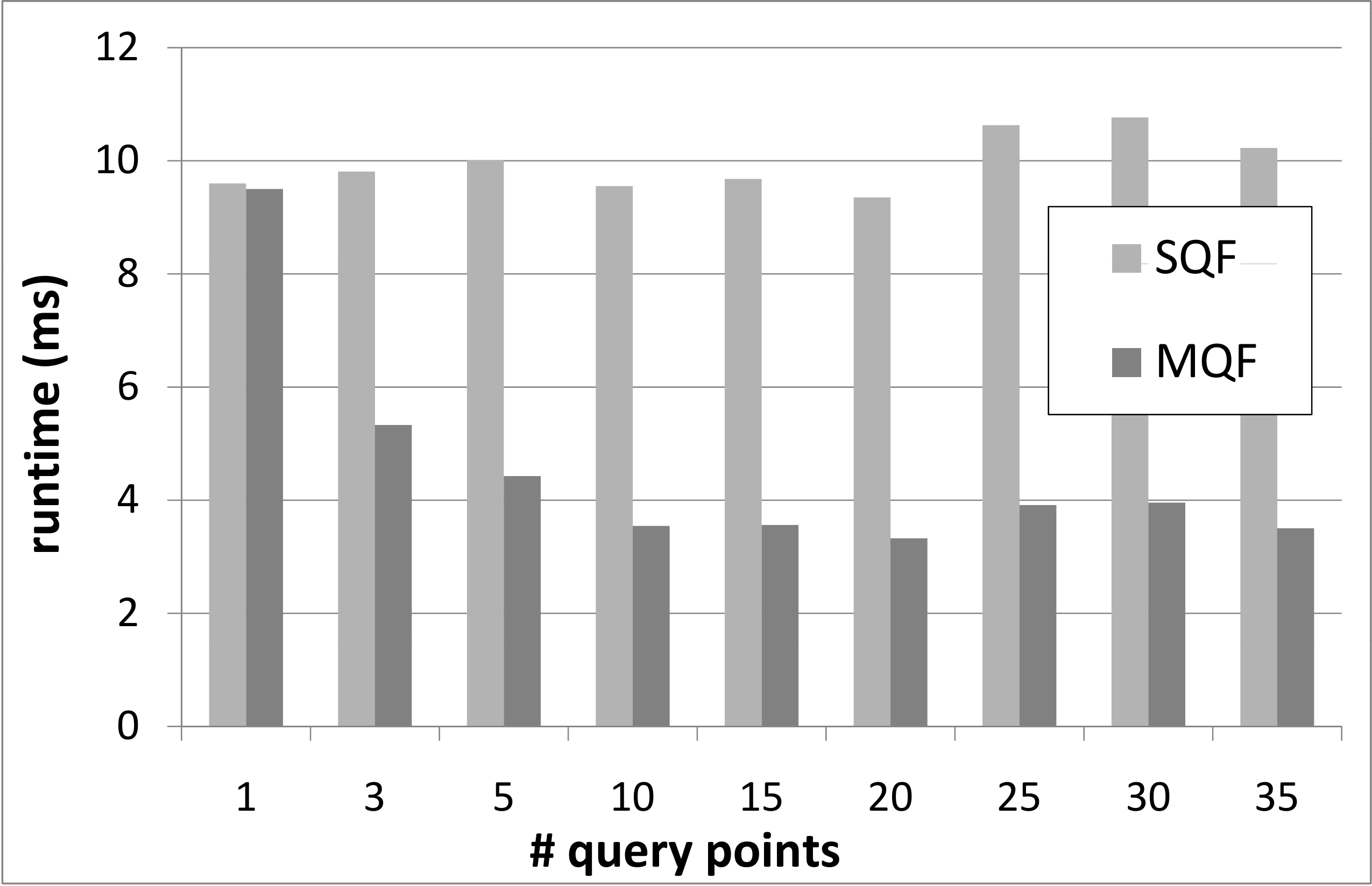}
    }
\vspace{-3mm}
    \caption{$Ik\mbox{-}NNQ$ algorithms on clustered dataset}
    \label{fig:knn-cluster}
\end{figure}

\begin{figure}[h]
    \centering
    \subfigure[I/O cost w.r.t. $k$.]{
        \label{fig:knn-real-k}
        \includegraphics[width = 0.45\columnwidth]{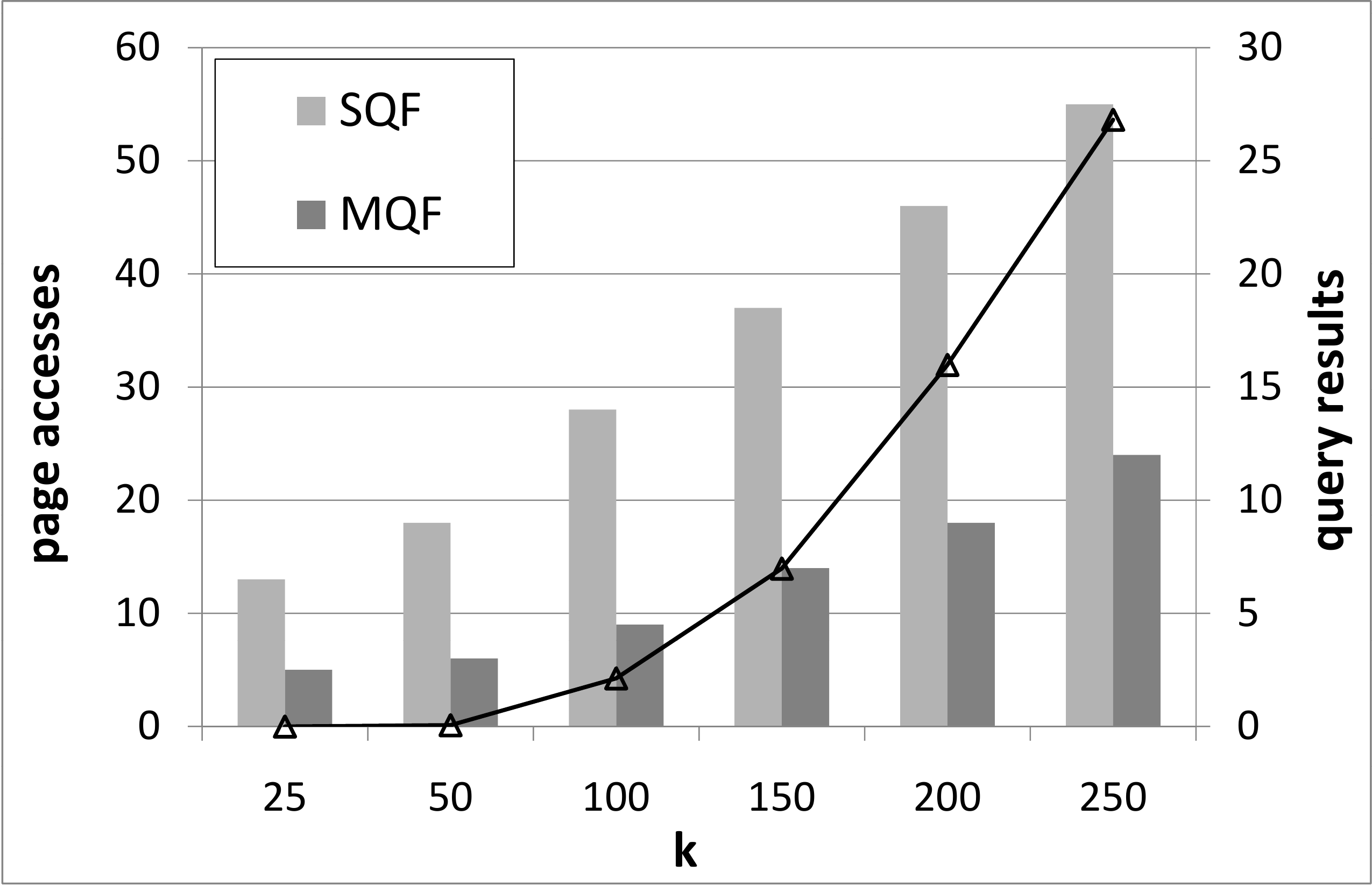}
    }
    \subfigure[I/O w.r.t. $|Q|$.]{
        \label{fig:knn-real-q}
        \includegraphics[width = 0.45\columnwidth]{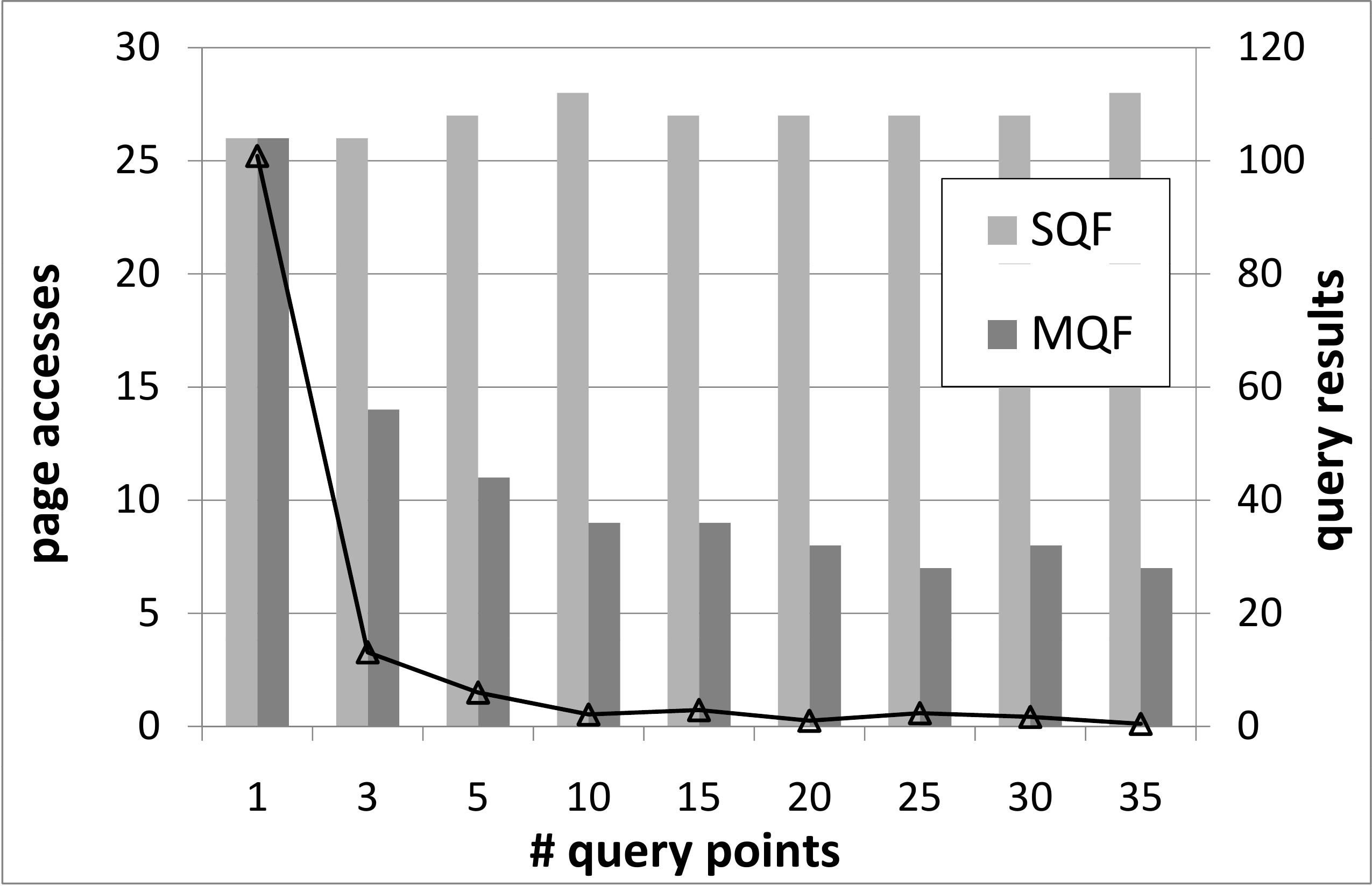}
    }
    \subfigure[I/O w.r.t. extent.]{
        \label{fig:knn-real-ext}
        \includegraphics[width =0.45\columnwidth]{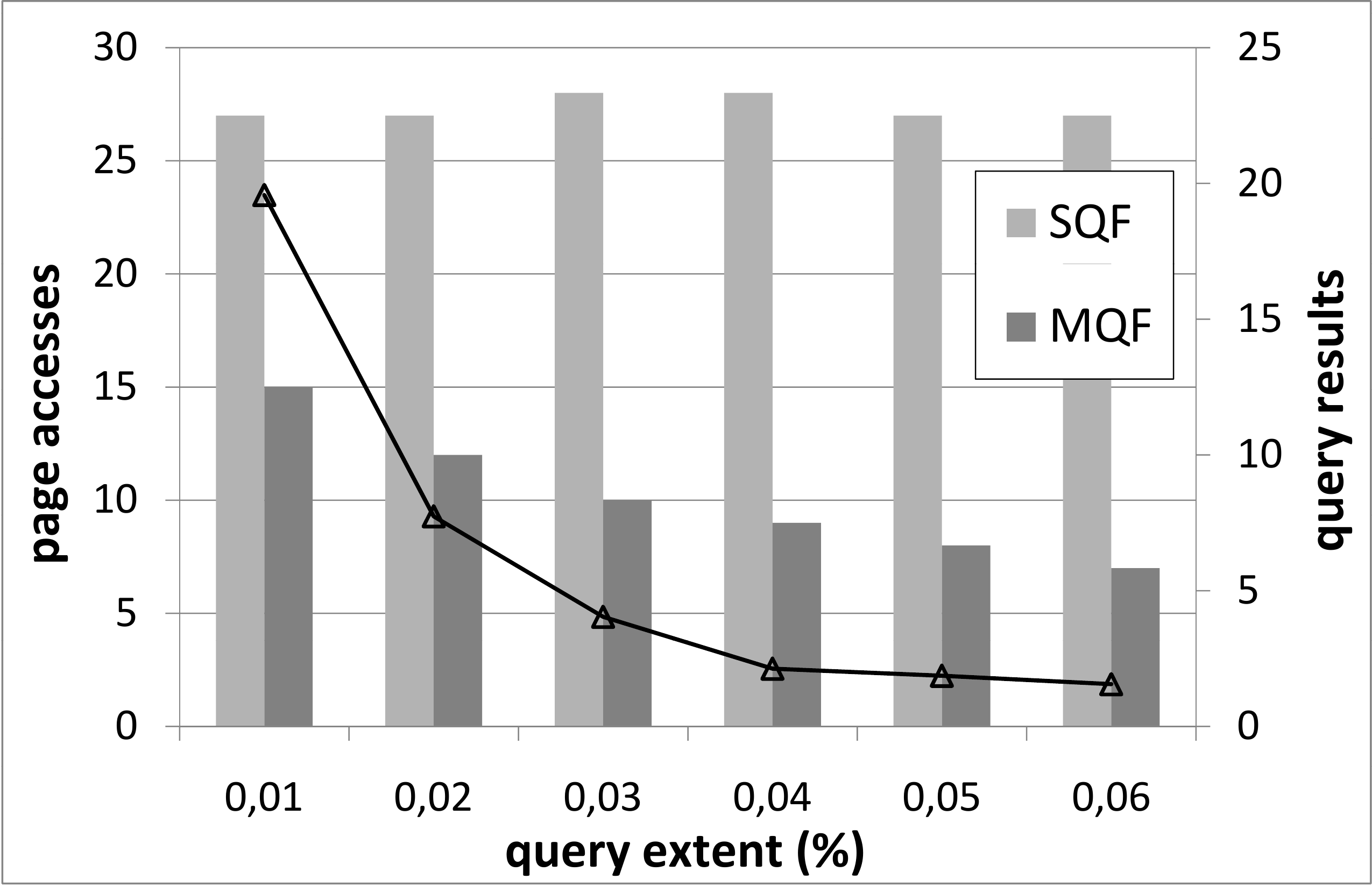} }
    \subfigure[I/O w.r.t. $|\DB|$.]{
        \label{fig:knn-real-scale}
        \includegraphics[width = 0.45\columnwidth]{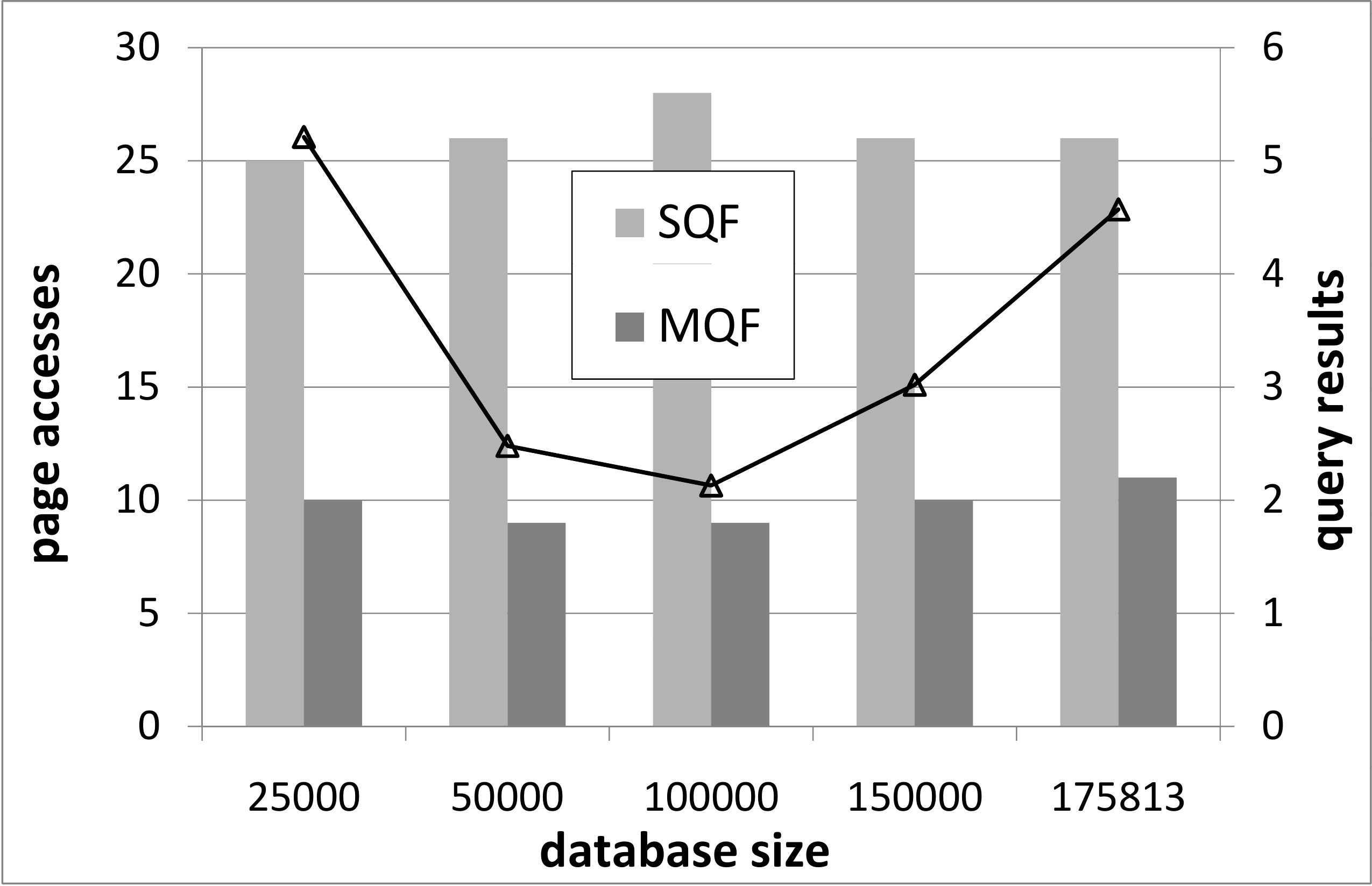}
    }
    \subfigure[CPU cost w.r.t. $|Q|$.]{
        \label{fig:knn-real-cpu}
        \includegraphics[width = 0.45\columnwidth]{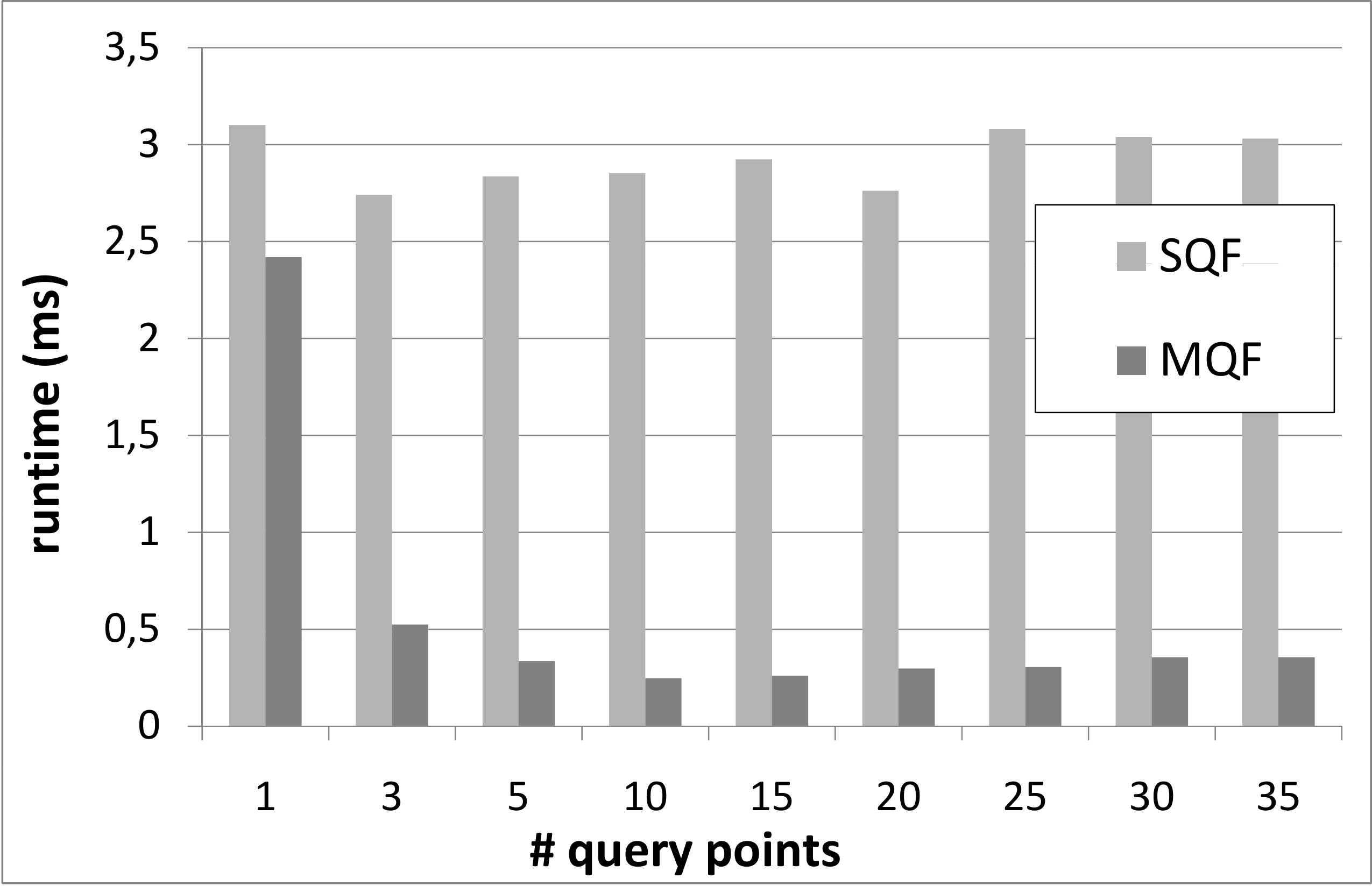}
    }
\vspace{-3mm}
    \caption{$Ik\mbox{-}NNQ$ algorithms on real dataset}
    \label{fig:knn-real}
\end{figure}

\end{document}